\newtheorem{obs}[theorem]{Observation}
\newtheorem{lem}[theorem]{Lemma}
\newtheorem{cor}[theorem]{Corollary}
\newtheorem{prop}[theorem]{Proposition}
\newtheorem{reducerule}{Reduction Rule}
\newtheorem{branchrule}[reducerule]{Branching Rule}
\newtheorem{reducebranchrule}[reducerule]{Reduction$\setminus$Branching Rule}
\newcommand{\negA}{\vspace{-0.05in}}
\newcommand{\negB}{\vspace{-0.1in}}
\newcommand{\mysubsection}[1]{\negB\subsection{#1}\negA}
\newcommand{\myparagraph}[1]{\par\smallskip\par\noindent{\bf{}#1:~}}
\newcommand{\etal}{{\em et al.}}
\newcommand{\comment}[1]{}
\newcommand{\alg}[1]{\mbox{\sf #1}}  
\begin{document}

\mainmatter

\title{A Multivariate Framework for\\Weighted FPT Algorithms}

\author{Hadas Shachnai \and Meirav Zehavi}

\institute{Department of Computer Science, Technion, Haifa 32000, Israel\\
\mails}

\maketitle


\begin{abstract}
We introduce a novel multivariate approach for solving weighted parameterized problems. 
In our model, given an instance of size $n$ of a minimization (maximization) problem, and
a parameter $W \geq 1$, we seek a solution of weight at most (or at least) $W$. We use our general framework
to obtain efficient algorithms for such fundamental graph problems as {\sc Vertex Cover}, {\sc 3-Hitting Set}, 
{\sc Edge Dominating Set} and {\sc Max Internal Out-Branching}. 
The best known algorithms for 
these problems admit running times of the form $c^W n^{O(1)}$, for some constant $c>1$. We improve these running
 times to $c^s n^{O(1)}$, where $s\leq W$ is the minimum size of a solution of weight at most (at least) $W$. 
If no such solution exists, $s=\min\{W,m\}$, where $m$ is the maximum size of a solution. Clearly, $s$ can be 
substantially smaller than $W\!$. In particular, the running times of our algorithms are (almost) the same as the best 
known $O^*$ running times for the unweighted variants. Thus, we solve the weighted versions~of

\smallskip
\renewcommand{\labelitemi}{$\bullet$}
\renewcommand{\labelitemii}{$-$}
\begin{itemize}
\item {\sc Vertex Cover} in $1.381^s n^{O(1)}$ time and $n^{O(1)}$ space.
\item {\sc 3-Hitting Set} in $2.168^s n^{O(1)}$ time and $n^{O(1)}$ space.
\item {\sc Edge Dominating Set} in $2.315^s n^{O(1)}$ time and $n^{O(1)}$ space.
\item {\sc Max Internal Out-Branching} in $6.855^s n^{O(1)}$ time and space.
\end{itemize}
\smallskip

We further show that {\sc Weighted Vertex Cover} and {\sc Weighted Edge Dominating Set} admit fast algorithms whose 
running times are of the form $c^t n^{O(1)}$, where $t \leq s$ is the minimum size of a solution.
\end{abstract}

\section{Introduction}

Many fundamental problems in graph theory are NP-hard already on {\em unweighted} graphs. This wide class includes, among others, {\sc Vertex Cover}, {\sc 3-Hitting Set}, {\sc Edge Dominating Set} and {\sc Max Internal Out-Branching}~\cite{subgraphisoneg}. Fast existing parameterized algorithms for these problems, which often exploit the structural properties of the underlying graph, cannot be naturally extended to handle weighted instances. Thus, solving efficiently weighted graph problems has remained among the outstanding open questions in parameterized complexity, as excellently phrased by Hajiaghayi~\cite{Dagstuhl09511}:
\negA
\begin{quote}
``Most fixed-parameter algorithms for parameterized problems are inherently about {\em unweighted} graphs. Of course, we could add integer weights to the problem, but this can lead to a huge increase in the parameter. Can we devise fixed-parameter algorithms for weighted graphs that have less severe dependence on weights? Is there a nice framework for designing fixed-parameter algorithms on weighted graphs?''
\end{quote}
\negA

We answer these questions affirmatively, by developing a multivariate framework for solving weighted parameterized problems. We use this framework to obtain efficient algorithms for the following fundamental graph problems.

\myparagraph{Weighted Vertex Cover (WVC)} Given a graph $G=(V,E)$, a weight function $w: V\rightarrow \mathbb{R}^{+}$, and a parameter $W\in\mathbb{R}^{+}$, find a vertex cover $U\subseteq V$ (i.e., every edge in $E$ has an endpoint in $U$) of weight at most $W$ (if one exists).

\myparagraph{Weighted 3-Hitting Set (W3HS)} Given a 3-uniform hypergraph $G=(V,E)$, a weight function $w: V\rightarrow \mathbb{R}^{+}$, and a parameter $W\in\mathbb{R}^{+}$, find a hitting set $U\subseteq V$ (i.e., every hyperedge in $E$ has an endpoint in $U$) of weight at most $W$ (if one exists).

\myparagraph{Weighted Edge Dominating Set (WEDS)} Given a graph $G=(V,E)$, a weight function $w: E\rightarrow \mathbb{R}^{+}$, and a parameter $W\in\mathbb{R}^{+}$, find an edge dominating set $U\subseteq E$ (i.e., every edge in $E$ touches an endpoint of an edge in $U$) of weight at most $W$ (if one exists).

\myparagraph{Weighted Max Internal Out-Branching (WIOB)} Given a directed graph $G=(V,E)$, a weight function $w: V\rightarrow \mathbb{R}^{+}$, and a parameter $W\in\mathbb{R}^{+}$, find an out-branching of $G$ (i.e., a spanning tree having exactly one vertex of in-degree 0) having internal vertices of total weight at least $W$ (if one exists).

\smallskip
Parameterized algorithms solve NP-hard problems by confining the combinatorial explosion to a parameter $k$. 
More precisely, a problem is {\em fixed-parameter tractable (FPT)} with respect to a parameter $k$ if it can be solved in time $O^*\!(f(k))$ for some function $f$, where $O^*$ hides factors polynomial in the input size $n$.

Existing FPT algorithms for the above problems have running~times~of~the form $O^*\!(c^W\!)$. Using our framework,
 we obtain {\em faster} algorithms, whose running times are 
of the form $O^*\!(c^s)$, where $s\!\leq\! W$ is the minimum size of a solution of weight at most (at least) $W$. If no such solution exists, $s\!=\!\min\{W\!,m\}$, where $m$ is the maximum size of a solution (for the unweighted 
version).\footnote{We note that obtaining {\em slow} running times of this form is simple; the challenge lies in having the bases
the same as in the previous best known $O^*\!(c^W\!)$ running times.} Clearly, $s$ can be significantly 
smaller than $W$. Moreover, almost all of the bases in our $O^*\!(c^s)$ running times {\em improve upon} those in the previous
 best known $O^*\!(c^W\!)$ running times for our problems. We complement these results by developing algorithms for {\sc Weighted Vertex Cover} and {\sc Weighted Edge Dominating Set} parameterized by $t\!\leq\! s$, the minimum size of a solution (for the~unweighted~version).

\subsection{Previous Work}\label{section:priorwork}

Our problems are well known in graph theory and combinatorial optimization. They were also extensively studied in the area of parameterized complexity. We mention below known FPT results for their unweighted and weighted variants, parameterized by $t$ and $W$, respectively.

\myparagraph{Vertex Cover} {\sc VC} is one of the first problems shown to be FPT. In the past two decades, it enjoyed a race towards obtaining the fastest FPT algorithm (see Appendix \ref{app:priorwork}). The best FPT algorithm,
 due to Chen~\etal~\cite{vc2010}, has running time $O^*(1.274^t)$. In a similar race, focusing on graphs of bounded degree 3 (see Appendix \ref{app:priorwork}), the current winner is an algorithm of
 Issac~\etal~\cite{3vc2013}, whose running time is $O^*(1.153^t)$. For {\sc WVC}, Niedermeier~\etal~\cite{wvc2003} proposed an algorithm of $O^*(1.396^W)$ time and polynomial space, and an algorithm of $O^*(1.379^W)$ time and $O^*(1.363^W)$ space. Subsequently,
 Fomin~\etal~\cite{wvc2006} presented an algorithm of $O^*(1.357^W)$ time and space. An alternative algorithm, using $O^*(1.381^W)$ time and $O^*(1.26^W)$ space, is given in \cite{wvc2009}.

\myparagraph{3-Hitting Set} Several papers study FPT algorithms for {\sc 3HS} (see \cite{hs2003,hs2010a,hs2010b,hs2007}). The best such algorithm, by Wahlstr$\ddot{\mathrm{o}}$m \cite{hs2007}, has running time $O^*(2.076^t)$. For {\sc W3HS}, Fernau \cite{whs2010} gave an algorithm which runs in time $O^*(2.247^W)$ and uses polynomial space.

\myparagraph{Edge Dominating Set} FPT algorithms for {\sc EDS} are given in \cite{weds2006,weds2010,eds2013}, and the papers \cite{3eds2010,3eds2013} present such algorithms for graphs of bounded degree 3. The best known algorithm for general graphs, 
due to Xiao~\etal~\cite{eds2013}, has running time $O^*(2.315^t)$, and for graphs of bounded degree 3, the current best algorithm, 
due to Xiao~\etal~\cite{3eds2013}, has running time $O^*(2.148^t)$. FPT algorithms for {\sc WEDS} are given in \cite{weds2006,wvc2009,weds2010}. The best algorithm, due to Binkele-Raible~\etal~\cite{weds2010}, runs in time $O^*(2.382^W)$ and uses polynomial space.

\myparagraph{Max Internal Out-Branching} Although FPT algorithms for minimization problems are more common than those for maximization problems (see \cite{newdfsbook}), {\sc IOB} was extensively studied in this area (see Appendix \ref{app:priorwork}). The previous best algorithms run in time $O^*(6.855^t)$ \cite{corrrepresentative}, and
 in randomized time $O^*(4^t)$ \cite{thesis11,ipec13}. The weighted version, {\sc WIOB}, was studied in the
area of approximation algorithms  (see \cite{IOBapprox1,IOBapprox2}); 
however, to the best of our knowledge, its parameterized complexity is studied here for the first time.
\smallskip

We note that well-known tools, such as the color coding technique \cite{colorcoding}, can be used to obtain elegant FPT algorithms for some classic weighted graph problems (see, e.g., \cite{representative,colorcodingeng,qnet}). 
Recently, Cygan~\etal~\cite{bisection} introduced a novel form of tree-decomposition to develop an FPT algorithm for minimum weighted graph bisection. Yet, for many other problems, including those studied in this paper, these tools are not known to be useful. We further elaborate in Section \ref{section:technique} on the limitations of known techniques in solving weighted graph problems.

\subsection{Our Results}

We introduce a novel multivariate approach for solving weighted parameterized problems. Our framework 
yields fast algorithms whose running times are of the form $O^*(c^s)$. We demonstrate its usefulness for the following 
problems.
 
\renewcommand{\labelitemi}{$\bullet$}
\renewcommand{\labelitemii}{$-$}
\begin{itemize}
\item {\sc WVC}: We give an algorithm that uses $O^*(1.381^s)$ time and polynomial space, or $O^*(1.363^s)$ time and space, 
complemented by an algorithm that uses $O^*(1.443^t)$ time and polynomial space. For graphs of bounded degree 3, this algorithm runs in time $O^*(1.415^t)$.\footnote{We also give (in Appendix \ref{section:wvc2}) an $O^*(1.347^W)$ time algorithm for {\sc WVC}.}

\item {\sc W3HS}: We develop an algorithm which uses $O^*(2.168^s)$ time and polynomial space, complemented by an algorithm which uses $O^*(1.381^{s-t}2.381^t)$ time and polynomial space, or $O^*(1.363^{s-t}2.363^t)$ time and $O^*(1.363^s)$ space.

\item {\sc WEDS}: We give an algorithm which uses $O^*(2.315^s)$ time and polynomial space, complemented by an $O^*(3^t)$ time and polynomial space algorithm.

\item {\sc WIOB}: We present an algorithm that has time and space complexities $O^*(6.855^s)$, 
or randomized time and space $O^*(4^sW)$.
\end{itemize}

\begin{table}[center]
\centering
\begin{tabular}{|l|c|c|c|c|l|}
\hline
Problem   & Unweighted                      & Parameter $W$ & Parameter $s$ & Parameter $(t\!+\!s)$ & Comments \\\hline\hline
{\sc WVC}  & $O^*\!(1.274^t)$ \cite{vc2010} & $O^*\!(1.396^W\!)$ \cite{wvc2003} & $\bf O^*\!(1.381^s)$ & $\bf O^*\!(1.443^t)$ & $O^*\!(1)\!$ space \\
					& $\cdot$  												& $O^*\!(1.357^W\!)$  \cite{wvc2006} & $\bf O^*\!(1.363^s)$ & $\bf \cdot$ & \\        
          & $O^*\!(1.153^t)$ \cite{3vc2013} & $\cdot$  & $\bf \cdot$  & $\bf O^*\!(1.415^t)$ & $\Delta=3$ \\\hline                    
{\sc W3HS}& $O^*\!(2.076^t)$ \cite{hs2007}  & $O^*\!(2.247^W\!)$ \cite{whs2010} & $\bf O^*\!(2.168^s)$ & $\bf O^*\!(1.363^{s\!-\!t}2.363^t)$  &  \\\hline
{\sc WEDS}& $O^*\!(2.315^t)$ \cite{eds2013} & $O^*\!(2.382^W\!)$ \cite{weds2010} & $\bf O^*\!(2.315^s)$ & $\bf O^*\!(3^t)$ & \\
          & $O^*\!(2.148^t)$ \cite{3eds2013}& $\cdot$ & $\bf \cdot$ & $\bf \cdot$ & $\Delta=3$ \\\hline                  
{\sc WIOB}& $O^*\!(6.855^t)$ \cite{corrrepresentative}& --- & $\bf O^*\!(6.855^s\!)$ & --- & \\\hline	
\end{tabular}\medskip
\caption{Known results for {\sc WVC}, {\sc W3HS}, {\sc WEDS} and {\sc WIOB}, parameterized by $t$, $W$ and $s$.}
\label{tab:knownresults}
\end{table}

Table \ref{tab:knownresults} summarizes the known results for our problems. Results given in this paper are shown in boldface. Entries marked with {{$\cdot$} follow by inference from the first entry in the same cell. As shown in Table \ref{tab:knownresults}, our results imply that even if $W$ is large, our problems can be solved efficiently, i.e., in times that are comparable to those required for solving their unweighted counterparts. Furthermore, most of the bases in our $O^*(c^s)$ running times are smaller than the bases in the corresponding known $O^*(c^W)$ running times. One may view such fast running times as somewhat surprising, since {\sc WVC}, a key player in deriving our results, seems inherently more difficult than {\sc VC}. Indeed, while {\sc VC} admits a kernel of size $2t$, the smallest known kernel for {\sc WVC} is of size $2W$ \cite{vc2001,wvcker2008}. In fact, as shown in \cite{wvcker2013}, {\sc WVC} 
does not admit a polynomial kernel when parameterized~by~$t$.

\myparagraph{Technical Contribution}
A critical feature of our framework is that 
it allows an algorithm to 
``fail" in certain executions, e.g.,
to return NIL even if there exists a solution of weight at most (at least) $W$ for
the given input (see Section \ref{section:technique}). We obtain improved running times for our algorithms by 
exploiting this feature, along with
an array of sophisticated tools 
 for tackling our problems. Specifically, in solving {\em minimization} problems, we show how the framework
 can be used to eliminate branching steps along the construction of bounded search trees, thus decreasing
 the overall running time. In solving {\sc WIOB}, we reduce a given problem instance to an instance of an auxiliary
 problem, called {\sc Weighted $k$-ITree}, for which we obtain an initial solution (see Appendix \ref{section:wiob}). 
This solution is then transformed into a solution for the original instance. Allowing ``failures'' for the algorithms simplifies 
the subroutine which solves {\sc Weighted $k$-ITree}, since we do not need to ensure that the initial solution is not ``too big''. 
Again, this results in improved~running~times.

Furthermore, our framework makes non-standard use of
the classic {\em bounded search trees} technique. Indeed, the analysis
of an algorithm based on the technique relies on bounds attained by tracking the underlying 
input parameter, and the corresponding {\em branching vectors} of the algorithms (see Section \ref{section:preliminaries}).
In deriving our results, we track the value of the 
{\em weight parameter} $W$, but analyze the branching vectors with respect to a {\em special size parameter} $k$.
Our algorithms may base their output on the value of $W$ only (i.e., ignore $k$),\footnote{See, e.g., Rule \ref{rule2:hsdeg2} 
in the algorithm for W3HS in Appendix \ref{section:whs1}.} or may 
decrease $k$ by {\em less} than its actual decrease in the instance.\footnote{See, e.g., Rules \ref{rule:deg1} 
and \ref{rule:vctriangle} in Section \ref{section:wvc1}.} 

\myparagraph{Organization}
In Section \ref{section:preliminaries}, we give some definitions and notation, including an overview of the bounded search 
trees technique. Section \ref{section:technique} presents our general multivariate framework. In Section \ref{section:wvc1},
 we demonstrate the usefulness of our framework by developing an $O^*(1.381^s)$ time and polynomial space algorithm for
{\sc WVC}. Due to lack of space, the rest of the applications are relegated to the Appendix (also given in \cite{corrweighted}). 
In particular, Appendix \ref{app:wvc} contains additional algorithms and a hardness result related to {\sc WVC}, 
and Appendices \ref{section:w3hs}, \ref{sec:WEDS} and \ref{section:wiob} contain 
algorithms for {\sc W3HS}, {\sc WEDS} and {\sc WIOB}, respectively.

\section{Preliminaries}\label{section:preliminaries}

\myparagraph{Definitions and Notation} Given a (hyper)graph $G=(V,E)$ and a vertex $v\in V$, let $N(v)$ denote the set
of neighbors of $v$; $E(v)$ denotes the set of edges adjacent to $v$. The {\em degree} of $v$ is $|E(v)|$ (which, for hypergraphs, 
may not be equal to $|N(v)|$). Recall that a {\em leaf} is a degree-1 vertex. Given a subgraph $H$ of $G$, let $V(H)$ and $E(H)$ denote its vertex set and edge set, respectively. For a subset $U\subseteq V$, let $N(U)=\bigcup_{v\in U}N(v)$, and $E(U)=\bigcup_{v\in U}E(v)$. 
Also, we denote by $G[U]$ the subgraph of $G$ induced by $U$ (if $G$ is a hypergraph, $v,u\in U$ and $r\in V\setminus U$ such that $\{v,u,r\}\in E$, then $\{v,u\}\in E(G[U])$). Given a set $S$ and a weight function $w: S\rightarrow \mathbb{R}$, the total weight of $S$ is given by $w(S) = \sum_{s\in S} w(s)$. Finally, we say that a (hyper)edge $e\in E$ containing exactly $d$ vertices is a {\em $d$-edge}.

In deriving our results, we assume that $W$ and element weights are at least $1$ (indeed, this ensures fixed-parameter 
tractability with respect to $W$ (see, e.g.,~\cite{wvc2003})). 

\myparagraph{Bounded Search Trees} The {\em bounded search trees} technique is fundamental in the design of recursive FPT algorithms (see, e.g., \cite{newdfsbook}).
 Informally, in applying this technique, one defines a list of rules. Each rule is of the form \alg{Rule X. [condition] action}, 
where \alg{X} is the number of the rule in the list. At each recursive call (i.e., a node in the search tree), the algorithm 
performs the action of the first rule whose condition is satisfied. If, by performing an action, the algorithm recursively
calls itself at least twice, the rule is a {\em branching rule}; otherwise, it is a {\em reduction rule}. 
We only consider polynomial time actions that increase neither the parameter nor the size of the instance, 
and decrease at least one of them. We give in Appendix \ref{app:boundedsearch} detailed examples, showing how to
solve {\sc VC} and {\sc WVC} using the technique. 

The running time of the algorithm which uses bounded search trees can be analyzed as follows. 
Suppose that the algorithm executes a branching rule
which has $\ell$ branching options (each leading to a recursive call with the corresponding parameter value),
such that in the $i^\mathrm{th}$ branch option, the current value of the parameter decreases by $b_i$. Then, $(b_1,b_2,\ldots,b_{\ell})$ is called the {\em branching vector} of this rule. We say that $\alpha$ is the {\em root} of $(b_1,b_2,\ldots,b_{\ell})$ if it is the (unique) positive real root of $x^{b^*} = x^{b^*-b_1} + x^{b^*-b_2} + \ldots + x^{b^*-b_{\ell}}$, where $b^* = \max\{b_1,b_2,\ldots,b_{\ell}\}$. 
If $r\!>\!0$ is the initial value of the parameter, and the algorithm (a) returns a result when (or before) the parameter is negative, and (b) only executes branching rules whose roots are bounded by 
a constant $c >0$, then its running time is bounded by $O^*(c^r)$.
\comment{
In this paper, we combine our multivariate framework (see Section \ref{section:technique}) with the bounded search
 tree technique, thus introducing a novel way of applying this well-known technique. We track the value of the 
{\em weight parameter} $W$, but analyze the branching vectors with respect to a {\em special size parameter} $k$:
 an algorithm can ignore $k$ in some cases where it returns an answer (i.e., the answer may depend only on $W$; see, e.g.,
 Appendix \ref{section:wvcbip}, Rule \ref{rule2:hsdeg2} in Appendix \ref{section:whs1}, or Appendix \ref{section:edsvc}), 
and it can decrease $k$ by {\em less} than its actual decrease in the instance (see, e.g., Rules \ref{rule:deg1} 
and \ref{rule:vctriangle} in Section \ref{section:wvc1}, Rule \ref{rule:notcontrule} in the proof of 
Theorem \ref{theorem:wvc2}, or Appendix \ref{app:boundedsearch}).
}
\section{A General Multivariate Framework}\label{section:technique}

In our framework, a problem parameterized by the solution weight is solved by adding a special size parameter. Formally,
given a problem instance, and a weight parameter $W >1$, we add an integer parameter $0 < k\leq W$. We then seek a solution of weight at most (at least) $W$.
 The crux of the framework is
 in allowing our algorithms to ``fail'' in certain cases. This enables to substantially improve running times, while maintaining the correctness of the returned solutions. Specifically, our algorithms satisfy the following properties. Given $W$~and~$k$,

\renewcommand{\labelitemi}{$\bullet$}
\renewcommand{\labelitemii}{$-$}
\begin{enumerate}
\item [$(i)$] If there exists a solution of weight at most (at least) $W$, and size at most $k$, return~a~solution of weight at most (at least) $W$. The size of the returned solution may be larger than $k$.

\item[$(ii)$] Otherwise, return NIL, or a solution of weight at most (at least) $W$.
\end{enumerate}
Clearly, the correctness of the solution can be maintained by iterating the above step, until we reach a value of $k$ for which $(i)$ is satisfied and the algorithm terminates with ``success''. 
\comment{
The decrease in running time is achieved by combining our framework with an array of sophisticated tools used
 for handling our problems. Specifically, in solving {\em minimization} problems, we show how the framework
 can be used to eliminate branching steps along the construction of bounded search trees, thus decreasing
 the overall running time. In solving {\sc WIOB}, we reduce a given problem instance to an instance of an auxiliary
 problem, called {\sc Weighted $k$-ITree}, for which we obtain an initial solution (see Appendix \ref{section:wiob}). 
This solution is then transformed into a solution for the original instance. Allowing ``failures'' for the algorithms simplifies 
the subroutine which solves {\sc Weighted $k$-ITree}, since we do not need to ensure that the initial solution is not ``too big''. 
Again, this results in improved~running~times.
}
Using our framework, we  solve the following problems.

\smallskip

\myparagraph{$k$-WVC}Given an instance of {\sc WVC}, along with a parameter $k\!\in\!\mathbb{N}$, satisfy the following.~If there is a vertex cover of weight at most $W$ and size at most $k$, return a vertex cover of weight at most $W$; otherwise, return NIL, or a vertex cover of weight at most $W$.

\myparagraph{$k$-W3HS}Given an instance of {\sc W3HS}, along with a parameter $k\!\in\!\mathbb{N}$, satisfy the following.~If there is a hitting set of weight at most $W$ and size at most $k$, return a hitting set of weight at most $W$; otherwise, return NIL or a hitting set of weight at most $W$.

\myparagraph{$k$-WEDS}Given an instance of {\sc WEDS}, along with a parameter $k\!\in\!\mathbb{N}$, satisfy the following.~If there is an edge dominating set of weight at most $W$ and size at most $k$, return an edge dominating set of weight at most $W$; otherwise, return NIL or an edge dominating set of weight at most $W$.

\myparagraph{$k$-WIOB}Given an instance of {\sc WIOB}, along with a parameter $k\!<\!W$, satisfy the following.~If there is an out-branching having a set of internal vertices of total weight at least $W$ and cardinality at most $k$, return an out-branching with internal vertices of total weight at least $W\!;$ otherwise, return NIL or an out-branching with internal vertices of total weight~at~least~$W\!.$\footnote{If $k\geq W$, assume that {\sc $k$-WIOB} is simply {\sc WIOB}.}

\medskip

We develop FPT algorithms for the above variants, which are then used to solve the original problems. Initially, $k=1$. We increase this value iteratively, until either $k=\min\{W,m\}$, or a solution of weight at most (at least) $W$ is found, where $m$ is the maximum size of a solution. Clearly, for {\sc WVC} and {\sc W3HS}, $m=|V|$; for {\sc WEDS}, $m=|E|$; and for {\sc WIOB}, $m$ is the maximum number of internal vertices of a spanning tree of $G$. For {\sc WIOB}, to ensure that $s\! \leq\!\min\{W\!,m\}$, we proceed as follows. Initially, we solve {\sc $1$-WIOB}. While the algorithm returns NIL, before incrementing the value of $k$, we solve {\sc IOB}, in which we seek an out-branching having at least $(k+1)$ internal vertices (using \cite{corrrepresentative,thesis11}). Our algorithm solves {\sc $(k+1)$-WIOB} only if the returned answer $\neq$ NIL and $k+1\!\leq\! W$.

We note that some weighted variants of parameterized problems were studied in the following {\em restricted} form. Given a problem instance, along with the parameters $W \geq 1$ and $k\in\mathbb{N}$, find a solution of weight at most (at least) $W$ and size at most $k$; if such a solution does not exist, return NIL (see, e.g., \cite{wfvs2008,wcvc2012}). Clearly, an algorithm for this variant can be used to obtain running time of the form $O^*(c^s)$ for the original weighted instance. However, the efficiency of our algorithms crucially relies on the flexible use of the parameter $k$. In particular (as shown in Appendix \ref{section:wvcbip}), for some of the problems, the restricted form becomes NP-hard already on easy classes of graphs, as opposed to the above problems, which remain polynomial time solvable on such graphs.
\section{An $O^*(1.381^s)$ Time Algorithm for {\sc WVC}}\label{section:wvc1}

In this section, we present our first algorithm, \alg{WVC-Alg}. This algorithm employs the bounded search tree technique, described in Section \ref{section:preliminaries}. It builds upon rules used by the $O^*(1.396^W)$ time and polynomial space algorithm for {\sc WVC} proposed in \cite{wvc2003}. However, we also introduce new rules, including, among others, 
reduction rules that manipulate the weights of the vertices in the input graph. This allows us to easily and efficiently eliminate leaves and certain triangles (see Rules \ref{rule:deg1} and \ref{rule:vctriangle}). Thus, we obtain the following.

\begin{theorem}
\alg{WVC-Alg} solves {\sc $k$-WVC} in $O^*(1.381^k)$ time and polynomial space.
\end{theorem}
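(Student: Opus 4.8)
The plan is to establish the two guarantees of {\sc $k$-WVC} (soundness and completeness) and the $O^*(1.381^k)$ time bound separately, treating $W$ as the quantity that governs correctness and $k$ as the quantity that governs the size of the search tree. First I would fix the two invariants maintained along every root-to-node path of the search tree produced by \alg{WVC-Alg}: (a) \emph{soundness} --- whenever a node returns a non-NIL set, it is a vertex cover of the original graph of weight at most $W$; and (b) \emph{completeness} --- if the current instance admits a vertex cover of weight at most $W$ and size at most $k$, then the node does not return NIL. Correctness then follows by induction on the recursion depth, provided each reduction and branching rule preserves both invariants, so I would verify this rule by rule.

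For the reduction rules, the routine cases (removing isolated vertices and the degree-one and degree-two reductions inherited from \cite{wvc2003}) are handled by the standard exchange argument: a minimum-weight cover may be assumed to contain the forced vertices, so the reduced instance has a weight-$\leq W$ cover of size at most $k$ if and only if the original does. The nonstandard cases are the weight-manipulating rules that eliminate leaves and certain triangles, namely Rules \ref{rule:deg1} and \ref{rule:vctriangle}. Here I would argue that re-assigning vertex weights, up to a bookkeeping offset charged to $W$, preserves the \emph{minimum} weight of a cover (so that both soundness and the existence of a weight-$\leq W$ cover carry over), while the number of vertices actually committed to the cover may exceed the amount by which $k$ is decreased. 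Since these rules decrease $k$ by \emph{less} than the true number of committed vertices, the budget passed to the child is never too small: an original cover of size $\leq k$ maps to a cover of the child instance of size at most the new value of $k$, so completeness is preserved with slack. This is precisely where the ``failure'' freedom of the framework is used --- we never require the returned cover to have size $\leq k$, only weight $\leq W$.

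For the running time I would invoke the search-tree lemma of Section \ref{section:preliminaries} with $k$ as the tracked parameter. The base case is $k\leq 0$: if edges remain, no cover of size $\leq k$ exists, so returning NIL is consistent with condition (ii), and otherwise the accumulated cover (of weight $\leq W$ by invariant (a)) is returned; either way a result is produced before $k$ goes negative. For each branching rule I would compute its branching vector \emph{measured in the decrease of $k$} and bound its root by $1.381$. The key observation is that, once the leaf- and triangle-elimination rules have been applied exhaustively, every branching step acts on a vertex whose local structure forces a favorable branching vector --- identical, in the $k$-measure, to the vectors of a good unweighted {\sc VC} analysis --- so the weights no longer influence the vector, only the feasibility tests. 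Taking the maximum root over all branching rules yields the base $1.381$, and the lemma gives the $O^*(1.381^k)$ bound. Polynomial space follows since the algorithm stores only the current root-to-node path, each node carrying a polynomial-size modified instance.

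The main obstacle is the correctness of the weight-manipulation rules in combination with the deliberate under-counting of $k$. I expect the delicate point to be showing that the weight reassignment used to fold a leaf or collapse a triangle \emph{exactly} preserves the minimum cover weight (rather than merely bounding it on one side), because soundness demands that the reconstructed cover be feasible in the \emph{original} weighted instance, not only in the reduced one. A secondary difficulty is the numeric verification that \emph{every} branching rule has $k$-branching vector of root at most $1.381$; this reduces to a case analysis over the possible degrees and neighborhood configurations at a branching vertex, and it is exactly here that the reduction rules must be shown to have eliminated precisely those configurations (the low-degree and triangle cases) whose naive branching vectors would otherwise exceed $1.381$.
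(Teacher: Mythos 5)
Your proposal follows essentially the same route as the paper: a rule-by-rule verification that each reduction/branching rule preserves soundness (weight $\leq W$) and completeness (never under-budgeting $k$, exploiting the freedom to decrease $k$ by less than the number of committed vertices), combined with a branching-vector analysis measured in $k$ whose worst root is bounded by $1.381$. The substance of the paper's proof is precisely the deferred case analysis you identify at the end --- checking that Rules \ref{rule:deg2component}--\ref{rule:vctriangle} eliminate exactly the configurations (small components, leaves, high-degree vertices, triangles with degree-2 vertices) that would otherwise force a bad vector, and that each remaining branch is followed by a reduction step decreasing $k$ by at least 1 --- so your plan is correct but leaves that combinatorial work to be carried out.
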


By the discussion in Section \ref{section:technique}, this implies the desired result:

\begin{cor}
{\sc WVC} can be solved in $O^*(1.381^s)$ time and polynomial space.
\end{cor}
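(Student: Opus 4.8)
The plan is to reduce the corollary to the theorem by driving \alg{WVC-Alg} with the iterative scheme sketched in Section \ref{section:technique}. Concretely, I would solve {\sc $k$-WVC} for $k = 1, 2, 3, \ldots$ in succession, halting as soon as \alg{WVC-Alg} returns a non-NIL vertex cover, or once $k$ reaches $\min\{W, |V|\}$, at which point I report that no vertex cover of weight at most $W$ exists. Since each vertex has weight at least $1$, any vertex cover of weight at most $W$ has at most $W$ vertices; hence the minimum size $s$ of such a cover satisfies $s \leq W$, and every value of $k$ used in the iteration respects the framework's constraint $0 < k \leq W$.

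For correctness, I would distinguish the two cases defining $s$. If some vertex cover of weight at most $W$ exists, let $s$ be the smallest size of such a cover. For every $k < s$ there is no vertex cover of weight at most $W$ and size at most $k$, so by property $(ii)$ the algorithm returns either NIL, in which case the iteration continues, or a valid cover of weight at most $W$, in which case we stop correctly (and early). When $k = s$ there is a vertex cover of weight at most $W$ and size at most $s$, so property $(i)$ forces \alg{WVC-Alg} to return a cover of weight at most $W$; thus the iteration terminates successfully by the time $k = s$. If no vertex cover of weight at most $W$ exists, then $s = \min\{W, |V|\}$ and property $(i)$ never applies, so every call returns NIL; the iteration runs until $k = \min\{W, |V|\} = s$ and correctly reports infeasibility.

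For the running time, the iteration performs at most $s$ calls to \alg{WVC-Alg}, the $k$-th costing $O^*(1.381^k)$ by the theorem. Summing the geometric series gives $\sum_{k=1}^{s} O^*(1.381^k) = O^*(1.381^s)$, since the base exceeds $1$ and the $O^*$ notation absorbs the polynomial overhead of each call. The space bound is immediate: the calls are executed sequentially and each uses polynomial space, so the same polynomial-space working memory is reused across iterations.

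The argument is essentially bookkeeping once the theorem is in hand; the only point needing care is aligning the termination of the iteration with the definition of $s$. The subtlety is that property $(i)$ guarantees success exactly when $k \geq s$, while property $(ii)$ merely permits early success for $k < s$; I must verify that both behaviours yield a correct output and stay within the claimed time bound, and that the inequality $s \leq W$ keeps $k$ inside the admissible range throughout. This is where I expect the main (albeit modest) obstacle to lie.
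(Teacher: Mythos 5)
Your proposal is correct and follows exactly the route the paper intends: the corollary is obtained from the theorem by the iterative scheme of Section \ref{section:technique} (run \alg{WVC-Alg} for $k=1,2,\ldots$ until a non-NIL answer or $k=\min\{W,|V|\}$), with correctness from properties $(i)$ and $(ii)$ and the running time bounded by the geometric sum $\sum_{k\leq s}O^*(1.381^k)=O^*(1.381^s)$. The paper leaves these details implicit by citing the discussion in Section \ref{section:technique}; you have simply spelled them out.
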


Next, we present each rule within a call \alg{WVC-Alg}($G=(V,E),w: V\!\rightarrow\! \mathbb{R}^{\geq 0},W,k$).
Initially, \alg{WVC-Alg} is called with a weight function $w$,
 whose image lies in $\mathbb{R}^{\geq 1}$. 
After presenting a rule, we argue its correctness. For each branching rule, we also give the root of the corresponding
branching vector (with respect to $k$). Since the largest root we shall get is bounded by 1.381, and the algorithm stops if $k<0$, we have the desired running time. 

\begin{reducerule}\label{rule:vckneg}
{\normalfont [$\min\{W,k\}<0$]
Return NIL.}
\end{reducerule}

{\noindent If $\min\{W,k\}<0$, there is no vertex cover of weight at most $W$ and size at~most~$k$.}

\begin{reducerule}
{\normalfont [$E=\emptyset$]
Return $\emptyset$.}
\end{reducerule}

{\noindent Since $E=\emptyset$, an empty set is a vertex cover.}

\begin{reducerule}\label{rule:deg2component}
{\normalfont [There is a connected component $H$ with at most one vertex of degree at least 3, where $|E(H)|\geq 1$]
Use dynamic programming to compute a minimum-weight vertex cover $U$ of $H$ (see \cite{wvc2003}). Return \alg{WVC-Alg}$(G[V\setminus V(H)], w, W\!-w(U), k-1)\ \cup\ U$.\footnote{We assume that adding elements to NIL results in NIL.}}
\end{reducerule}

{\noindent Since $H$ is a connected component, any minimum-weight vertex cover of $G$ consists of a vertex cover of $H$ of weight $w(U)$, and a minimum-weight vertex cover of $G[V\setminus V(H)]$. Furthermore, any vertex cover of $G$ contains a vertex cover of $H$ of size at least 1. Therefore, we return a solution as required: if there is a solution of size at most $k$ and weight at most $W$, we return a solution of weight at most $W$, and if there is no solution of weight at most $W$, we~return~NIL.}

\begin{reducerule}\label{rule:concomponent}
{\normalfont [There is a connected component $H$ such that $|V(H)|\leq 100$ and $|E(H)|\geq 1$]
Use brute-force to compute a minimum-weight vertex cover $U$ of $H$. Return \alg{WVC-Alg}$(G[V\setminus V(H)], w, W\!-w(U), k-1)\ \cup\ U$.}
\end{reducerule}

{\noindent 
The correctness of the rule follows from the same arguments as given for Rule \ref{rule:deg2component}. 
The next rule, among other rules, clarifies the necessity of Rule \ref{rule:concomponent}, and, in particular, the choice of the value 100.\footnote{Choosing a smaller value is possible, but it
is unnecessary and complicates~the~proof.}}

\begin{branchrule}\label{rule:deg4}
{\normalfont [There is a vertex of degree at least 4, or all vertices have degree 3 or 0]
Let $v$ be a vertex of maximum degree.
\begin{enumerate}
\item If the result of \alg{WVC-Alg}($G[V\setminus \{v\}],w,W\!-w(v),k-1$) is not NIL: Return it along with~$v$.
\item Else: Return \alg{WVC-Alg}($G[V\setminus N(v)],w,W\!-w(N(v)),k-\max\{|N(v)|,4\}$), along with~$N(v)$.
\end{enumerate}}
\end{branchrule}

{\noindent This branching is exhaustive. If the degree of $v$ is at least 4, the rule is clearly correct; else, the degree of any vertex in $G$ is 3 or 0. 
Then, we need to argue that decreasing $k$ by 4 in the second branch, while $|N(v)|=3$, leads to a correct solution. Let $C$ be the connected component that contains $v$. Since the previous rule did not apply, $|V(C)|>100$. As we continue making recursive calls, as long as $G$ contains edges from $E(C)$, it also
 contains at least one vertex of degree 1 or 2. For example, after removing $v$, it contains a neighbor of $v$ 
whose degree is 1, and after removing $N(v)$, it contains a neighbor of a vertex in $N(v)$ whose degree is 1 or 2.
 Now, before we remove all the edges in $E(C)$, we encounter a recursive call where $G$ contains a connected component of 
size at least 5, for which  Rule \ref{rule:deg2component} or \ref{rule:concomponent} is applicable.\footnote{The removal of $N(v)\cup\{v\}$ from $C$, which has maximum degree 3 and contains more than 100 vertices, 
generates at most 6 connected components; thus, it results in at least one component of at least $\lceil(101-4)/6\rceil=17$ vertices.}
Therefore, if we apply Rule \ref{rule:deg4} again (and even if we do not, but there is a solution as required), we first apply 
Rule \ref{rule:deg2component} or \ref{rule:concomponent} which decrease $k$ by 1, although the
 actual decrease is at least by 2. Indeed, 2 is the minimum size of any vertex cover of a connected component on at least 5 
vertices and of maximum degree 3. Thus, it is possible to decrease $k$ by 4 in the Rule  \ref{rule:deg4}. 
By the definition of this rule, its branching vector is at least as good as $(1,4)$, 
whose root is smaller than 1.381.}

\begin{reducerule}\label{rule:deg1}
{\normalfont [There are $v,u\in V$ such that $N(v)=\{u\}$]
\begin{enumerate}
\item If $w(v)\geq w(u)$: Return \alg{WVC-Alg}$(G[V\setminus \{v,u\}],w,W\!-w(u),k-1)\cup\{u\}$.
\smallskip
\item Else if there is $r\!\in\! V$ such that $N(u)\!=\!\{v,r\}$:
	\begin{enumerate}
	\smallskip
	\item Let $w'$ be $w$, except for $w'(r)=w(r)\!-\!(w(u)\!-\!w(v))$.
	\smallskip
	\item If $w'(r)\leq 0$: Return \alg{WVC-Alg}($G[V\!\setminus\! \{v,u,r\}],w',W\!-\!w(v)\!-\!w(r),k\!-\!1$), along with $\{v,r\}$.
	\smallskip
	\item Else: Return \alg{WVC-Alg}($G[V\!\setminus\! \{v,u\}],w',W\!-\!w(u),k\!-\!1$), along with $v$ if $r$ is in the returned result, and else along with $u$.
	\end{enumerate}
\smallskip
\item Else: Let $w'$ be $w$, except for $w'(u)=w(u)-w(v)$. Return \alg{WVC-Alg}($G[V\!\setminus \{v\}],w',W\!-w(v),k$), along with $v$ iff $u$ is not in the returned result.
\end{enumerate}}
\end{reducerule}

{\noindent This rule, illustrated below, omits leaves (i.e., if there is a leaf, $v$, it is omitted from $G$ in the recursive calls performed in this rule). Clearly, to obtain a solution, we should choose either $u$ or $N(u)$. If $w(v)\geq w(u)$ (Case 1), we simply choose $u$ (it is better to cover the only edge that touches $v$, $\{v,u\}$,~by~$u$).

Now, suppose that there is $r\in V$ such that $N(u)=\{v,r\}$. If $w'(r)\leq 0$ (Case 2b), it is better, in terms of weight, to choose $\{v,r\}$; yet, in terms of size, it might be better to choose $u$. In any case, $k$ should be decreased by at least 1. Our flexible use of the parameter $k$ allows us to decrease its value by 1, which is less than its actual decrease ($2=|\{v,u\}|$) in the instance.\footnote{In this manner, we may compute a vertex cover whose size is larger than $k$ (since we decrease $k$ 
only by 1), but we may not compute a vertex cover of weight larger than $W$. We note that if we decrease $k$ by 2, we
may overlook solutions: if there is a solution of size at most $k$ and weight at most $W$ that contains $u$, there is a solution of weight at most $W$ that contains $\{v,r\}$, but there {\em might not} be a solution of size at most $k$ and weight at most $W$ that contains $\{v,r\}$.} Next, suppose that $w'(r) > 0$ (Case 2c). In ($G[V\setminus \{v,u\}],w',W\!-w(u),k-1$), choosing $r$ reduces $W\!-w(u)$ to $W\!-w(u)-w'(r)=W\!-w(v)-w(r)$ and $k-1$ to $k-2$, which has the same effect as choosing $N(u)$ in the original instance. On the other hand, not choosing $r$ has the same effect as choosing $u$ in the original instance.

Finally, suppose that such $r$ does not exist (Case 3). In ($G[V\setminus \{v\}],w',W\!-w(v),k$), choosing $u$ reduces $W\!-w(v)$ to $W\!-w(v)-w'(u)=W\!-w(u)$ and $k$ to $k-1$, which has the same effect as choosing $u$ in the original instance. On the other hand, not choosing $u$ has {\em almost} the same effect as choosing $v$ in the original instance: the difference lies in the fact that we do not decrease $k$ by 1. However, our flexible use of the parameter $k$ allows us to decrease its value by less than necessary (as in Case 2b).\footnote{\alg{WVC-Alg} is not called with $k\!-\!1$, as then choosing $u$ overall decreases $k$ by 2, which is more than required (thus we may
 overlook solutions, by reaching Rule \ref{rule:vckneg} too soon).}}

\begin{figure}[!h]\centering
\frame{\includegraphics[scale=0.8]{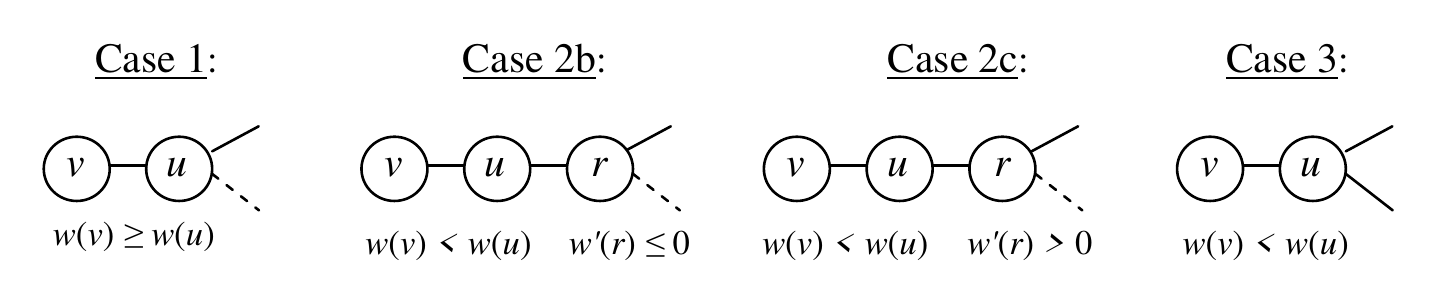}}
\caption{Rule \ref{rule:deg1} of \alg{WVC-Alg}.}
\label{fig:omitLeaves}
\end{figure}

\begin{reducerule}\label{rule:dominate}
{\normalfont [There are $v,u$ such that $v\in N(u)$, $N(u)\setminus \{v\} \subseteq N(v)\setminus \{u\}$ and $w(v)\leq w(u)$]
Return \alg{WVC-Alg}$(G[V\setminus \{v\}],w,W\!-w(v),k-1)\cup\{v\}$.}
\end{reducerule}

{\noindent The vertices $v$ and $u$ are neighbors; thus, we should choose at least one of them. If we do not choose $v$, we need to choose $N(v)$, in which case we can replace $u$ by $v$ and obtain a vertex cover (since $N(u)\!\setminus\! \{v\} \!\subseteq\! N(v)\!\setminus\! \{u\}$) of the same or better weight (since $w(v)\!\leq\! w(u)$). Thus, in this rule, we choose $v$. Note that, if there is a triangle with two degree-2 vertices, or exactly one degree-2 vertex that is heavier than one of the other vertices in the triangle, this rule deletes a vertex of the triangle. Thus, in the following rules, such triangles do not exist.}

\begin{reducerule}\label{rule:vctriangle}
{\normalfont  [There are $v,u,r$ such that $\{v,u\} = N(r)$, $\{v,r\} \subseteq N(u)$] Let $w'$ be $w$, except $w'(v)\!=\!w(v)\!-\!w(r)$ and $w'(u)\!=\!w(u)\!-\!w(r)$. Return \alg{WVC-Alg}($G[V\setminus \{r\}],w',W\!-\!2w(r),k$), along with $r$ iff not both $v$ and $u$ are in the returned~result.}
\end{reducerule}

{\noindent The rule is illustrated below. First, note that $w'(v),w'(u)\!>\! 0$ (otherwise Rule \ref{rule:dominate} applies), and thus calling \alg{WVC-Alg} with $w'$ is possible.

We need to choose {\em exactly} two vertices from $\{v,u,r\}$: choosing less than two vertices does not cover all three edges of the triangle, and choosing $r$, if $v$ and $u$ are already chosen, is unnecessary. In \alg{WVC-Alg}($G[V\setminus \{r\}],w',W-2w(r),k$), choosing only $v$ from $\{v,u\}$ reduces $W-2w(r)$ to $W-2w(r)-w'(v)=W-w(r)-w(v)$ and $k$ to $k-1$, which has {\em almost} the same effect as choosing $r$ and $v$ in the original instance, where the only difference lies in the fact that $k$ is reduced by 1 (rather than 2). However, our flexible use of the parameter $k$ allows us to decrease it by less than necessary. Symmetrically, choosing only $u$ from $\{v,u\}$ has almost the same effect as choosing $r$ and $u$ in the original instance, and again, our flexible use of the parameter $k$ allows us to decrease it by less than necessary. Finally, choosing both $v$ and $u$ reduces $W-2w(r)$ to $W-2w(r)-w'(v)-w'(u)=W-w(v)-w(u)$ and $k$ to $k-2$, which has the same effect as choosing $v$ and $u$ in the original instance. Thus, we have shown that each option of choosing exactly two vertices from $\{v,u,r\}$ is considered.}

\begin{figure}[!h]\centering
\frame{\includegraphics[scale=0.8]{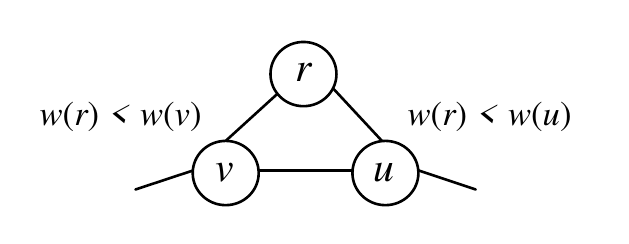}}
\caption{Rule \ref{rule:vctriangle} of \alg{WVC-Alg}.}
\label{fig:omitTriangles}
\end{figure}


{\noindent From now on, since previous rules did not apply, there are no connected components on at most 100 vertices (by Rule \ref{rule:concomponent}), no leaves (by Rule \ref{rule:deg1}), no vertices of degree at least 4 (by Rule \ref{rule:deg4}), and no triangles that contain a degree-2 vertex (by Rules \ref{rule:dominate} and \ref{rule:vctriangle}); also, there is a degree-2 vertex that is a neighbor of a degree-3 vertex (by Rules \ref{rule:deg2component} and \ref{rule:deg4}). We give the remaining rules in Appendix~\ref{app:wvc1}.}

\bibliographystyle{abbrv}

\newcounter{firstbib}

\appendix
\newpage

\section{Results Related to {\sc WVC}}\label{app:wvc}

In this appendix, we present algorithms and a hardness result related to {\sc WVC}. We first give (in Appendix \ref{app:wvc1}) the remaining rules of our $O^*(1.381^s)$ time and polynomial space algorithm (see Section \ref{section:wvc1}). Then, in Appendix \ref{section:wvc2}, we develop an $O^*(1.363^s)$ time and space algorithm. Appendix \ref{section:wvcbip} provides further evidence to the strength of our multivariate framework. Finally, we complement these results by developing (in Appendix \ref{section:wvc3}) an $O^*(1.443^t)$ time and polynomial space algorithm for {\sc WVC}, that is faster on graphs of bounded degree 3.

\subsection{An $O^*(1.381^s)$ Time Algorithm for {\sc WVC} (Cont.)}\label{app:wvc1}

We list the remaining rules used by \alg{WVC-Alg}. Note that each rule is followed by an illustration.

\begin{branchrule}\label{rule:triangles2}
{\normalfont [There are $v,u,r$ such that $\{v,u\} \!\subseteq\! N(r)$, $\{v,r\} \!\subseteq\! N(u)$ and $w(v)\!\leq\! w(u)\!\leq\! w(r)$] Let $\{u'\}=N(u)\!\setminus\! \{v,r\}$, $\{r'\}=N(r)\!\setminus\! \{v,u\}$, and $X \!=\! N(v)\!\cup\! N(u')\!\cup\! N(r')$.
\begin{enumerate}
\item If the result of \alg{WVC-Alg}($G[V\setminus \{v\}],w,W\!-w(v),k-1$) is not NIL: Return it along with~$v$.
\item Else: Return \alg{WVC-Alg}($G[V\setminus X],w,W-w(X),k-|X|)\cup X$.
\end{enumerate}}
\end{branchrule}

{\noindent If there is still a triangle in the graph, this rule omits at least one of its vertices. Therefore, after this rule, we may assume that $G$ does not contain triangles. Recall that, we have argued that at this point (i.e., after Rule \ref{rule:vctriangle}), a triangle contains only degree-3 vertices. If at least two of them had a common neighbor outside the triangle, Rule \ref{rule:dominate} would have been applied, and thus we may assume that each vertex in $\{v,u,r\}$ has a unique neighbor (which is the situation illustrated in Fig.~\ref{fig:omitTriangles2}). We should choose $v$ or $N(v)$. Choosing $N(v)$ (in the second branch), we do not choose $u'$ since then we can replace $u$ by $v$ without adding weight (and the choice of $v$ is already examined in the first branch); similarly, we do not choose $r'$ since then we can replace $r$ by $v$ without adding weight. Thus, choosing $N(v)$, we also choose the neighbors of $u'$ and $r'$ (i.e., we choose $X$).
Clearly, $|X|\geq|N(v)|\geq 3$. If $|X|=3$, then $N(u'),N(r')\subseteq N(v)$, which implies that $N(u')=N(v)\setminus\{r\}$ and $N(r')=N(v)\setminus\{u\}$, which is a contradiction (since then the triangle is part of a connected component of 6 vertices, which invokes Rule \ref{rule:concomponent}). Therefore, $|X|\geq 4$, and we get a branching vector that is at least as good as $(1,4)$, whose root is smaller than 1.381.}

\begin{figure}[!h]\centering
\frame{\includegraphics[scale=0.8]{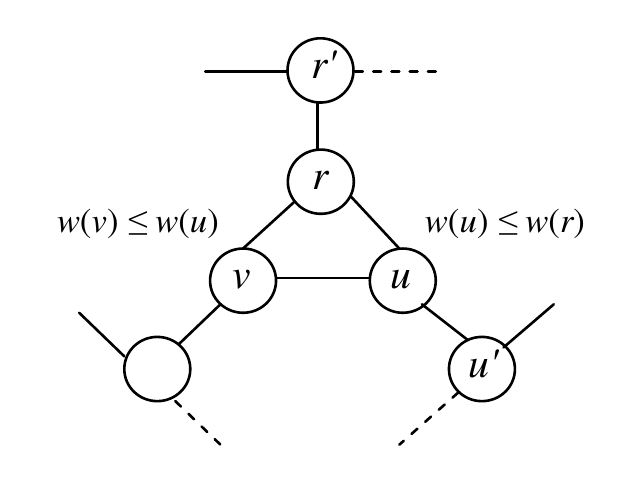}}
\caption{Rule \ref{rule:triangles2} of \alg{WVC-Alg}.}\label{fig:omitTriangles2}
\end{figure}

\begin{branchrule}\label{rule:deg2deg2}
{\normalfont [There are $v,u,r$ such that $|N(v)|=3$, $N(u)=\{v,r\}$ and $|N(r)|=2$]
\begin{enumerate}
\item If the result of \alg{WVC-Alg}($G[V\setminus \{v\}],w,W\!-w(v),k-1$) is not NIL: Return it along with~$v$.
\item Else: Return \alg{WVC-Alg}($G[V\setminus N(v)],w,W\!-w(N(v)),k-3)\cup N(v)$.
\end{enumerate}}
\end{branchrule}

{\noindent If there is a degree-2 vertex that is a neighbor of a degree-2 vertex, then there are $v,u$ and $r$ as defined in this rule. Therefore, after this rule, we may assume that the neighbors of a degree-2 vertex are degree-3 vertices. Clearly, the branching is exhaustive (we choose either $v$ or its neighbors). After choosing $v$, we apply Rule \ref{rule:deg1} (or a different rule that is at least as good)\footnote{When referring to a rule that is ``at least as good'' or ``better'', we refer only to preceding reduction rules {\em where $k$ is decreased by 1}.}\label{foot:better} where we decrease $k$ by 1 (i.e., we apply Case 1, 2b or 2c). Thus, we get a branching vector that is at least as good as $(1+1,3)=(2,3)$, whose root is smaller than 1.381.}

\begin{figure}[!h]\centering
\frame{\includegraphics[scale=0.8]{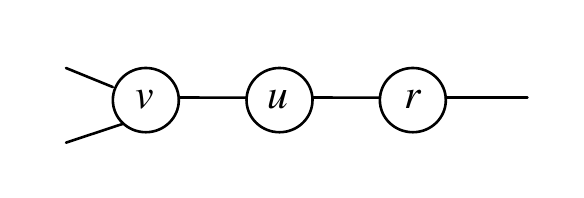}}
\caption{Rule \ref{rule:deg2deg2} of \alg{WVC-Alg}.}
\end{figure}

\begin{branchrule}\label{rule:vc11}
{\normalfont [There are $x,a,b,c,v$ such that $N(x)=\{a,b,c\}, N(a)=\{x,v\}, |N(b)|=2$, $|N(v)|=3$ and ($v\notin N(b)$ or $|N(c)|=2$)]
\begin{enumerate}
\item If the result of \alg{WVC-Alg}($G[V\setminus \{v\}],w,W\!-w(v),k-1$) is not NIL: Return it along with~$v$.
\item Else: Return \alg{WVC-Alg}($G[V\setminus N(v)],w,W-w(N(v)),k-3)\cup N(v)$.
\end{enumerate}}
\end{branchrule}

{\noindent This branching is exhaustive. After choosing $v$, we apply either Rule \ref{rule:deg1}, omitting $a$ (or $b$, if $v\in N(b)$), and then [Rule \ref{rule:deg2deg2} or a better rule (recall that the phrase ``a better rule'' was explained in the previous rule)], or a better rule. This is easily seen by noting that after omitting $a$ ($b$), either $x$ and $b$ are adjacent degree-2 vertices, or $b$ (resp.~$a$) is a leaf adjacent to the degree-2 vertex $x$. Furthermore, after choosing $N(v)$, we apply either Rule \ref{rule:deg2deg2}, or a better rule.
This is easily seen by noting that if $v\notin N(b)$, then after deleting $N(v)$, $x$ and $b$ are adjacent vertices of degree at most 2; else, $|N(c)|=2$ and thus $v\notin N(c)$ (else $v,a,x,b$ and $c$ form a small connected component which invokes Rule \ref{rule:concomponent}), and then after deleting $N(v)$, $x$ and $c$ are adjacent vertices of degree at most 2.
Thus, we get a branching vector that is at least as good as $(1+(2,3),3+(2,3))=(3,4,5,6)$, whose root is smaller than 1.381.}

\begin{figure}[!h]\centering
\frame{\includegraphics[scale=0.8]{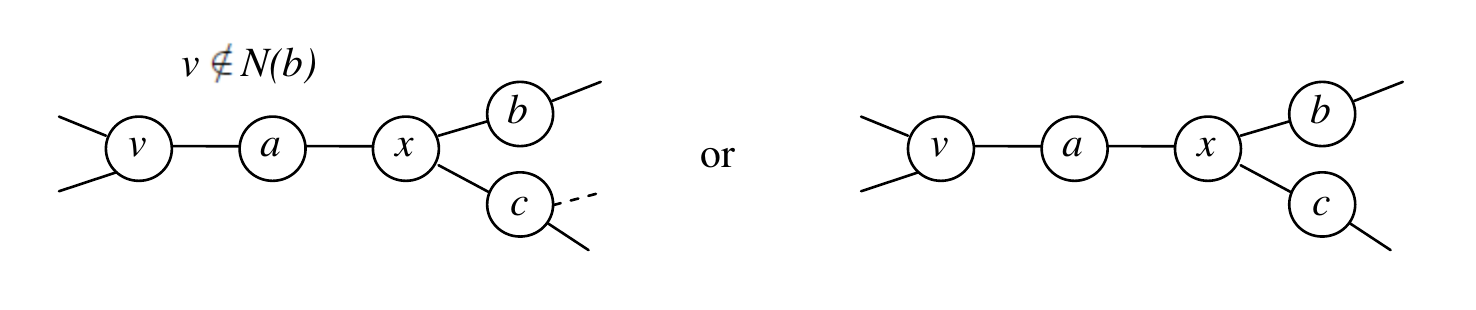}}
\caption{Rule \ref{rule:vc11} of \alg{WVC-Alg}. Note that the illustrated situations are not disjoint (it is possible that both $v\notin N(b)$ and $|N(c)|=2$).}
\end{figure}

\begin{branchrule}\label{rule:vc12}
{\normalfont [There are $x,a,b,c,v$ such that $N(x)=\{a,b,c\}, N(a)=\{x,v\}, N(b)=\{x,v\}$, $|N(v)|=3$ and $|N(c)|=3$]
\begin{enumerate}
\item If the result of \alg{WVC-Alg}($G[V\setminus \{c\}],w,W\!-w(c),k-1$) is not NIL: Return it along with~$c$.
\item Else: Return \alg{WVC-Alg}($G[V\setminus N(c)],w,W\!-w(N(c)),k-3)\cup N(c)$.
\end{enumerate}}
\end{branchrule}

{\noindent This branching is exhaustive. After choosing either $c$ or $N(c)$, we apply a combination of reduction rules that decreases $k$ by at least 1. More precisely, if $v\in N(c)$, then after choosing $c$, we apply Rule \ref{rule:deg2component}; else, after choosing $N(c)$, we apply, at worst, Case 3 in Rule \ref{rule:deg1} to omit $b$ (or $a$), and then we have a leaf that is adjacent to a degree-2 vertex, and can thus apply one of the other cases in Rule \ref{rule:deg1}. Thus, we get a branching vector that is at least as good as $\max\{(1+1,3),(1,3+1)\}=(1,4)$, whose root is smaller than 1.381.}

\begin{figure}[!h]\centering
\frame{\includegraphics[scale=0.8]{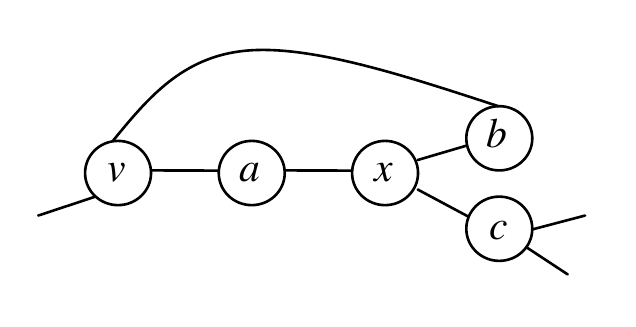}}
\caption{Rule \ref{rule:vc12} of \alg{WVC-Alg}.}
\end{figure}

{\noindent From now on, since previous rules did not apply, there are different vertices $x,a,b,c,v$ such that $N(x)=\{a,b,c\}$, $N(a)=\{x,v\}$, $|N(b)|=|N(c)|=|N(v)|=3$ (in particular, if $|N(v)|=2$, Rule \ref{rule:deg2deg2} is applied). Moreover, denoting $N(v)=\{a,v_1,v_2\}$, we have that $|N(v_1)|=|N(v_2)|=3$ (if the degree of at least one vertex in $\{b,c,v_1,v_2\}$ is 2, at worst, Rule \ref{rule:vc11} or \ref{rule:vc12} is applied). Note that neither $b$ and $c$, nor $v_1$ and $v_2$ are neighbors (since $G$ does not contain triangles). This situation is illustrated in the following figure.}

\begin{figure}[!h]\centering\label{fig:vc12after}
\frame{\includegraphics[scale=0.8]{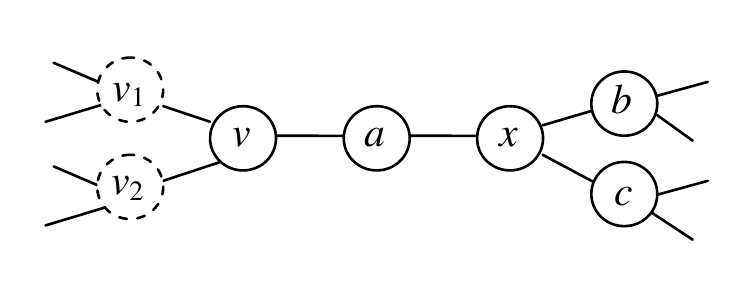}}
\caption{After Rule \ref{rule:vc12} of \alg{WVC-Alg}, the graph $G$ contains the illustrated subgraph. The vertices $v_1$ and $v_2$ appear in dashed circles since they may be equal to $b$ and $c$. Recall that $b\notin N(c)$ and $v_1\notin N(v_2)$.}
\end{figure}

\begin{branchrule}\label{rule:vc13}
{\normalfont [$b=v_1$ and $c=v_2$]
\begin{enumerate}
\item If the result of \alg{WVC-Alg}($G[V\!\setminus\! \{x,v\}],w,W\!-\!w(\{x,v\}),k\!-\!2$) is not NIL: Return it along with $\{x,v\}$.
\item Else: Return \alg{WVC-Alg}($G[V\setminus N(x)],w,W\!-w(N(x)),k-3)\cup N(x)$.
\end{enumerate}}
\end{branchrule}

{\noindent Clearly, we need to choose either $x$ or $N(x)$. Thus, to show that the rule is correct, it is enough to explain why, if we choose $x$ (in the first branch), we can also choose $v$. This can be easily seen by noting that if we choose $x$ and do not choose $v$, we need to choose $N(v)=N(x)$, in which case it is unnecessary to choose $x$. Note that we choose $N(x)$ in the second branch. We get a branching vector that is at least as good as $(2,3)$, whose root is smaller than 1.381.}

\begin{figure}[!h]\centering
\frame{\includegraphics[scale=0.8]{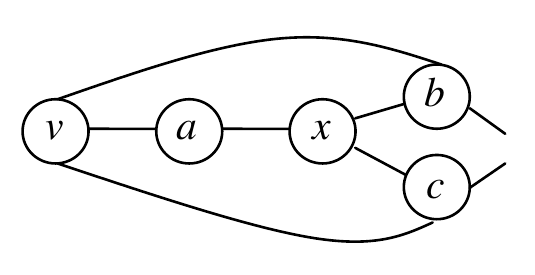}}
\caption{Rule \ref{rule:vc13} of \alg{WVC-Alg}.}
\end{figure}

\begin{branchrule}\label{rule:vc14}
{\normalfont [$b=v_1$ and $c\neq v_2$]
\begin{enumerate}
\item If the result of \alg{WVC-Alg}($G[V\setminus N(b)],w,W\!-w(N(b)),k-3$) is not NIL: Return it along with $N(b)$.
\item Else if the result of \alg{WVC-Alg}($G[V\setminus (\{b\}\cup N(c))],w,W\!-w(\{b\}\cup N(c)),k-4$) is not NIL: Return it along with $\{b\}\cup N(c)$.
\item Else: Return \alg{WVC-Alg}($G[V\setminus \{b,c\}],w,W\!-w(\{b,c\}),k-2)\cup\{b,c\}$.
\end{enumerate}}
\end{branchrule}

{\noindent This branching is exhaustive: we either choose $N(b)$ (branch 1) or $b$ (branches 2 and 3), where, if we choose $b$, we further consider choosing $N(c)$ (branch 2) or $c$ (branch 3). After choosing $\{b\}\cup N(c)$ (in the second branch), we apply a reduction rule that decreases $k$ at least by $1$ (Case 1 or 2 of Rule \ref{rule:deg1} to omit $a$, or a better rule). Moreover, after choosing $\{b,c\}$, we also apply a reduction rule that decreases $k$ by at least 1 (now $x$ is a leaf that is adjacent to a degree-2 vertex). Thus, we get a branching vector that is at least as good as $(3,4+1,2+1)=(3,5,3)$, whose root is smaller than 1.381.}

\begin{figure}[!h]\centering
\frame{\includegraphics[scale=0.8]{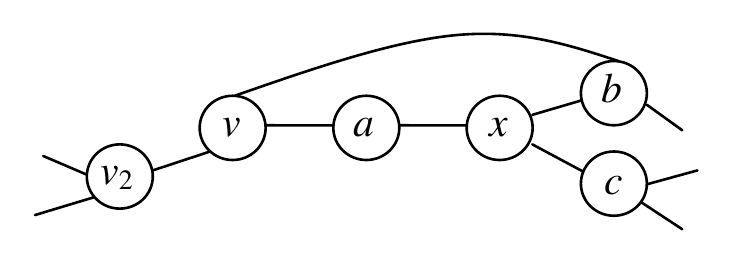}}
\caption{Rule \ref{rule:vc14} of \alg{WVC-Alg}.}
\end{figure}

{\noindent From now on, since the two previous rules did not apply, we may assume w.l.o.g that the vertices $x,a,b,c,v,v_1,v_2$ are different.}

\begin{branchrule}\label{rule:vc15}
{\normalfont [$b\in N(v_1)$]
\begin{enumerate}
\item If the result of \alg{WVC-Alg}($G[V\setminus \{b\}],w,W\!-w(b),k-1$) is not NIL: Return it along with~$b$.
\item Else: Return \alg{WVC-Alg}($G[V\setminus N(b)],w,W\!-w(N(b)),k-3)\cup N(b)$.
\end{enumerate}}
\end{branchrule}

{\noindent This branching is exhaustive. After choosing $N(b)$, we apply a reduction rule that decreases $k$ by at least 1 (Case 1 or 2 of Rule \ref{rule:deg1} to omit $a$, or a better rule). Thus, we get a branching vector that is at least as good as $(1,3+1)=(1,4)$, whose root is smaller than 1.381.}

\begin{figure}[!h]\centering
\frame{\includegraphics[scale=0.8]{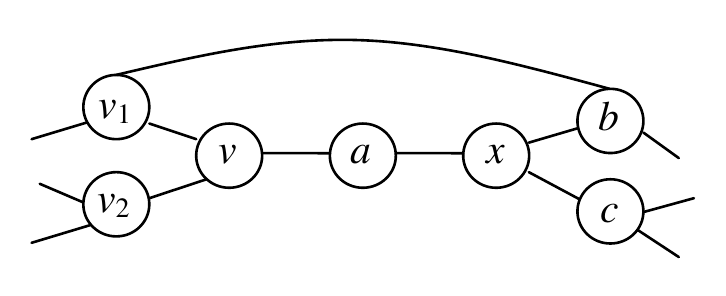}}
\caption{Rule \ref{rule:vc15} of \alg{WVC-Alg}. Note that, along with $\{b,v_1\}$, $\{b,c,v_1,v_2\}$ may contain other pairs of neighbors.}
\end{figure}

{\noindent From now on, since the previous rule did not apply, we may assume w.l.o.g that $\{b,c,v_1,v_2\}$ does not contain vertices that are neighbors.}

\begin{branchrule}\label{rule:vc16}
{\normalfont [Remaining case]
\begin{enumerate}
\item If the result of \alg{WVC-Alg}($G[V\setminus N(b)],w,W\!-w(N(b)),k-3$) is not NIL: Return it along with $N(b)$.
\item Else if the result of \alg{WVC-Alg}($G[V\setminus (\{b\}\cup N(v_2))],w,W\!-w(\{b\}\cup N(v_2)),k-4$) is not NIL: Return it along with $\{b\}\cup N(v_2)$.
\item Else if the result of \alg{WVC-Alg}($G[V\setminus (\{b,v_2\}\cup N(c))],w,W\!-w(\{b,v_2\}\cup N(c)),k-5$) is not NIL: Return it along with $\{b,v_2\}\cup N(c)$.
\item Else: Return \alg{WVC-Alg}($G[V\setminus \{b,v_2,c\}],w,W\!-w(\{b,v_2,c\}),k-3)\cup \{b,v_2,c\}$.
\end{enumerate}}
\end{branchrule}

{\noindent This branching is exhaustive. It is easy to verify that in each branch, except for the first, we apply a reduction rule that decreases $k$ by at least 1. Indeed, after choosing $\{b\}\cup N(v_2)$, we apply Case 1 or 2 of Rule \ref{rule:deg1} to omit $a$, or a better rule; after choosing $\{b,v_2\}\cup N(c)$, we also apply Case 1 or 2 of Rule \ref{rule:deg1} to omit $a$, or a better rule (note that, since $b,v_2\notin N(c)$, $|\{b,v_2\}\cup N(c)|=5$); finally, after choosing $\{b,v_2,c\}$, we apply Case 1 or 2 of Rule \ref{rule:deg1} to omit $x$, or a better rule. Thus, we get a branching vector that is at least as good as $(3,4+1,5+1,3+1)=(3,5,6,4)$, whose root is smaller than 1.381.}

\begin{figure}[!h]\centering
\frame{\includegraphics[scale=0.8]{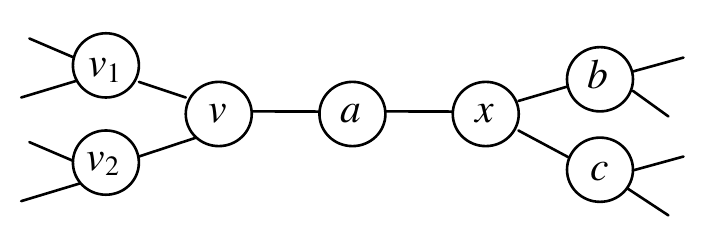}}
\caption{Rule \ref{rule:vc16} of \alg{WVC-Alg}. Recall that $\{b,c,v_1,v_2\}$ does not contain neighbors.}
\end{figure}
\subsection{A Faster Exponential-Space Algorithm for {\sc WVC}}\label{section:wvc2}

In this appendix, we solve {\sc WVC} in time and space $O^*(1.363^s)$. For this purpose, we consider the following form of {\sc $k$-WVC}.

\myparagraph{$k$-WVCnoW} Given an instance of {\sc $k$-WVC}, return a vertex cover $U$ whose weight is smaller or equal to the weight of any vertex cover of size at most $k$.

\medskip

Clearly, when discussing an instance of {\sc $k$-WVCnoW}, we can omit the weight $W$. This is useful due to the fact that at a later stage in the appendix, we store ``subinstances'' of the input instance, along with their solutions, and would like to avoid storing the values of $W$ that correspond to these subinstances.

Next, by modifying \alg{WVC-Alg} (the algorithm given in Section \ref{section:wvc1}), we develop a bounded search tree-based algorithm, called \alg{WVCnoW-Alg}, for which we obtain the following result.

\begin{lem}\label{lemma:wvc2}
\alg{WVCnoW-Alg} solves {\sc $k$-WVCnoW} in time $O^*(1.3954^k)$ and polynomial space. Given an input $(G\!=\!(V\!,E),w\!: V\!\!\rightarrow\! \mathbb{R}^{\geq 1}\!,k)$, any node in the search tree corresponds to an instance $(G'\!=\!(V'\!,E'),w'\!: V'\!\!\rightarrow\! \mathbb{R}^{\geq 1}\!,k')$ where $G'$ is an induced subgraph of $G$, $w'$ is defined as $w$ when restricted to $V'$,\footnote{This condition is not satisfied by \alg{WVC-Alg}, since it changes the weight of certain vertices in Rules \ref{rule:deg1} and \ref{rule:vctriangle} in Section \ref{section:wvc1}. Thus, we now need to replace Rules \ref{rule:deg1} and \ref{rule:vctriangle} by different rules, and also modify some of the rules that relied on them (i.e., the rules whose application was argued to be followed by an application of Rule \ref{rule:deg1} or \ref{rule:vctriangle}).} and $k'\!\leq\! k$. Moreover, \alg{WVCnoW-Alg} uses branching rules whose vectors have roots that are smaller than 1.3954, and stops if $k'\!<\!0$.
\end{lem}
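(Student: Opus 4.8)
The plan is to obtain \alg{WVCnoW-Alg} as a close variant of \alg{WVC-Alg}, reusing the bulk of its rules verbatim and re-establishing its analysis, while repairing the two places where \alg{WVC-Alg} violates the demanded structural invariant. Since {\sc $k$-WVCnoW} drops the weight bound $W$, a node now carries only $(G',w',k')$, and correctness means returning a cover whose weight is at most that of every cover of size at most $k'$ (a set of size exceeding $k'$ may be returned). The structural claim---that $G'$ is an induced subgraph of $G$, that $w'$ is $w$ restricted to $V'$, and that $k'\le k$---then follows immediately once we verify that no rule ever introduces a vertex, increases a weight, or raises $k$: every surviving rule of \alg{WVC-Alg} only deletes vertices (passing to an induced subgraph) and decreases $k$, and the two offending rules are replaced by rules with the same property.

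First I would handle the two rules that change weights, namely Rule \ref{rule:deg1} (leaf elimination) and Rule \ref{rule:vctriangle} (triangle elimination). Both exploited weight shifting to act as (near-)reductions; in their absence I replace them by honest branching rules that only delete vertices. For a leaf $v$ with neighbor $u$, the exchange argument of Case 1 still gives a pure reduction when $w(u)\le w(v)$ (swap $v$ out, $u$ in). When $w(u)>w(v)$ one branches on whether an appropriate nearby vertex lies in the cover: either $u$ is taken (and $v$ is then unnecessary), or $u$ is excluded, forcing $N(u)$ into the cover; in the sub-case where $u$ has degree two with other neighbor $r$, branching on $r$ instead lets the cheaper leaf $v$ be taken ``for free'' in the branch that keeps $r$. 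Triangles are handled analogously, by branching on the cheapest triangle vertex together with the same neighbourhood bookkeeping used in Rules \ref{rule:triangles2} and \ref{rule:vc13}. I then revisit every rule whose stated branching vector relied on a subsequent firing of Rule \ref{rule:deg1} or \ref{rule:vctriangle} (e.g.\ the ``$+1$'' improvements in Rules \ref{rule:deg2deg2}, \ref{rule:vc11}, \ref{rule:vc14}, \ref{rule:vc15} and \ref{rule:vc16}) and re-derive the look-ahead, now invoking the replacement leaf/triangle rules instead.

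Correctness I would prove by induction on the search tree, exactly as for \alg{WVC-Alg}: each reduction rule preserves the minimum weight over covers of size at most $k'$ via an exchange argument, and each (old or new) branching rule is exhaustive, i.e.\ its branches cover all possibilities for the membership of the chosen vertices in an optimal cover of size at most $k'$. The base cases are $E'=\emptyset$ (return $\emptyset$) and $k'<0$ (return NIL, since no cover of size at most $k'$ exists). Because several reduction rules still decrease $k'$ by less than the number of deleted vertices, the returned set may exceed $k'$ in size, but its weight never exceeds the optimum over size-$\le k'$ covers, which is exactly the {\sc $k$-WVCnoW} guarantee.

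The main obstacle is the running-time bound: I must show that \emph{every} branching rule of \alg{WVCnoW-Alg} has a branching vector whose root is below $1.3954$. This is precisely where the loss relative to the $1.381$ bound of \alg{WVC-Alg} is incurred, since the replacement rules can no longer extract progress through weight manipulation and the dependent rules lose part of their free ``$+1$'' decreases. I would therefore re-tabulate the branching vector of each modified rule, using the structural facts inherited from Section \ref{section:wvc1} (no small components, no leaves once the new leaf rule has been applied, maximum degree three, and triangle-freeness after the triangle rules) together with look-ahead into the guaranteed follow-up reductions, and verify in each case that the resulting vector dominates one whose root does not exceed $1.3954$. Given this, the analysis of Section \ref{section:preliminaries}---the algorithm stops once $k'<0$ and every branching root is at most $1.3954$---yields the claimed $O^*(1.3954^k)$ time, while polynomial space follows since the search proceeds by depth-first recursion, storing nothing beyond the current root-to-node path.
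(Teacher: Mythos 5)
Your high-level plan coincides with the paper's in outline (drop $W$, replace the two weight-manipulating rules, re-verify all branching vectors against $1.3954$), but the concrete replacement you sketch for the leaf rule does not work, and this is the crux of the proof. For a leaf $v$ with neighbor $u$ of degree $2$ and $w(v)<w(u)$, you propose branching on $u$ (vector $(1,2)$, root $\approx 1.618$) or on $u$'s other neighbor $r$; but $r$ may itself have degree $2$, in which case even with the free follow-up reduction on the isolated edge $\{v,u\}$ you get at best $(2,2)$, whose root is $\sqrt{2}\approx 1.414>1.3954$. Since leaves can sit at the end of arbitrarily long degree-$2$ chains, no purely local branch near $u$ suffices. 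The paper's Rules \ref{rule:deg1a} and \ref{rule:deg1b} instead locate the \emph{first degree-$3$ vertex} $r$ reachable from $u$ along a path of degree-$2$ internal vertices and branch on $r$ versus $N(r)$: taking $r$ severs the entire chain together with $v$ and $u$ into a component with at most one degree-$3$ vertex, which Rule \ref{rule:deg2component} then solves by dynamic programming for an extra unit of decrease, yielding $(2,3)$ and $(3,4,3)$. This ``jump to the end of the chain'' idea is absent from your proposal and is exactly what closes the gap.

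A second, smaller issue: you treat the remaining configurations as a matter of ``re-tabulating'' the look-aheads of the old rules, but the paper does not merely re-derive the vectors of Rules \ref{rule:vc11}--\ref{rule:vc16}; it designs genuinely new rules (\ref{rule:2vc12}--\ref{rule:2vc17}) that recover progress by \emph{weight comparisons between branches} rather than weight rewriting --- e.g.\ Rule \ref{rule:2vc13} uses $w(x)\ge w(b)+w(c)$ to collapse the choice between $x$ and $\{b,c\}$ into a single branch (decreasing $k$ by only $4$ while deleting five vertices, which is legitimate under the flexible use of $k$), and Rule \ref{rule:2vc15} uses $w(a)\ge\min\{w(x),w(v)\}$ to add a second vertex for free. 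Without exhibiting such rules and their vectors, the claim that every root stays below $1.3954$ is asserted rather than proved.
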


\begin{proof}
We now present each rule related to a call \alg{WVCnoW-Alg}($G=(V,E),w: V\rightarrow \mathbb{R}^{\geq 1},k$), argue its correctness, and, if it is a branching rule, give its root (with respect to $k$). Since the worst root we shall get is bounded by 1.3954, and the algorithm stops if $k<0$, we get the desired running time.

When presenting a branching rule, we write ``Return ...'' at each branch. This indicates that we perform all the recursive calls related to the branching rule, and return the lightest vertex cover (i.e., the one whose total weight, according to $w$, is the smallest) among those returned by the branches (e.g., the branches of Rule \ref{rule:deg1a} below return two solutions, and we return the lighter one among them). If there are several vertex covers that have the same lightest weight, choose one of them arbitrarily.

\bigskip

{\noindent\bf Rules 1-5.} Use Rules 1-5 of \alg{WVC-Alg}, ignoring the weight $W$, and returning the lightest solution when branching (as explained above).

\smallskip

\setcounter{reducerule}{5}

\begin{branchrule}\label{rule:deg1a}
{\normalfont [There are different $v,u,r\in V$ such that $N(v)=\{u\}$, ($|N(u)|=2$ or $N(u)$ includes two leaves), $|N(r)|=3$, and there is a path from $u$ to $r$ consisting only of degree-2 internal vertices\footnote{If $u\in N(r)$, the last requirement in the condition is satisfied as there is a path from $u$ to $r$ that does not contain internal vertices.}]
\begin{enumerate}
\item Return \alg{WVCnoW-Alg}($G[V\setminus \{r\}],w,k-1$) $\cup\ \{r\}$.
\item Return \alg{WVCnoW-Alg}($G[V\setminus N(r)],w,k-3$) $\cup\ N(r)$.
\end{enumerate}}
\end{branchrule}

{\noindent This branching is exhaustive. After choosing $r$, we get that $v$ and $u$ are contained in a connected component that has at most one degree-3 vertex (which is, possibly, $u$); thus, we decrease $k$ by at least 1 by applying Rule \ref{rule:deg2component}. Thus, we get a branching vector that is at least as good as $(1+1,3)=(2,3)$, whose root is smaller than 1.3954.}

\begin{figure}[!h]\centering
\frame{\includegraphics[scale=0.8]{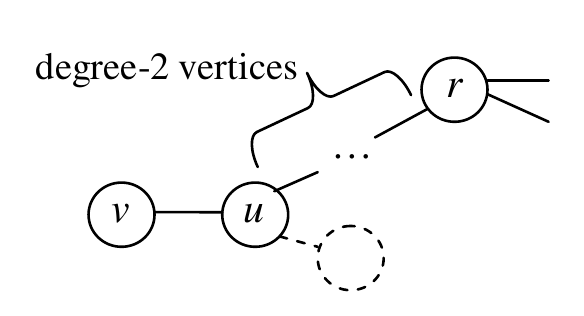}}
\caption{Rule \ref{rule:deg1a} of \alg{WVCnoW-Alg}.}
\end{figure}

\begin{branchrule}\label{rule:deg1b}
{\normalfont [There are $v,u,r\in V$ such that $N(v)=\{u\}$, $|N(r)|=3$, and there is a path from $u$ to $r$ consisting only of degree-2 internal vertices]
\begin{enumerate}
\item Return \alg{WVCnoW-Alg}($G[V\setminus \{r\}],w,k-1$) $\cup\ \{r\}$.
\item Return \alg{WVCnoW-Alg}($G[V\setminus N(r)],w,k-3$) $\cup\ N(r)$.
\end{enumerate}}
\end{branchrule}

{\noindent This branching is exhaustive. Note that, since the previous rule did not apply, the neighbors of $u$ include the leaf $v$, and two other vertices of degree at least 2. Thus, after choosing $r$, we clearly apply Rule \ref{rule:deg1a} or a better rule (recall that the term ``better rule'' was defined in Appendix \ref{app:wvc1}). Thus, we get a branching vector that is at least as good as $(1+(2,3),3)=(3,4,3)$, whose root is smaller than 1.3954.}

\begin{figure}[!h]\centering
\frame{\includegraphics[scale=0.8]{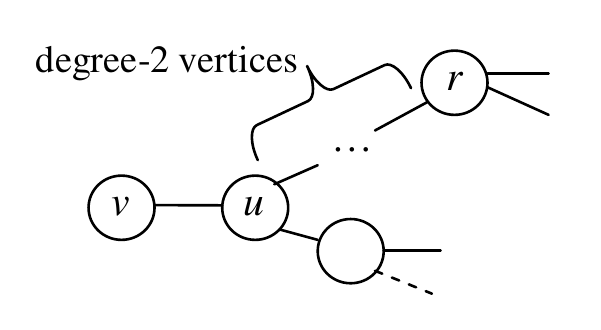}}
\caption{Rule \ref{rule:deg1b} of \alg{WVCnoW-Alg}.}
\end{figure}

\bigskip

{\noindent From now on, since Rules \ref{rule:deg2component}, \ref{rule:deg1a} and \ref{rule:deg1b} did not apply, we may assume that the graph $G$ does not contain leaves.}

\bigskip

{\noindent\bf Reduction Rules 8.} Use Rule 7 of \alg{WVC-Alg}, ignoring the weight $W$.

\smallskip

\setcounter{reducerule}{8}

\begin{branchrule}\label{rule:2vc9}
{\normalfont [There are $v\!,u,\!r$ such that $\{v,\!u\} \!=\! N(r)$ and $\{v,\!r\} \!\subseteq\! N(u)$]
\begin{enumerate}
\item Return \alg{WVCnoW-Alg}($G[V\setminus \{v\}],w,k-1$) $\cup\ \{v\}$.
\item Return \alg{WVCnoW-Alg}($G[V\setminus N(v)],w,k-3$) $\cup\ N(v)$.
\end{enumerate}}
\end{branchrule}

{\noindent This rule is illustrated in Fig.~\ref{fig:omitTriangles} (the rule to which it refers is related to the same condition as this rule). The branching is exhaustive. After choosing $v$, we clearly apply Rule \ref{rule:deg1a} or a better rule. Thus, we get a branching vector that is at least as good as $(1+(2,3),3)=(3,4,3)$, whose root is smaller than 1.3954.}

\bigskip

{\noindent\bf Reduction Rules 10-11.} Use Rules 9-10 of \alg{WVC-Alg}, ignoring the weight $W$, and returning the lightest solution.

\bigskip

{\noindent Note that in Rule 11 (that is Rule 10 of \alg{WVC-Alg}), since Rule \ref{rule:deg1a} is worse than Rule \ref{rule:deg1} of \alg{WVC-Alg}, we now obtain a branching vector that is at least as good as $(1+(2,3),3)=(3,4,3)$ (but not as $(2,3)$), whose root is smaller than 1.3954. }

\bigskip

{\noindent From now on, since previous rules did no apply, there are no connected components on at most 100 vertices (by Rule \ref{rule:concomponent}), no vertices of degree at least 4 (by Rule \ref{rule:deg4}), no leaves (by Rules \ref{rule:deg1a} and \ref{rule:deg1b}), no triangles (by Rules 8, \ref{rule:2vc9} and 10), and no degree-2 vertices that are neighbors (by Rule 11); also, there is a degree-2 vertex that is a neighbor of a degree-3 vertex (by Rules \ref{rule:deg2component} and \ref{rule:deg4}). In particular, this implies that there are different vertices $x,a,b,c,v$ such that $N(x)=\{a,b,c\}$, $N(a)=\{x,v\}$ and $|N(v)|=3$.}

\setcounter{reducerule}{11}

\begin{branchrule}\label{rule:2vc12}
{\normalfont [$N(b)=\{x,v\}$\footnote{Note that this condition includes the case where $N(c)=\{x,v\}$ (rename $b$ as $c$ and vice versa).}]
\begin{enumerate}
\item Return \alg{WVCnoW-Alg}($G[V\setminus \{v\}],w,k-1$) $\cup\ \{v\}$.
\item Return \alg{WVCnoW-Alg}($G[V\setminus N(v)],w,k-3$) $\cup\ N(v)$.
\end{enumerate}}
\end{branchrule}

{\noindent This branching is exhaustive. After choosing $v$, we apply Rule \ref{rule:deg1a} or a better rule. Thus, we get a branching vector that is at least as good as $(1+(2,3),3)=(3,4,3)$, whose root is smaller than 1.3954.}

\begin{figure}[!h]\centering
\frame{\includegraphics[scale=0.8]{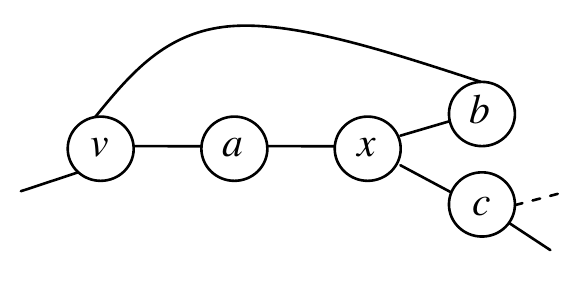}}
\caption{Rule \ref{rule:2vc12} of \alg{WVC-Alg}.}
\end{figure}

\begin{branchrule}\label{rule:2vc13}
{\normalfont [($|N(b)|=|N(c)|=2$) and $w(x)\geq w(b)+w(c)$]
\begin{enumerate}
\item Return \alg{WVCnoW-Alg}($G[V\setminus \{v\}],w,k-1$) $\cup\ \{v\}$.
\item Return \alg{WVCnoW-Alg}($G[V\setminus (N(v)\cup\{b,c\})],w,k-4$) $\cup\ N(v)\cup\{b,c\}$.
\end{enumerate}}
\end{branchrule}

{\noindent Since the previous rule was not applied, $v\notin N(b),N(c)$. In this rule, we choose either $v$ or $N(v)$. Choosing $N(v)$, we should further choose $x$ or $\{b,c\}=N(x)\setminus N(v)$. Since $w(x)\geq w(b)+w(c)$, it is always better, in terms of weight, to choose $\{b,c\}$. However, in terms of size, it might be better to choose $x$. Thus, we choose $\{b,c\}$, but reduce $k$ only by $4$ (even though we choose five vertices). Recall that reducing $k$ by less than its actual decrease in the instance complies with our flexible use of $k$ (see, e.g., Rule \ref{rule:deg1} in Section \ref{section:wvc1}). We get a branching vector that is at least as good as $(1,4)$, whose root is smaller than 1.3954.}

\begin{figure}[!h]\centering
\frame{\includegraphics[scale=0.8]{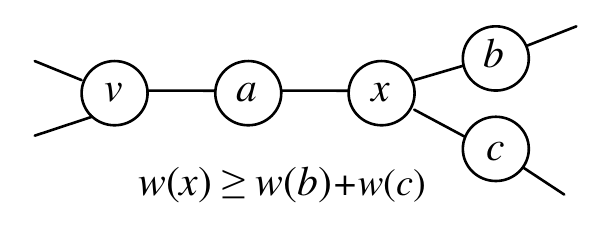}}
\caption{Rule \ref{rule:2vc13} of \alg{WVC-Alg}. Recall that $v\notin N(b),N(c)$.}
\end{figure}

\begin{branchrule}\label{rule:2vc14}
{\normalfont [There are $b',c'\in V$ such that $N(b)=\{x,b'\}$, $N(c)=\{x,c'\}$ and $b'\notin N(c')$]
\begin{enumerate}
\item Return \alg{WVCnoW-Alg}($G[V\setminus N(b')],w,k-3$) $\cup\ N(b')$.
\item Return \alg{WVCnoW-Alg}($G[V\setminus (\{b'\}\cup N(c'))],w,k-4$) $\cup\ \{b'\}\cup N(c')$.
\item Return \alg{WVCnoW-Alg}($G[V\setminus \{b',c',x\}],w,k-3$) $\cup\ \{b',c',x\}$.
\end{enumerate}}
\end{branchrule}

{\noindent Note that $|N(b')=|N(c')|=3$ (since Rule 10 did not apply), $b',c'\neq v$ and $b'\neq c'$ (since Rule \ref{rule:2vc12} did not apply; the latter argument can be easily seen by reshuffling the names of the vertices---in particular, $b'$ is $v$ in Rule \ref{rule:2vc12}), and $w(x)< w(b)+w(c)$ (since Rule \ref{rule:2vc13} did not apply). In this rule, we choose either $N(b')$ (branch 1) or $b'$ (branches 2 and 3), where if we choose $b'$, we further choose either $N(c')$ (branch 2; note that, by the condition of this rule, $|\{b'\}\cup N(c')|=4$) or $c'$ (branch 3). Choosing $\{b',c'\}$ (in branch 3), we can further choose $x$ (since $w(x)< w(b)+w(c)$). We get a branching vector that is at least as good as $(3,4,3)$, whose root is smaller than 1.3954.}

We do not next consider a rule that is similar to this rule, except that $b'\in  N(c')$, since there are at least two vertices among $\{v,b',c'\}$ that are not neighbors (otherwise we have a connected component of 7 vertices), which invokes Rule \ref{rule:2vc14} (this can be easily seen by reshuffling the names of the vertices). Thus, from now on, at least one vertex in $\{b,c\}$ has degree 3, and we may assume w.l.o.g that this vertex is $c$. Also note that, since Rule \ref{rule:2vc12}, if $|N(b)|=2$, then $v\notin N(b)$.

\begin{figure}[!h]\centering
\frame{\includegraphics[scale=0.8]{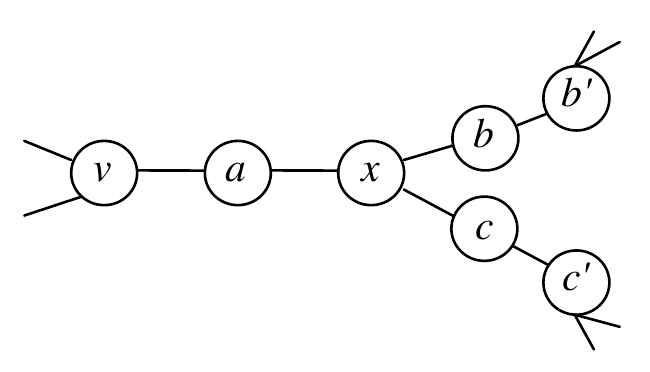}}
\caption{Rule \ref{rule:2vc14} of \alg{WVC-Alg}. Recall that $w(x)< w(b)+w(c)$.}
\end{figure}

\begin{branchrule}\label{rule:2vc15}
{\normalfont [$w(a) \geq \min\{w(x),w(v)\}$]
Let $p$ be a vertex in $\{x,v\}$ such that $w(a)\geq w(p)$, and $q$ be the other vertex in $\{x,v\}$.
\begin{enumerate}
\item Return \alg{WVCnoW-Alg}($G[V\setminus N(q)],w,k-3$) $\cup\ N(q)$.
\item Return \alg{WVCnoW-Alg}($G[V\setminus (\{q,p\})],w,k-2$) $\cup\ \{q,p\}$.
\end{enumerate}}
\end{branchrule}

{\noindent We choose either $N(q)$ or $q$. Choosing $q$, we can simply add $p$ (since $w(a)\geq w(p)$). We get a branching vector that is at least as good as $(3,2)$, whose root is smaller than 1.3954.}

\begin{figure}[!h]\centering
\frame{\includegraphics[scale=0.8]{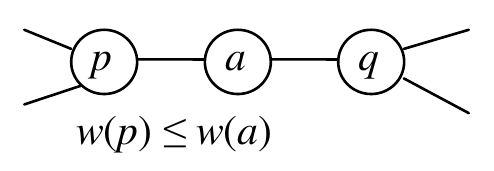}}
\caption{Rule \ref{rule:2vc15} of \alg{WVC-Alg}.}
\end{figure}

{\noindent From now on, since the previous rule did not apply, $w(a)\!<\!w(x)$ and $w(a)\!<\!w(v)$.}

\begin{branchrule}
{\normalfont [$|N(b)|=3$]
\begin{enumerate}
\item Return \alg{WVCnoW-Alg}($G[V\setminus N(b)],w,k-3$) $\cup\ N(b)$.
\item Return \alg{WVCnoW-Alg}($G[V\setminus (\{b\}\cup N(c))],w,k-4$) $\cup\ \{b\}\cup N(c)$.
\item Return \alg{WVCnoW-Alg}($G[V\setminus \{b,c,a\}],w,k-3$) $\cup\ \{b,c,a\}$.
\end{enumerate}}
\end{branchrule}

{\noindent We either choose $N(b)$ (branch 1) or $b$ (branches 2 and 3), where if we choose $b$, we further choose $N(c)$ (branch 2) or $c$ (branch 3). Choosing $\{b,c\}$, we can simply add $a$ (since $w(a)<w(x)$). We get a branching vector that is at least as good as $(3,4,3)$, whose root is smaller than 1.3954.}

\bigskip

{\noindent Next, let $N(v)=\{a,v_1,v_2\}$, and assume w.l.o.g that $|N(v_1)|=|N(b)|=2$. Also, let $N(b)=\{x,b'\}$. Recall that $b'\notin\{c,v\}$, and further assume w.l.o.g that $w(b)<\min\{w(x),w(b')\}$ (this can be assumed since Rule \ref{rule:2vc15} did not apply).}

\begin{branchrule}\label{rule:2vc17}
{\normalfont [Remaining case]
\begin{enumerate}
\item Return \alg{WVCnoW-Alg}($G[V\setminus N(c)],w,k-3$) $\cup\ N(c)$.
\item Return \alg{WVCnoW-Alg}($G[V\setminus N(x)],w,k-3$) $\cup\ N(x)$.
\item Return \alg{WVCnoW-Alg}($G[V\setminus \{c,x,v,b'\}],w,k-4$) $\cup\ \{c,x,v,b'\}$.
\end{enumerate}}
\end{branchrule}

{\noindent We either choose $N(c)$ (branch 1) or $c\in N(x)$ (branches 2 and 3), where if we choose $c$, we further choose $N(x)$ (branch 2) or $x$ (branch 3). Choosing $c$ and $x$, we do not need to try and choose $a$, since then $x$ can be replaced by $b$ without adding weight (since $w(b)\leq w(x)$), and the choice of $N(x)$ is already examined in the second branch; similarly, we do not need to choose $b$. Therefore, choosing $c$ and $x$, we also choose $\{v,b'\}\subseteq N(a)\cup N(b)$. We get a branching vector that is at least as good as $(3,3,4)$, whose root is smaller than 1.3954.}\qed

\begin{figure}[!h]\centering
\frame{\includegraphics[scale=0.8]{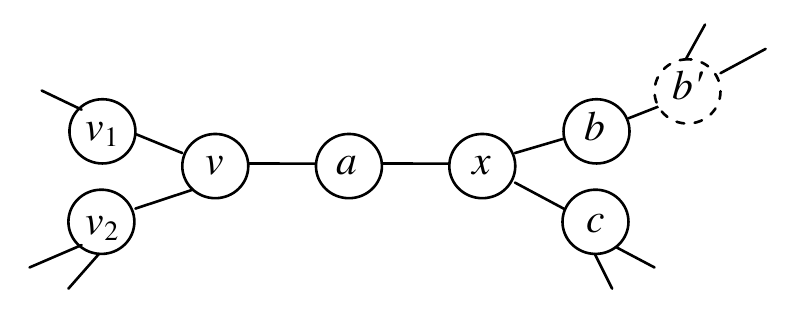}}
\caption{Rule \ref{rule:2vc17} of \alg{WVC-Alg}. The vertex $b'$ appears in a dashed circle since it is possible that $v_2=b'$.}
\end{figure}
\end{proof}


{\noindent Relying on \alg{WVCnoW-Alg}, we next apply the refined memorization technique of \cite{vc2005} (based on \cite{memorization1,memorization2}), as explained in this paragraph, to solve {\sc $k$-WVC} faster. Following this memorization technique, we store instances $(G',w',k')$ that correspond to nodes in the search tree, along with their solutions. For each node corresponding to a connected graph $G'$, we first look up in the stored instances: if the instance is found, we return its solution; otherwise, we continue running \alg{WVCnoW-Alg}, and when we return to this node, we store the instance along with its solution. Given a graph $G'$ that is not connected, we use an efficient branching rule, given in \cite{vc2005}, and show that its correctness, in our context, is preserved due to on our flexible use of the parameter $k$.}

To benefit from storing solutions for subproblems, nodes should correspond to instances where $G'$ contains at most $c\cdot k'$ vertices, for a small constant $c$.\footnote{If the nodes can correspond to large instances (which includes, in particular, instances where $G'$ contains many vertices), the solution storage can be too large, which does not result in improved running times.} For {\sc VC}, one can simply use the $2t$-vertex kernel, given in \cite{vc2001}, to reduce the size of $G'$. For {\sc WVC}, there is a $2W$-vertex kernel \cite{wvcker2008}, but no $c\cdot t$-vertex kernel exists \cite{wvcker2013}. However, we can obtain a $(\Delta\cdot k'+1)$-vertex kernel for any connected graph $G'$, where $\Delta$ is the maximum degree of a vertex in $G'$, as follows. By the Buss rule \cite{vc1993}, if $G$ contains more than $\Delta k'$ edges, there is no vertex cover of size at most $k'$ (since $k'$ vertices can be the endpoints of at most $\Delta k'$ edges), and thus we can simply return NIL. A connected graph having at most $\Delta k'$ edges has at most $\Delta k'+1$ vertices. Hence, we obtain a $(\Delta k'+1)$-vertex kernel.

Our vertex kernel slightly complicates the application of the refined memorization technique. We apply it only if $\Delta\in\{3,4\}$, since only then we have a kernel that is small enough to justify storing the solutions. While $\Delta\geq 5$, we apply a branching rule whose vector is at least as good as $(1,5)$, whose root is smaller than 1.325. Then, while $\Delta=4$, we apply a branching rule whose vector is at least as good as $(1,4)$, whose root is smaller than 1.3803. Combined with our solution-storage, this phase can be executed in time $O^*(1.363^{k'})$. Finally, while $\Delta=3$, we run \alg{WVCnoW-Alg}, that uses branching rules whose vectors have roots that are only smaller than 1.3954. However, when $\Delta=3$, we compute better vertex kernels. Thus, combined with our solution-storage, this phase can be executed in time $O^*(1.362^{k'})$. The technical details are given in the proof of the following theorem.

\begin{theorem}\label{theorem:wvc2}
{\sc $k$-WVC} can be solved in $O^*(1.363^k)$ time and space. Moreover, {\sc WVC} can be solved in $O^*(1.347^W)$ time and space.
\end{theorem}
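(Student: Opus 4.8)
The plan is to speed up \alg{WVCnoW-Alg} of Lemma \ref{lemma:wvc2} by layering the refined memorization technique of \cite{vc2005} on top of its bounded search tree. The crucial enabling feature is the structural property guaranteed by Lemma \ref{lemma:wvc2}: every node of the search tree corresponds to an induced subgraph $G'=(V',E')$ of the input with weight function $w'=w|_{V'}$. Hence a node is determined by the pair $(V',k')$, so I can use $V'$ together with $k'$ (of which there are only polynomially many values, since $k'\leq k\leq|V|$) as a lookup key. I would run \alg{WVCnoW-Alg} and, at every node whose graph $G'$ is \emph{connected}, first consult a table: on a hit return the stored cover, on a miss continue the recursion and store the computed cover upon returning. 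For a \emph{disconnected} $G'$ I would instead invoke the dedicated component-splitting branching rule of \cite{vc2005}; its correctness carries over here because, exactly as in Rules \ref{rule:deg1} and \ref{rule:vctriangle}, our semantics only requires returning a cover no heavier than any cover of size at most $k'$, a guarantee preserved even when $k'$ is decreased by less than its true decrease.

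Next I would bound the universe over which subproblems are memorized, since the savings materialize only when $|V'|$ is $O(k')$. For a connected $G'$ I would apply the Buss rule \cite{vc1993}: if $G'$ has more than $\Delta k'$ edges it has no cover of size at most $k'$, so I return NIL, and any surviving connected $G'$ has at most $\Delta k'+1$ vertices, i.e.\ a $(\Delta k'+1)$-vertex kernel. I would then process the tree in phases by the maximum degree $\Delta$. While $\Delta\geq5$, a branching rule with vector at least as good as $(1,5)$ (root below $1.325$) is already fast enough that no memorization is needed. While $\Delta=4$, the kernel gives a universe of at most $4k'+1$ vertices and I branch with a vector at least as good as $(1,4)$ (root below $1.3803$). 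While $\Delta=3$, I run \alg{WVCnoW-Alg} itself (root below $1.3954$), but the universe is bounded by the sharper kernels available for subcubic graphs.

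The heart of the argument is the memorization accounting. Writing $\beta k'$ for the kernel bound in a phase, I would split the search tree at the frontier where a subproblem first has at most $\alpha k'$ surviving vertices: the upper part is charged to the branching recursion, while every distinct subproblem below the frontier is solved once and there are at most ${\beta k' \choose \alpha k'}$ of them (a subset of the $\beta k'$-vertex universe of size at most $\alpha k'$), up to a polynomial factor for the values of $k'$. Choosing $\alpha$ to balance the upper branching cost against ${\beta k' \choose \alpha k'}$ yields the per-phase bounds: below $1.325$ for $\Delta\geq5$, $O^*(1.363^{k'})$ for $\Delta=4$ (with $\beta=4$), and $O^*(1.362^{k'})$ for $\Delta=3$ (the smaller subcubic kernel compensating for the larger branching root). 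The $\Delta=4$ phase dominates, giving {\sc $k$-WVC} in $O^*(1.363^k)$ time and space, and the result for {\sc WVC} then follows by the iteration over $k$ from Section \ref{section:technique}.

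For the $O^*(1.347^W)$ bound on {\sc WVC} parameterized by $W$, I would exploit that the standard $2W$-vertex kernel \cite{wvcker2008} is available here (it is ruled out only for the parameter $t$). Since every vertex has weight at least $1$, a cover of size $k$ has weight at least $k$, so $k\leq W$ and running \alg{WVCnoW-Alg} with the size bound $k=\lfloor W\rfloor$ is correct. Applying the same memorization over the $2W$-vertex universe---now with $\beta=2$ uniformly, rather than the degree-dependent $\beta$ forced by the parameter $k$---and rebalancing as above gives $O^*(1.347^W)$. The main obstacle I anticipate is the accounting itself: verifying that the frontier-balancing with the \emph{degree-dependent} kernel sizes makes $\Delta=4$ the bottleneck at exactly base $1.363$ (and the $\beta=2$ universe yield $1.347$), while simultaneously checking that reusing stored solutions stays sound under the nonstandard, ``flexible'' decrements of $k$---in particular that the \cite{vc2005} disconnected-component rule never undercounts $k'$ in a way that would let us return a cover too large to certify a genuine small solution.
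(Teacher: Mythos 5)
Your architecture matches the paper's proof almost exactly: memorize solutions of connected subproblems (which is legitimate precisely because Lemma \ref{lemma:wvc2} guarantees every node is an induced subgraph of $G$ with the restricted weight function), handle disconnected graphs with the component-splitting branching rule of \cite{vc2005} justified by the flexible semantics of $k$, obtain a $(\Delta k'+1)$-vertex kernel from the Buss rule, phase the search tree by maximum degree with branching vectors $(1,5)$, $(1,4)$, and the \alg{WVCnoW-Alg} rules, and for the $W$-parameterization substitute the $2W$-vertex kernel of \cite{wvcker2008} and run with $k=\lfloor W\rfloor$. All of that is the paper's argument.

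The genuine gap is in the subproblem count, which you yourself flag as the anticipated obstacle. You bound the number of distinct memorized subproblems by $\binom{\beta k'}{\alpha k'}$, ``a subset of the $\beta k'$-vertex universe.'' First, there is no single $\beta k'$-vertex universe: the Buss kernel is applied per connected subproblem, and different branches of the tree kernelize to different subsets of $V$; you would need to argue that a one-time application at the (connected) root already bounds $|V|$ by $\Delta k+1$ so that all later subproblems are subsets of that fixed set. Second, and more damagingly, even granting such a universe the binomial count is quantitatively too weak to reach the stated constants: balancing $1.3803^{k-k'}$ against $\binom{4k}{4k'}\approx 2^{4kH(k'/k)}$ gives roughly $O^*(1.37^k)$, and the analogous computation over the $2W$-vertex universe gives roughly $O^*(1.364^W)$, not $1.363^k$ and $1.347^W$. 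The paper instead invokes Robson's bound \cite{memorization1} that the number of \emph{connected} induced subgraphs on $\ell$ vertices of a graph of maximum degree $\Delta'$ is $O^*\bigl((\tfrac{(\Delta'-1)^{\Delta'-1}}{(\Delta'-2)^{\Delta'-2}})^{\ell}\bigr)$ --- i.e., $O^*(6.75^{\ell})$ for $\Delta'=4$ and $O^*(4^{\ell})$ for $\Delta'=3$ --- which is much smaller than the number of arbitrary subsets in the relevant range; only with this connected-subgraph count (legitimate because memorization is performed only at connected nodes) does the balancing $\max_{k'}\min\{1.3803^{k-k'},R_4(4k'+1)\}$, etc., yield $O^*(1.363^k)$ and $O^*(1.347^W)$. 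Without replacing your binomial count by Robson's bound, the proposal proves a weaker theorem than the one claimed.
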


By the discussion in Section \ref{section:technique}, this implies the desired result:

\begin{cor}
{\sc WVC} can be solved in $O^*(1.363^s)$ time and space.
\end{cor}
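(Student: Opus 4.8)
The plan is to prove the substantive statement, Theorem~\ref{theorem:wvc2}; the corollary then follows at once from the iteration scheme of Section~\ref{section:technique}, which converts any $O^*(1.363^k)$ algorithm for \textsc{$k$-WVC} into an $O^*(1.363^s)$ algorithm for \textsc{WVC}. The engine is the branching algorithm \alg{WVCnoW-Alg} of Lemma~\ref{lemma:wvc2}, accelerated by the refined memorization technique of \cite{vc2005}: I maintain a dictionary of already-solved subinstances, and at each node I first look up its label before branching, storing the computed cover on the way back. This is sound precisely because Lemma~\ref{lemma:wvc2} guarantees that every node corresponds to an instance $(G[V'],w|_{V'},k')$ determined by the pair $(V',k')$ alone, so identical induced subgraphs are recognized and reused.

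First I would shrink the subinstances via the Buss rule \cite{vc1993}: in a connected graph of maximum degree $\Delta$ admitting a cover of size at most $k'$, the cover vertices are incident to at most $\Delta k'$ edges, so either $|E|\le \Delta k'$ — whence $|V|\le \Delta k'+1$ by connectivity — or we may return NIL. This yields a $(\Delta k'+1)$-vertex kernel on each connected instance. Since the kernel only helps once $\Delta$ is bounded, I peel off high-degree vertices. While $\Delta\ge 5$, I branch on a maximum-degree vertex (take $v$, or take $N(v)$), a vector at least as good as $(1,5)$ with root below $1.325$, and no memorization is needed here. While $\Delta=4$, the same branching gives $(1,4)$ with root below $1.3803$, which I combine with memorization: the analysis of \cite{vc2005} balances the branching cost of descending to ``fringe'' subinstances of at most $\lambda\Delta k'$ vertices against the number $\binom{\Delta k'}{\le \lambda\Delta k'}$ of distinct connected induced subgraphs of the kernel, and optimizing $\lambda$ drives this phase down to $O^*(1.363^{k'})$. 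Finally, while $\Delta=3$, I run \alg{WVCnoW-Alg} (roots only below $1.3954$) together with memorization, where cubic graphs admit strictly smaller kernels, so the same balance yields $O^*(1.362^{k'})$. Taking the worst of the phases gives $O^*(1.363^k)$ time, and the stored solutions make the space exponential.

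The step I expect to be the main obstacle is correctness, not the time analysis, and it has two facets. First, because \alg{WVCnoW-Alg} uses $k$ flexibly — decreasing $k'$ by less than the true drop in cover size and possibly returning covers larger than $k'$ — the value memoized for $(V',k')$ is only weight-optimal among covers of size at most $k'$, not itself size-bounded; I must verify that every dictionary hit still meets the \textsc{$k$-WVCnoW} contract at the requesting node, i.e.\ that reusing a stored cover never lets its weight exceed that of some lighter cover of size at most $k'$. Second, memorization only pays off while the instance stays connected, so disconnected instances must be routed through the dedicated branching rule of \cite{vc2005}; I would re-derive its correctness in our weighted, parameterized setting, the key point being that the flexible use of $k$ frees us from having to guess how a size budget splits across components — which is exactly what standard parameterized disconnection handling cannot avoid — while still preserving weight-optimality of the returned cover.

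For the ``moreover'' part, I would run the analogous algorithm that tracks the weight parameter $W$ rather than $k$. The decisive change is the kernel: under the $W$-parameterization the degree-independent $2W$-vertex kernel of \cite{wvcker2008} is available (no comparable $c\cdot t$-vertex kernel exists, by \cite{wvcker2013}), so the vertex-to-parameter ratio is $2$ rather than the $4$ of the $\Delta=4$ kernel above, and no degree-reduction phase is needed. Since element weights are at least $1$, every branching step decreases $W$ by at least as much as it decreases size, so the branching bases measured in $W$ are no worse than those in $k$; feeding the smaller ratio into the same memorization balance improves the base and yields the $O^*(1.347^W)$ time and space bound for \textsc{WVC}.
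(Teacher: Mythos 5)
Your proposal follows the paper's own proof essentially step for step: reduce to Theorem~\ref{theorem:wvc2} and the iteration scheme of Section~\ref{section:technique}, then obtain the $O^*(1.363^k)$ bound by applying the refined memorization of \cite{vc2005} to \alg{WVCnoW-Alg}, using the Buss-rule $(\Delta k'+1)$-vertex kernel on connected instances, the $(1,5)$ and $(1,4)$ branchings for $\Delta\ge 5$ and $\Delta=4$, the dedicated disconnection rule whose correctness rests on the flexible use of $k$, and the $2W$-vertex kernel of \cite{wvcker2008} for the $O^*(1.347^W)$ variant. The only cosmetic difference is that you phrase the memorization balance via a fringe parameter $\lambda$ while the paper states it directly through Robson's bound $R_{\Delta'}(\ell)$ on connected induced subgraphs; these are the same calculation.
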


\begin{proof}[Theorem~\ref{theorem:wvc2}]
We start by considering the first, more interesting part of the theorem. We apply the refined memorization technique on \alg{WVCnoW-Alg}, as explained in our overview of this proof, to obtain an algorithm that we call \alg{WVCnoW-Alg2}. Clearly, it is enough to prove that \alg{WVCnoW-Alg2} solves {\sc $k$-WVCnoW} in $O^*(1.363^k)$ time and space, since to solve {\sc $k$-WVC}, we simply need to run \alg{WVCnoW-Alg2} and return its solution iff its weight is at most $W$.

Let ${\cal I}=(G=(V,E),w: V\rightarrow \mathbb{R}^{\geq 1},k)$ be the given instance of {\sc $k$-WVCnoW}, and let ${\cal S}$ be our solution-storage, which is initially empty. Given a node in the search tree, let ${\cal I}'=(G'=(V',E'),w': V'\rightarrow \mathbb{R}^{\geq 1},k')$ denote its corresponding instance. Let $\Delta'$ be the maximum degree of a vertex in $G'$. We next present our reduction and branching rules.

\setcounter{reducerule}{0}

\begin{reducerule}
{\normalfont[${\cal I}'\in {\cal S}$]
Return the solution associated with ${\cal I}'$ (that is stored in ${\cal S}$).}
\end{reducerule}

{\noindent We exploit our solution-storage; thus, we do not solve the same instance more than once.}

\begin{reducerule}\label{rule:kernel}
{\normalfont [$|E'|> \Delta'k'$]
Return NIL.}
\end{reducerule}

{\noindent If $|E'|> \Delta'k'$, there is no vertex cover of size at most $k'$, and thus we return NIL. Recall that we explained the intuition behind the necessity of this rule in the overview given in this section.}

\begin{branchrule}\label{rule:memdeg5}
{\normalfont [$\Delta' \geq 5$]
Let $v$ be a vertex of maximum degree.
\begin{enumerate}
\item Return \alg{WVCnoW-Alg2}($G[V'\setminus \{v\}],w',k'-1$) $\cup\ \{v\}$.
\item Return \alg{WVCnoW-Alg2}($G[V'\setminus N(v)],w',k'-|N(v)|$) $\cup\ N(v)$.\footnote{Recall that this standard form of presentation of a branching rule was used and explained for \alg{WVCnoW-Alg} (see the proof of Lemma \ref{lemma:wvc2})}
\end{enumerate}}
\end{branchrule}

{\noindent We perform a simple branching that handles vertices of degree at least 5. We get a branching vector that is at least as good as $(1,5)$, whose root is smaller than 1.325.}

\begin{reducerule}\label{rule:easycom}
{\normalfont [There is a connected component of at most one vertex of degree at least 3, or of at most 10 vertices]
Choose the first applicable rule of \alg{WVCnoW-Alg}.}
\end{reducerule}

\begin{branchrule}\label{rule:notcontrule}
{\normalfont [$G'$ is not a connected]
Let $H_1,H_2,\ldots,H_{\ell}$ denote the connected components of $G'$. Also, let $\widetilde{k}=k'-3(\ell-1)$. If $\widetilde{k}<3$, return NIL. Otherwise perform the following branching, returning the {\em union} of its solutions.
\begin{itemize}
\item Return \alg{WVCnoW-Alg2}($H_1,w',\widetilde{k}$).
\item Return \alg{WVCnoW-Alg2}($H_2,w',\widetilde{k}$).
\item $\ldots$
\item Return \alg{WVCnoW-Alg2}($H_{\ell},w',\widetilde{k}$).
\end{itemize}}
\end{branchrule}

{\noindent Since previous rules did not apply, the maximum degree of $H_i$, for any $1\leq i\leq \ell$, is 3 or 4 (by Rules \ref{rule:memdeg5} and \ref{rule:easycom}), and it contains at least 11 vertices (by Rule \ref{rule:easycom}), at thus at least 10 edges. To cover at least 10 edges in $H_i$, we need at least $\lceil 10/4\rceil=3$ vertices. Thus, any vertex cover of $H_i$ contains at least 3 vertices; thus, since any vertex cover of $G'$ contains a vertex cover of each $H_i$, we get that any vertex cover of size at most $k'$ of $G'$, including such vertex cover of minimum weight, contains a vertex cover of size at most $\widetilde{k}$ of $H_i$. Therefore, the branching performed in this rule is correct. In particular, note that is relies on our flexible use of the parameter $k$. For example, if $k'=20$, $\ell=5$ and thus $\widetilde{k}=8$, and the size of the vertex cover we compute for each $H_i$ is $7$, then the total size of the vertex cover we compute for $G'$ is $5\cdot 7 > k'$. We get the branching vector $(3(\ell-1),3(\ell-1),\ldots,3(\ell-1))$, where $3(\ell-1)$ appears $\ell$ times, whose root is smaller than 1.325 (indeed, for $\ell\geq 2$, $1.325 > \ell^{\frac{1}{3(\ell-1)}}$).}

\begin{branchrule}
{\normalfont [$\Delta' \geq 4$]
Let $v$ be a vertex of maximum degree.
\begin{enumerate}
\item Return \alg{WVCnoW-Alg2}($G[V'\setminus \{v\}],w',k'-1$) $\cup\ \{v\}$.
\item Return \alg{WVCnoW-Alg2}($G[V'\setminus N(v)],w',k'-4$) $\cup\ N(v)$.\end{enumerate}
Update the solution-storage.}
\end{branchrule}

{\noindent We perform a simple branching that handles degree-4 vertices. We get the branching vector $(1,4)$, whose root is smaller than 1.3803.}

\begin{reducebranchrule}
{\normalfont [Remaining case]
Choose the first applicable rule of \alg{WVCnoW-Alg}, then update the solution-storage accordingly.}
\end{reducebranchrule}

{\noindent By Lemma~\ref{lemma:wvc2}, we get a branching vector whose root is smaller than 1.3954.}

\bigskip

\myparagraph{Running Time} Let $R_{\Delta'}(\ell)$ denote the number of connected induced subgraphs of $G$ that contain $\ell$ vertices and have degree at most $\Delta'$. Robson \cite{memorization1} proved that $R_{\Delta'}(\ell) = \displaystyle{O^*((\frac{(\Delta'-1)^{\Delta'-1}}{(\Delta'-2)^{\Delta'-2}})^{\ell})}$. Thus, due to Rule \ref{rule:kernel}, we get that \alg{WVCnoW-Alg2} runs in time bounded by $O^*$ of

\[
\displaystyle{ \max_{1\leq k'\leq k} \!\left\{ 1.325^{k-k'}\!\!, \min\{1.3803^{k-k'}\!\!,R_4(4k'\!+\!1)\}, \min\{1.3954^{k-k'}\!\!,R_3(3k'\!+\!1)\} \!\right\} }
\]
\[= O^*(1.363^k)\]

\smallskip

\myparagraph{An $O^*(1.347^W)$ Time and Space Algorithm for {\sc WVC}} Let $(G=(V,E),w: V\rightarrow \mathbb{R}^{\geq 1},W)$ be the given instance of {\sc WVC}. Since a vertex cover of weight at most $W$ contains at most $\lfloor W\rfloor$ vertices, we can solve this instance by running \alg{WVCnoW-Alg2} on $(G=(V,E),w: V\rightarrow \mathbb{R}^{\geq 1},\lfloor W\rfloor)$, and returning its solution iff its weight is at most $W$. To obtain a running time bounded by $O^*(1.347^W)$, rather than $O^*(1.363^W)$, we replace Reduction Rule \ref{rule:kernel} by the following rule:

\setcounter{reducerule}{1}

\begin{reducerule}
{\normalfont [$|V'|> 2W'$]
Compute a subset $U\subseteq V$ of at most $2W^*$ vertices such that the weight of a minimum-weight vertex cover in both $G'$ and $G'[U]$ is $W^*$, as shown in the paper \cite{wvcker2008}. If $|U|>2W'$, return NIL; else, return \alg{WVCnoW-Alg2}$(G'[U],w',W')$.}
\end{reducerule}

{\noindent Then, we get that \alg{WVCnoW-Alg2} runs in time bounded by $O^*$ of}

\[
\displaystyle{ \max_{1\leq W'\leq \lfloor W\rfloor} \!\!\left\{\! 1.325^{\lfloor W\rfloor\!-\!W'}\!\!\!, \min\{1.3803^{\lfloor W\rfloor\!-\!W'}\!\!\!,R_4(2W')\}, \min\{1.3954^{\lfloor W\rfloor-W'}\!\!\!,R_3(2W')\} \!\right\} }
\]
\[= O^*(1.347^W)\]\qed
\end{proof}

\subsection{{\sc Restricted $k$-WVC} on Bipartite Graphs}\label{section:wvcbip}

Consider the following variant of {\sc WVC}, in which the parameter $k$ is used in the restricted manner described in Section \ref{section:technique}.

\myparagraph{Restricted $k$-WVC} Given an instance of {\sc WVC}, along with a parameter $k\in\mathbb{N}$, find a vertex cover of weight at most $W$ and
 size at most $k$. If such a vertex cover does not exist, return NIL.

\smallskip

The next result shows an advantage in solving at each iteration of an algorithm {\sc $k$-WVC} rather than {\sc Restricted $k$-WVC}. Clearly, {\sc $k$-WVC} is solvable in polynomial time on bipartite graphs, since {\sc WVC} is easy to solve on these graphs (see, e.g., \cite{Sch02}). For {\sc Restricted $k$-WVC}, however, this is not true:

\begin{theorem}\label{theorem:restrictedwvc}
{\sc Restricted $k$-WVC} on bipartite graphs is not in {\normalfont P} unless {\normalfont P=NP}.
\end{theorem}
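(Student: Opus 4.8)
The plan is to prove NP-hardness of {\sc Restricted $k$-WVC} on bipartite graphs by a reduction from a standard NP-complete problem, exploiting the tension between the two simultaneous constraints: weight at most $W$ \emph{and} size at most $k$. The key insight is that although minimum-weight vertex cover is polynomial on bipartite graphs (via König's theorem / matching, as noted in the excerpt), asking for a cover that is simultaneously light and small is a genuinely harder constrained optimization problem. I would reduce from a problem that naturally encodes a cardinality-versus-cost trade-off; the most direct candidate is a knapsack-type or partition-type problem, but since we must land inside bipartite graphs, the cleanest route is a reduction from {\sc Partition} (or {\sc Subset Sum}), which is NP-complete and is precisely about selecting a subset meeting a numeric target.

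\medskip
\noindent\emph{First step.} I would build a bipartite instance from a {\sc Partition} instance $a_1,\dots,a_n$ with $\sum_i a_i = 2T$. The idea is to create a gadget graph in which \emph{every} minimum-weight vertex cover has the same weight, but the covers realizing (near-)minimum weight come in a controlled family indexed by subsets of the $a_i$'s, with the cover size determined by which elements are chosen. Concretely, for each item $i$ I would introduce a small bipartite gadget (e.g.\ an edge or a path) whose two ``sides'' offer a choice: covering via one vertex costs a weight proportional to $a_i$ but contributes one unit of size, while covering via the other side costs a different fixed weight but a different size count. Tuning the weights so that the total weight budget $W$ forces exactly a partition-feasible selection, and setting $k$ to the cardinality that such a balanced selection must have, makes a cover of weight $\le W$ and size $\le k$ exist if and only if the {\sc Partition} instance is a yes-instance.

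\medskip
\noindent\emph{Key steps in order.} (1) Describe the bipartite gadget per item and verify bipartiteness of the whole construction (a disjoint or lightly-connected union of bipartite gadgets stays bipartite). (2) Characterize the structure of \emph{all} vertex covers of weight at most $W$, showing they decompose across gadgets into local choices. (3) Set up the arithmetic so that the weight constraint $w(U)\le W$ translates exactly into $\sum_{i\in S} a_i = T$ for the selected set $S$, and the size constraint $|U|\le k$ is tight precisely when the weight constraint is met with equality. (4) Prove both directions of the reduction: a partition yields a light-and-small cover, and conversely any cover meeting both budgets yields a partition. (5) Observe the reduction is polynomial and the arithmetic (weights $a_i$) is polynomially bounded, so this is a genuine polynomial-time many-one reduction establishing that a polynomial algorithm for {\sc Restricted $k$-WVC} on bipartite graphs would place {\sc Partition} (hence an NP-complete problem) in {\normalfont P}.

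\medskip
\noindent\emph{Main obstacle.} The hard part will be calibrating the weights and the parameters $W$ and $k$ so that the size constraint becomes binding \emph{exactly} at the weight optimum, i.e.\ so that feasibility of \emph{both} budgets simultaneously is equivalent to the partition condition. On a bipartite graph the minimum-weight cover alone is always efficiently attainable, so the whole difficulty must be loaded into the interaction of the two budgets; I must ensure there is no ``cheap escape'' cover that satisfies weight and size by exploiting some unintended covering pattern. Ruling out such spurious covers—via a careful case analysis of how each gadget can be covered—is where the technical care lies, and it is also where the paper's remark (that the \emph{restricted} form becomes NP-hard while unrestricted {\sc $k$-WVC} stays polynomial on these graphs) gets its force, since it pinpoints the rigidity of the joint constraint as the source of hardness.
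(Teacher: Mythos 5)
Your plan diverges from the paper's --- the paper reduces not from a number problem but from \textsc{Constrained Vertex Cover on Bipartite Graphs} (Min-CVCB: find a cover with at most $k_L$ vertices from $L$ and at most $k_R$ from $R$), attaching $n^2$ pendant vertices of weight $1$ to each $L$-vertex and giving $L$-vertices weight $n^{10}$, so that the weight budget forces at most $k_L$ choices from $L$ while the size budget (inflated by the pendants) forces at least $k_L$. The hardness is thus inherited from the \emph{graph structure} of an already-hard bipartite covering problem, not manufactured from arithmetic. This matters, because your route has a genuine gap: if your construction is a disjoint (or otherwise independently decomposable) union of per-item gadgets, the resulting instances of \textsc{Restricted $k$-WVC} are solvable in polynomial time. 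Each constant-size gadget has only polynomially many Pareto-optimal $(\text{weight},\text{size})$ cover signatures, and since the total size of any cover is at most $|V|$ (a polynomially bounded, unary quantity), one can run a dynamic program over components indexed by ``size used so far,'' computing the minimum weight achievable for each size. So no calibration of weights can make such an instance hard; the whole difficulty must live in global structure that your gadgets, as described, do not have.

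There is a second, related problem with the arithmetic you hope to set up. To encode \textsc{Partition} you need the two-sided condition $\sum_{i\in S}a_i = T$, but the weight budget gives only one inequality ($\leq W$), and the size budget can only carry polynomially bounded integers (it counts vertices), so it cannot carry the binary-encoded $a_i$'s needed to supply the opposite inequality. With unit size increments per gadget, what you actually encode is ``does there exist $S$ with $|S|\geq m$ and $\sum_{i\in S}a_i\leq T$'' (or a symmetric variant), which is solved greedily by taking the $m$ smallest items. Your own worry about a ``cheap escape'' cover is exactly this phenomenon, and it is not a technicality to be handled by case analysis --- it defeats the reduction. (Your closing claim that the weights $a_i$ are polynomially bounded is also false for \textsc{Partition}; if they were, the problem would be in P. This is not needed for a polynomial-time reduction, but it signals that the weak NP-hardness of your source problem has not been confronted.) To repair the argument you would need gadgets that interact through shared vertices in a globally constrained way, which is precisely what reducing from Min-CVCB provides.
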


\begin{proof}
We use a reduction from a variant of the following problem.

\myparagraph{Constrained Vertex Cover on Bipartite Graphs (Min-CVCB)} Given a bipartite graph $G=(L,R,E)$, and parameters $k_L\leq |L|$ and $k_R\leq |R|$, decide whether $G$ has a vertex cover consisting of at most $k_L$ vertices from $L$ and at most $k_R$ vertices from $R$.
\smallskip

Consider a variant of {\sc Min-CVCB}, that we call {\sc Min-CVCB$^*$}, where we need to decide whether $G$ has a vertex cover that consists of {\em exactly} $k_L$ vertices from $L$ and at most $k_R$ vertices from $R$. Since {\sc Min-CVCB} is is not in {\normalfont P} unless {\normalfont P=NP} \cite{constrainedvc2}, the same holds for {\sc Min-CVCB$^*$}.

Let $(G=(L,R,E), k_L, k_R)$ be an instance of {\sc Min-CVCB$^*$}. Let $n=|L\cup R|$. We define an instance $(G'=(L',R',E'),w':L'\cup R'\rightarrow \mathbb{R}^{\geq 1},W',k')$ of {\sc Restricted $k$-WVC} on bipartite graphs as follows.

\begin{itemize}
\item $L'=L$, and $\displaystyle{R'=R\cup (\bigcup_{v\in L}\{x^v_1,x^v_2,\ldots,x^v_{n^2}\})}$.

\item $\displaystyle{E'=E\cup (\bigcup_{v\in L}\{\{v,x^v_1\},\{v,x^v_2\},\ldots,\{v,x^v_{n^2}\}\})}$.

\item ($\forall v\in L': w'(v)=n^{10}$), and ($\forall v\in R': w'(v)=1$).

\smallskip
\item $W' = n^{10}k_L + k_R + n^2(|L|-k_L)$, and $k' = k_L + k_R + n^2(|L|-k_L)$.
\end{itemize}

Clearly, the above reduction is polynomial. Note that an illustrated example of the construction of $L',R'$ and $E'$ is given in Fig.~\ref{fig:reduction} (in this appendix). We next show that $(G,k_L,k_R)$ is a yes-instance iff $(G',w',W',k')$ is a yes-instance.

For one direction, suppose that $G$ has a vertex cover $S$ that consists of exactly $k_L$ vertices from $L$ and at most $k_R$ vertices from $R$. Let $S_L = S\cap L, S_R = S\cap R,$ and $\widetilde{S} = \{v\in R'\setminus R: N(v)\cap S_L=\emptyset\}$. Define $S'=S\cup\widetilde{S}$. Note that $S'$ is a vertex cover in $G'$, and $w'(S') = w'(S_L) + w'(S_R) + w'(\widetilde{S}) = n^{10}|S_L| + |S_R| +  n^2(|L|-|S_L|)\leq n^{10}k_L + k_R + n^2(|L|-k_L) = W'$. Moreover, $|S'| = |S_L| + |S_R| + |\widetilde{S}| \leq k_L + k_R + n^2(|L|-k_L) = k'$. Thus, $(G',w',W',k')$ is a yes-instance.

For the other direction, suppose that $G'$ has a vertex cover $S'$ such that $w'(S')\leq W'$ and $|S'|\leq k'$. Assume w.l.o.g that $S'$ is a minimal vertex cover. Let $S_L = S'\cap L, S_R = S'\cap R,$ and $\widetilde{S} = S'\setminus(L\cup R)$. Clearly, $|\widetilde{S}|=n^2(|L|-|S_L|)$. Since $n^{10}|S_L|\leq n^{10}|S_L| + |S_R| + n^2(|L|-|S_L|) = w'(S_L) + w'(S_R) + w'(\widetilde{S}) = w'(S') \leq W' = n^{10}k_L + k_R + n^2(|L|-k_L)\leq n^{10}k_L + n + n^3$, we have that $|S_L|\leq k_L$. Moreover, $|S_L| + |S_R| + n^2(|L|-|S_L|) = |S_L| + |S_R| + |\widetilde{S}| =
 |S'| \leq k' = k_L + k_R + n^2(|L|-k_L)$. Thus, $|S_R| + (n^2-1)k_L \leq k_R + (n^2-1)|S_L|$. This implies that $k_L\leq |S_L|$. Thus, we conclude that $|S_L|=k_L$, which, by the above argument, further implies that $|S_R|\leq k_R$. Define $S=S_L\cup S_R$. Since $S$ is a vertex cover in $G$, we get that $(G,k_L,k_R)$ is a yes-instance.\qed
\end{proof}

\begin{figure}[!ht]\centering
\frame{\includegraphics[scale=0.8]{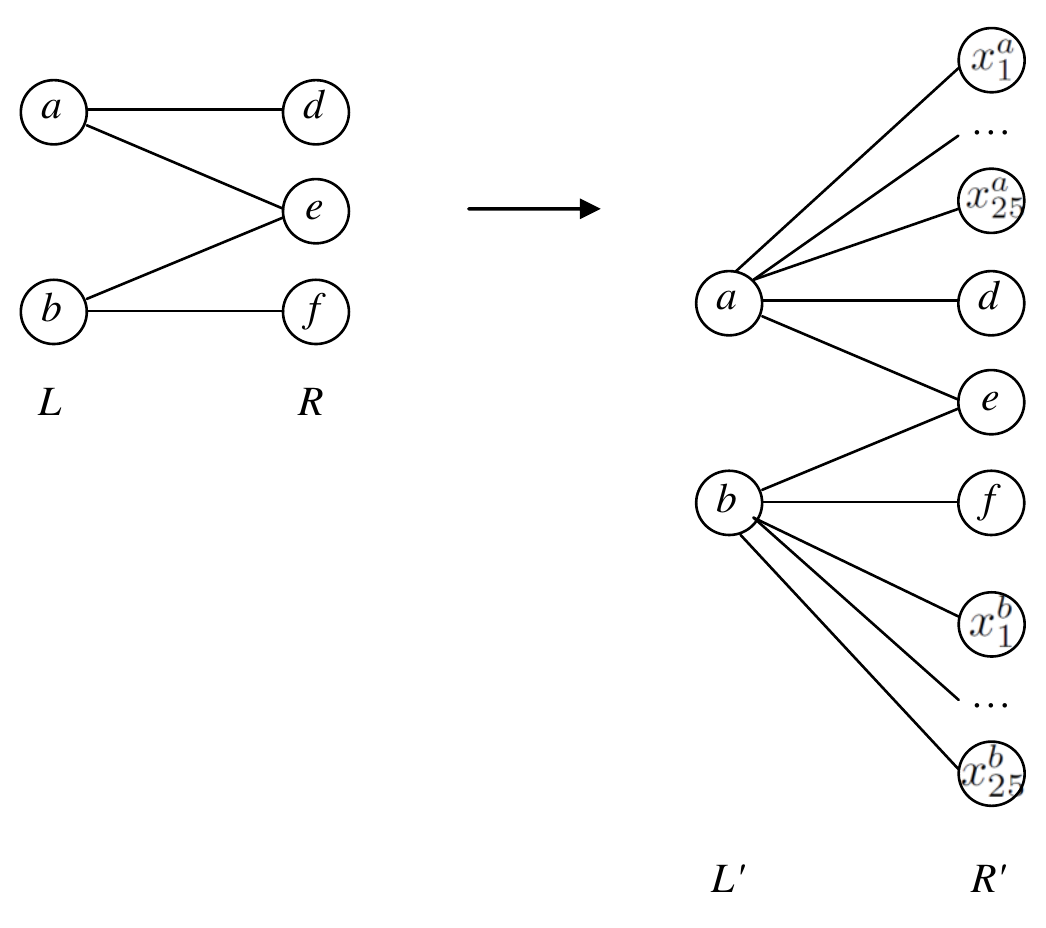}}
\caption{An example of the construction in the proof of Theorem \ref{theorem:restrictedwvc}.}\label{fig:reduction}
\end{figure}
\subsection{{\sc WVC} Parameterized by the Size of a Minimum VC}\label{section:wvc3}

In this appendix, we observe that the algorithm of  \cite{vc2007} for {\sc VC} can be modified to solve {\sc WVC} in $O^*(1.443^t)$ time and polynomial space. This algorithm is based on the measure and conquer technique \cite{newdfsbook}. In this variant of the bounded search tree technique, one uses a non-standard measure to analyze the branching vectors associated with the branching rules of the algorithm.\footnote{The algorithm of \cite{vc2007} is also based on the iterative compression technique (see \cite{newdfsbook}), which we replace by a simple call to the currently best algorithm for {\sc VC}, given~in~\cite{vc2010}.} Apart from this observation, our contribution also lies in introducing a preprocessing phase and new rules. Combined with a corresponding refined analysis of the algorithm, this results in a faster running time of $O^*(1.415^t)$ on graphs of bounded degree 3. Interestingly, given a vertex cover $U$ of the input graph $G=(V,E)$, the rules of the algorithm in \cite{vc2007} rely only on the structure of $G[U]$, while our improvement relies on the {\em relation} between $G[U]$ and $G[V\setminus U]$.

We develop below an algorithm, \alg{WVC*-Alg}, that solves the following variant of {\sc WVC}.

\myparagraph{WVC*} Given an instance of {\sc WVC}, along with a minimum(-size) vertex cover $U$, return a vertex cover of weight at most $W$ (if one exists).

\smallskip

For this algorithm, we prove the following.

\begin{theorem}
\alg{WVC*-Alg} solves {\sc WVC*} in $O^*(1.443^t)$ time and polynomial space. On graphs of bounded degree 3, \alg{WVC*-Alg} solves {\sc WVC*} in $O^*(1.415^t)$ time and polynomial space.
\end{theorem}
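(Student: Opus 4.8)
The plan is to localize the entire search on the given minimum vertex cover $U$ (of size $t$) and then run a measure-and-conquer branching on $G[U]$ whose branching vectors all have root at most $1.443$. First I would set up the reduction. Since $U$ is a vertex cover, $V\setminus U$ is independent, so every edge of $G$ lies inside $U$ or crosses between $U$ and $V\setminus U$. I describe any candidate cover $C$ by the set $A=U\setminus C$ it omits from $U$: for $C$ to be a vertex cover, $A$ must be independent in $G[U]$, and every neighbor of $A$ lying in $V\setminus U$ must belong to $C$. Taking $C$ minimal then forces $C=(U\setminus A)\cup(N(A)\cap(V\setminus U))$, so that $w(C)=w(U)-w(A)+w(N(A)\cap(V\setminus U))$. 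Hence {\sc WVC*} is equivalent to choosing an independent set $A\subseteq U$ minimizing this expression. The external term $w(N(A)\cap(V\setminus U))$ is submodular in $A$, because external neighbors may be shared between several vertices of $U$; tracking it correctly is exactly the ``relation between $G[U]$ and $G[V\setminus U]$'' highlighted in the text.

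Next I would describe the algorithm, adapting \cite{vc2007}. Branch on a vertex $u\in U$: in one branch place $u\in C$; in the other place $u\notin C$, which forces every $G[U]$-neighbor of $u$ into $C$ and requires paying the weights of the still-unpaid external neighbors of $u$. Throughout, I carry the partial cover and zero out the weights of external vertices already placed in $C$, so the submodular external cost is never overcharged. Before branching I apply weighted reduction rules: eliminate degree-$\le 1$ vertices, solve components of $G[U]$ that are paths or cycles by a polynomial weighted dynamic program, and apply domination and low-degree simplifications, each phrased as its unweighted counterpart guarded by the appropriate weight inequality. Crucially, the measure is a function of $U$ alone --- external vertices never enter it --- so the tree size, and hence the measure-and-conquer weight optimization yielding root at most $1.443$, is inherited from the analysis of \cite{vc2007} with only the weighted bookkeeping added.

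To complete the $O^*(1.443^t)$ bound for {\sc WVC} itself, I would replace the iterative-compression step of \cite{vc2007} by a single call to the $O^*(1.274^t)$ algorithm of \cite{vc2010} to obtain a minimum vertex cover $U$, and then run \alg{WVC*-Alg}; since $1.274<1.443$ this call does not dominate, and the whole procedure is a bounded search tree, hence uses polynomial space. For the degree-$3$ bound I would exploit that each vertex of $U$ then has at most three neighbors in total, so its external neighborhood is small and the submodular external cost is tightly constrained; I would add new rules that branch jointly on $u$ together with its cheap external neighbors and use the $G[U]$--$G[V\setminus U]$ relation to deduce forced choices of further $U$-vertices, obtaining strictly better branching vectors, and re-optimize the measure to push the worst-case root down to $1.415$.

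The hard part will be the weighted measure-and-conquer analysis. Each weighted reduction rule must be shown both correct, in the sense that it never discards an optimal-weight cover, and consistent with how it ``charges'' the measure against the submodular external cost; and over the full case distinction of low-degree configurations one must verify that no branching vector exceeds the target root. The degree-$3$ refinement is the most delicate point, since the improved vectors depend on simultaneously controlling the structure inside $U$ and its interaction with $V\setminus U$, so the real bottleneck is establishing that these coupled configurations genuinely yield vectors with root at most $1.415$.
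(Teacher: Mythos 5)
Your overall route is the paper's: branch on $v\in U$ versus $N(v)$, measure progress only through the structure of $G[U]$ (the paper uses $m(G,U)=|U|-|S(G,U)|$, where $S(G,U)$ is the set of isolated vertices of $G[U]$, so a vertex pays off either by leaving $U$ or by becoming isolated in $G[U]$), terminate when $G[U]$ is edgeless --- at which point $G$ is bipartite and weighted VC is polynomial --- and obtain $U$ up front from the $O^*(1.274^t)$ algorithm in place of iterative compression. Two of your steps, however, do not hold up. First, the reduction rule that ``solves components of $G[U]$ that are paths or cycles by a polynomial weighted dynamic program'' is unsound: distinct components of $G[U]$ are coupled through shared neighbors in $V\setminus U$, and even inside a single path the external cost $w(N(A)\cap(V\setminus U))$ is non-local, since one external vertex may be adjacent to far-apart path vertices, so a sweep DP has no polynomial state (indeed, one can encode general weighted VC into instances where $G[U]$ is a perfect matching). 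The paper never claims such a rule; it handles leaves, long cycles and triangles of $G[U]$ by explicit branching rules with vectors $(2,3)$, $(2,2)$, $(1,4)$, $(3,4,3)$ and $(3,3,3)$, and it is the triangle vector $(3,3,3)$ that pins the constant at $1.443$.

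Second, your degree-$3$ improvement is exactly the part you defer, and it is where the paper's actual new idea lives. The mechanism is a polynomial-time preprocessing (Lemma \ref{lemma:vcpreprocessing}) that replaces $U$ by a \emph{good} minimum vertex cover: every triangle of $G[U]$ whose three vertices lack a common external neighbor is assigned, by a map $f$, to a $P_2$ component of $G[U]$ reached through a shared vertex of $V\setminus U$ (triangles whose vertices do share an external neighbor form a $4$-vertex component of $G$ and are removed by brute force). Branching on that shared external vertex jointly with the two-out-of-three choices in the triangle(s) mapped to a given $P_2$ yields vectors $(5,5,5,3)$ and $(8,\ldots,8,4)$, whose roots are below $1.415$; the remaining rules already meet that bound. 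Without this pairing the $(3,3,3)$ triangle bottleneck survives and no re-optimization of the measure gets you below $1.443$, so ``branch jointly on $u$ with its cheap external neighbors'' needs to be replaced by this concrete construction for the second half of the theorem to go through.
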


Given an instance $(G,w,W)$ of {\sc WVC}, we can find, in $O^*(1.274^t)$ time and polynomial space, a minimum vertex cover \cite{vc2010}. Thus, we obtain the following.

\begin{cor}
{\sc WVC} can be solved in $O^*(1.443^t)$ time and polynomial space. On graphs of bounded degree 3, {\sc WVC} can be solved in $O^*(1.415^t)$ time and polynomial space.
\end{cor}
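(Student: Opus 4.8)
The plan is to adapt the measure-and-conquer vertex cover algorithm of \cite{vc2007}, which takes as input a minimum(-size) vertex cover $U$, to the weighted setting, exploiting that $I = V \setminus U$ is independent. First I would record the structural fact that, since $U$ is a minimum vertex cover, every edge of $G$ has an endpoint in $U$. Consequently any vertex cover $S$ is determined by the set $J = U \setminus S$ of $U$-vertices it omits: $J$ must be independent in $G[U]$ (otherwise an edge inside $U$ is uncovered), every vertex of $N(J) \cap I$ must lie in $S$, and no other vertex of $I$ is needed. Hence a \emph{minimum-weight} cover has the form $S = (U \setminus J) \cup (N(J) \cap I)$ with $w(S) = w(U) - \bigl(w(J) - w(N(J) \cap I)\bigr)$, so the task reduces to maximizing the ``savings'' $w(J) - w(N(J) \cap I)$ over independent sets $J \subseteq U$ of $G[U]$.

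I would then realize \alg{WVC*-Alg} as a bounded search tree over $G[U]$, with the (measure-and-conquer) number of undecided $U$-vertices as the parameter, so that the running time is $O^*(c^{t})$ for $c$ the worst branching root. Branching picks a $U$-vertex $u$ that still has a neighbor inside $U$ and either puts $u \in J$ (forcing its $G[U]$-neighbors into $S$ and charging $N(u) \cap I$) or puts $u \in S$; this yields a vector $(1, 1 + d_U(u))$ in terms of $|U|$, where $d_U(u)$ is the degree of $u$ in $G[U]$. The crucial observation is that once $G[U]$ is edgeless, the residual instance lives on the bipartite graph between the surviving $U$-vertices and $I$, on which minimum-weight vertex cover is solvable in polynomial time (as noted in the excerpt, via \cite{Sch02}); thus no branching on $U$-vertices of $G[U]$-degree $0$ is ever needed. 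Since $(1,1+d_U(u))$ already has root below $1.3803$ when $d_U(u) \geq 3$, the work is to drive the degree-$1$ and degree-$2$ cases down to the target $3^{1/3} \approx 1.443$, which I would do with weight-aware folding/reduction rules and, following \cite{vc2007}, measure-and-conquer weights below $1$ on low-degree $U$-vertices (keeping the measure at most $t$ while improving the effective vectors). Polynomial space is immediate, as this is a plain recursion with no memoization.

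For graphs of bounded degree $3$, I would add a preprocessing phase and rules that inspect the \emph{relation} between $G[U]$ and $G[V \setminus U]$ rather than $G[U]$ alone: since each $U$-vertex has total degree at most $3$, bounding $d_U(u)$ pins down the number of $I$-neighbors, so the savings gained when a vertex enters $J$ (or the inclusions forced when it enters $S$) can be tracked tightly. A refined case analysis that feeds this extra structure back into the measure should bring the worst root down to $1.415$.

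I expect the main obstacle to be the \emph{non-locality} of the penalty $w(N(J) \cap I)$: because distinct $U$-vertices may share $I$-neighbors, the cost charged for a branching decision is not additive, so the folding and reduction rules that are routine for unweighted VC must be re-derived to carry weights and to avoid double-counting shared $I$-vertices. Certifying that every such rule strictly decreases the (measure-and-conquer) parameter while preserving the optimal savings, and then squeezing the claimed $1.415$ out of the $G[U]$-to-$I$ relation in the degree-$3$ case, is where the real work lies; the polynomial-time residual bipartite solve is precisely what spares these rules from having to be exhaustive.
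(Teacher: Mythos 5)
Your proposal follows essentially the same route as the paper: compute a minimum vertex cover $U$ first (the paper uses \cite{vc2010}, in $O^*(1.274^t)$ time and polynomial space), then run a measure-and-conquer branching over $G[U]$ in which isolated $G[U]$-vertices are discounted from the measure, stopping as soon as $G[U]$ is edgeless because the residual weighted instance is bipartite and hence polynomial-time solvable. The pieces you defer as ``the real work'' are exactly what the paper supplies: the general-graph bottleneck is a $(3,3,3)$ branching over the three ways to pick two vertices of a triangle of $G[U]$ (whence the root $3^{1/3}<1.443$), and the degree-3 bound comes from a preprocessing step that replaces $U$ by a ``good'' minimum vertex cover in which every surviving triangle of $G[U]$ shares a neighbor in $V\setminus U$ with both endpoints of some $P_2$-component, so that triangles can be branched jointly with that $P_2$ at roots below $1.415$.
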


Let $C(G,U)$ be the set of connected components in $G[U]$. Also, let $C_3(G,U)$, $P_2(G,U)$ and $S(G,U)$ be the set of cycles on (exactly) 3 vertices (i.e., triangles), paths on (exactly) 2 vertices, and single vertices in $C(G,U)$, respectively. Let $C_3^*(G,U)$ contain each triangle $c_3\in C_3(G,U)$ such that the vertices in $V(c_3)$ do not have a common neighbor in $V\setminus U$ (i.e., $\bigcap_{v\in V(c_3)}N(v)=(\bigcap_{v\in V(c_3)}N(v))\setminus U=\emptyset$). We say that a minimum vertex cover $U$ is {\em good} if there exists a function $f: C_3^*(G,U)\rightarrow P_2(G,U)$, such that for each $c_3\in C_3^*(G,U)$, there is a vertex in $V\setminus U$ that is a neighbor of a vertex in $c_3$ and both vertices in $f(c_3)$. Note that an example of a good minimum vertex cover is illustrated in Fig.~\ref{fig:f}. In this example, $C(G,U)=\{A,B,C,D,E\}, S(G,U)=\emptyset, C_3(G,U)=\{A,C,E\}, P_2(G,U)=\{B,D\}$ and $C_3^*(G,U)=\{C,E\}$. A function $f$, which shows that $U$ is good, can assign $f(C)=B$ or $f(C)=D$, and $f(E)=D$.

\begin{figure}[!ht]\centering
\frame{\includegraphics[scale=0.8]{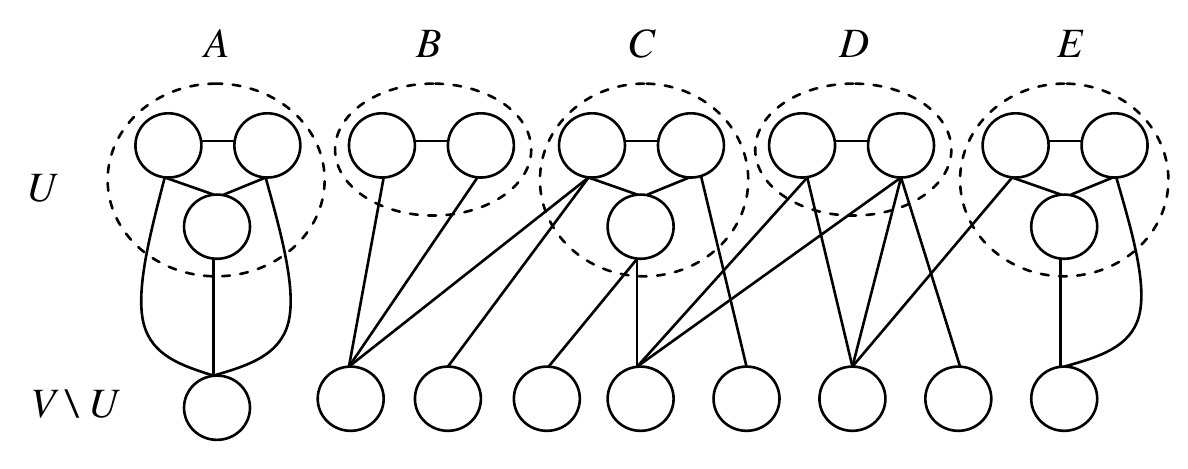}}
\caption{An example of a good minimum vertex cover $U$.}\label{fig:f}
\end{figure}

If $G$ is a graph of bounded degree 3, \alg{WVC*-Alg} first executes a preprocessing phase where it replaces $U$ by a good vertex cover (of the same size) and obtains a corresponding function $f$, using the following result.

\begin{lem}\label{lemma:vcpreprocessing}
Given an instance of {\sc WVC*}, where $G$ is a graph of bounded degree 3, a good minimum vertex cover, along with a corresponding function $f$, can be computed in polynomial time.
\end{lem}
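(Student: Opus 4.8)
The plan is to start from the given minimum vertex cover $U$ and massage it, by size-preserving local exchanges across the cut $(U,V\setminus U)$, into a good one, and then to read off $f$ directly. The first step is to record the structural consequences of $\Delta\le 3$ together with minimality. Since $U$ is a vertex cover, $V\setminus U$ is independent, so every external neighbor of a $U$-vertex lies in $V\setminus U$ and, conversely, every neighbor of a vertex of $V\setminus U$ lies in $U$. For a triangle component $c_3=\{a,b,c\}\in C_3(G,U)$, each of $a,b,c$ already has two neighbors inside the triangle, hence at most one neighbor in $V\setminus U$; moreover, if some triangle vertex had no external neighbor it would be redundant (its two edges are covered by the other two triangle vertices), contradicting minimality of $U$. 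Thus in a minimum cover every triangle vertex has exactly one external neighbor, and $c_3\in C_3^*(G,U)$ precisely when these three external neighbors are not a single common vertex. Using the same degree bound I would reformulate goodness as the purely local condition: a triangle $c_3$ is \emph{assignable} iff one of its external neighbors $z$ satisfies $N(z)=\{x,p,q\}$ with $x\in V(c_3)$ and $\{p,q\}\in P_2(G,U)$; since injectivity of $f$ is not required (as the example following the definition shows), $f$ can then be defined independently for each triangle.

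Second, I would implement the preprocessing as a local-search loop. While some $c_3\in C_3^*(G,U)$ is not assignable (call it \emph{bad}), I perform a size-preserving exchange dictated by how the external neighbors $z_a,z_b,z_c$ coincide and to what they attach — for instance replacing a triangle vertex $a$ by its external neighbor $z_a$, or a symmetric move — and argue that the resulting $U'$ is again a minimum vertex cover (the removed vertex's triangle edges stay covered, and the inserted vertex covers its own incident edges). The key is to attach to each configuration a potential $\Phi(U)$, such as the number of triangle components of $G[U]$ with ties broken by the number of bad ones, and to show that every exchange strictly decreases $\Phi$ while never turning an already-assignable triangle bad. Because each exchange touches only the bounded neighborhoods around $c_3$ and is computable in polynomial time, and $\Phi\le |V|$, the loop terminates after $O(|V|)$ iterations, leaving a good minimum vertex cover. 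Finally, once no bad triangle remains, $f$ is produced by scanning, for each $c_3\in C_3^*(G,U)$, its at most three external neighbors, selecting a witnessing $z$ with $N(z)=\{x,p,q\}$, and setting $f(c_3)=\{p,q\}$.

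The hard part will be the exchange step: I must enumerate the adjacency patterns of a bad triangle's external neighbors (three distinct; two coinciding; attachments to single-vertex components of $S(G,U)$, to other triangles, or to a $P_2$ whose second endpoint is not reached by the relevant external vertex), exhibit in each case a swap that keeps $U'$ minimum, and — most delicately — guarantee that reinserting an external vertex into $U'$ does not merge its remaining $U$-neighbors into a \emph{new} bad triangle, which is exactly what could break the monovariant. The $\Delta\le 3$ hypothesis is what keeps this case analysis finite and the created components simple, since a reinserted external vertex has at most two other neighbors and can therefore spawn only a short path or a single triangle; this is the precise point at which the restriction to bounded degree $3$ is essential, as without it the local exchange need neither be size-preserving nor keep $\Phi$ bounded.
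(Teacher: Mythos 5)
Your plan is essentially the paper's proof: repeatedly swap a triangle vertex $v\in V(c_3)$ for its unique external neighbour $u$ (which exists and is unique by minimality and the degree bound, exactly as you argue) and drive down a potential that counts triangle components. The one step you defer as ``the hard part'' in fact collapses to a single observation that you should make explicit: you only perform the swap when $N(u)\setminus\{v\}$ does \emph{not} contain both endpoints of a path in $P_2(G,U)$, and under the degree-$3$ bound this is precisely the only configuration in which inserting $u$ could create a new triangle component of $G[U']$ --- a new triangle $\{u,p,q\}$ would force $p,q$ to be adjacent to each other and to $u$, leaving them no further neighbours inside $U$, i.e., forcing $\{p,q\}\in P_2(G,U)$. (Also $N(u)\neq V(c_3)$ because $c_3\in C_3^*(G,U)$, so $u$ cannot re-close the old triangle.) Hence each swap keeps $U$ a minimum vertex cover, destroys $c_3$, and creates no new triangle, so the number of triangle components strictly decreases and the loop terminates after at most $|V|/3$ iterations. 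Your additional requirement that an exchange ``never turn an already-assignable triangle bad'' is neither needed nor, in general, true --- inserting $u$ may absorb a $P_2$ that witnessed another triangle's assignability --- but this is harmless, since the primary component of your potential already strictly decreases and the affected triangle simply triggers a later swap; the paper sidesteps this entirely by measuring only $|C_3^*(G,U)|$.
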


\begin{proof}
Let $c_3$ be a triangle in $C_3^*(G,U)$, and let $v$ be a vertex in $V(c_3)$. Note that, since $G$ has maximum degree 3, $|N(v)\setminus V(c_3)|\leq 1$. Therefore, $N(v)\setminus V(c_3)$ contains exactly one vertex (which belongs to $V\setminus U$), that we denote by $u$, since otherwise we can remove $v$ from $U$ and obtain a vertex cover smaller than $U$ (although $U$ is a {\em minimum} vertex cover). Suppose that $N(u)\setminus \{v\}$ does not contain both vertices of a path in $P_2(G,U)$. Then, we replace $v$ by $u$ (i.e., remove $v$ from $U$, and insert $u$ to $U$), and obtain a minimum vertex cover $U'$. Now, $c_3\notin C_3^*(G,U')$. Note that, since $c_3\in C_3^*(G,U)$, we have that $N(u)\neq V(c_3)$. For this reason, and since $N(u)\setminus \{v\}$ does not contain both vertices of a path in $P_2(G,U)$, we get that $C_3^*(G,U')\subset C_3^*(G,U)$.

Repeating the above argument a polynomial number of times, we obtain a minimum vertex cover $U^*$, such that for any $c_3\in C_3^*(G,U^*)$ and $v\in V(c_3)$, letting $u$ be the vertex in $N(v)\setminus V(c_3)$, there is a path $p_2\in P_2(G,U^*)$ such that $N(u)\setminus\{v\} = V(p_2)$. Thus, we obtain the required function $f: C_3^*(G,U^*)\rightarrow P_2(G,U^*)$.\qed
\end{proof}

A call to \alg{WVC*-Alg} is of the form \alg{WVC*-Alg}$(G,w,W,U,f)$, where, if $G$ is a graph of degree larger than 3, $f=NIL$. In this algorithm, we analyze branching vectors with respect to the measure $m(G,U)=|U|-|S(G,U)|$. Initially, $m(G,U)\leq |U|=t$. Since the worst root we get is bounded by 1.443, where for graphs of bounded degree 3, it is further bounded by 1.415, and since the branching stops if $m(G,U)\leq 0$, we obtain the desired running time. Note that, if $G$ is a graph of bounded degree 3, removing vertices from $U$ cannot add triangles to $C_3(G,U)$, since this implies that the original input vertex cover did not have minimum size. Also, if $G$ is a graph of bounded degree 3, when we remove a path $p_2$ from $P_2(G,U)$, we ensure that we also remove the triangles that $f$ maps to it (see Rules 4 and 5). Thus, throughout the execution, we can use the function $f$ that was computed in the preprocessing phase.\footnote{We note that, during the execution, $U$ remains a vertex cover, though it may not remain a minimum vertex cover.}

We now present each rule related to a call \alg{WVC*-Alg}$(G,w,W,U,f)$.

\setcounter{reducerule}{0}

\begin{reducerule}\label{rule:bipartite}
{\normalfont [$G$ is a bipartite graph]
Compute a minimum-weight vertex cover $A$ of $G$ (see, e.g., \cite{Sch02}). Return $A$ iff $w(A)\leq W$.}
\end{reducerule}

{\noindent If $|U|=S(G,U)$, $G$ is a bipartite graph. Thus, if $m(G,U)$ is decreased to 0 (it cannot be negative), the algorithm stops branching, since the condition in this rule is true.\footnote{Recall that such an observation is required in the analysis of the running time of a bounded search tree-based algorithm in the manner described in Section \ref{section:preliminaries}.}}

\begin{reducerule}\label{rule:remove}
{\normalfont [There is $v\!\in\! U$ such that $N(v)\setminus U\!=\!\emptyset$]
Return \alg{WVC*-Alg}($G,$ $w,W,U\setminus \{v\},f$).}
\end{reducerule}

{\noindent In this rule, we can simply remove $v$ from $U$. Indeed, since $N(v)\setminus U\!=\!\emptyset$, $U\setminus\{v\}$ is a vertex cover. If $v\notin S(G,U)$, $m(G,U)$ is decreased by 1.}

\begin{reducerule}
{\normalfont [There is a connected component $H$ such that $|V(H)|\!\leq\! 10$]
Use brute-force to compute a minimum-weight vertex cover $A$ of $H$. Return \alg{WVC*-Alg}$(G[V\setminus V(H)], w, W\!-\!w(A), U\!\setminus\! V(H),f)\cup A$.}
\end{reducerule}

{\noindent Clearly, this rule is correct. We note that, in particular, if $G$ is a graph of bounded degree 3, this rule eliminates triangles in $C_3(G,U)\setminus C_3^*(G,U)$. Indeed, in this case, a triangle $c_3\in C_3^*(G,U)$, along with the common neighbor of its vertices, form a connected component on 4 vertices; therefore, this rule removes it from $G$. Thus, in the following rules, if $G$ is a graph of bounded degree 3, $C_3^*(G,U)=C_3(G,U)$.}

\bigskip

{\noindent\bf Apply the following two rules only if the maximum degree of the original input graph is bounded by 3.}

\begin{branchrule}
{\normalfont [There is $c_3\in C_3^*(G,U)$ such that $(\forall c_3'\in C_3^*(G,U)\setminus\{c_3\}: f(c_3)\neq f(c_3'))$]
Let $v$ be a vertex in $V(f(c_3))$.
\begin{enumerate}
\item For all $A\subseteq V(c_3)$ such that $|A|=2$:
\smallskip
	\begin{itemize}
 \item If the result of \alg{WVC*-Alg}($G[V\setminus X],w,W\!-w(X),U\setminus X,f$), where $X=\{v\}\cup A$, is not NIL: Return it along with $\{v\}\cup A$.
	\end{itemize}
\smallskip
\item Else: Return \alg{WVC*-Alg}($G[V\setminus N(v)],w,W\!-w(N(v)),U\setminus N(v),f)\cup N(v)$.
\end{enumerate}}
\end{branchrule}

{\noindent In this rule, we choose either $v$ (in the branches in the first item) or $N(v)$, where if we choose $v$, we try every option of choosing two vertices from $c_3$. Therefore, since we must choose at least two vertices of a triangle to a vertex cover, the branching is exhaustive. Choosing $v$ and a set $A\subseteq V(c_3)$ such that $|A|=2$, we decrease $m(G,U)$ by 5 (since $U$ decreases by $|X|=3$, and $S(G,U)$ increases by $|(V(f(c_3))\cup V(c_3))\setminus X|=2$). Choosing $N(v)$, we next apply Rule \ref{rule:remove} on a vertex of $c_3$, and thus overall decrease $m(G,U)$ by $|V(f(c_3))|+1=3$. We get the branching vector $(5,5,5,3)$, whose root is smaller than 1.415.}

\begin{branchrule}
{\normalfont [There are different $c_3,c_3'\in C_3^*(G,U)$ such that $f(c_3)=f(c_3')$]
Let $v$ be a vertex in $V(f(c_3))$.
\begin{enumerate}
\item For all $A\subseteq V(c_3),B\subseteq V(c_3')$ such that $|A|=|B|=2$:
\smallskip
	\begin{itemize}
 \item If the result of \alg{WVC*-Alg}($G[V\setminus X],w,W\!-w(X),U\setminus X,f$), where $X=\{v\}\cup A\cup B$, is not NIL: Return it along with $\{v\}\cup A\cup B$.
 \end{itemize}
\smallskip
\item Else: Return \alg{WVC*-Alg}($G[V\setminus N(v)],w,W\!-w(N(v)),U\setminus N(v),f)\cup N(v)$.
\end{enumerate}}
\end{branchrule}

{\noindent As in the previous rule, since we must choose at least two vertices of a triangle to a vertex cover (here we consider the triangles $c_3$ and $c_3'$), this branching is exhaustive. Choosing $v$ and sets $A$ and $B$ as specified in the rule, we decrease $m(G,U)$ by $|V(f(c_3))\cup V(c_3)\cup V(c_3')|=8$ (since $U$ decreases by $|X|=5$, and $S(G,U)$ increases by $|(V(f(c_3))\cup V(c_3)\cup V(c_3'))\setminus X|=3$). Choosing $N(v)$, we next apply Rule \ref{rule:remove} on a vertex of $c_3$ and a vertex of $c_3'$, and thus decrease $m(G,U)$ by $|V(f(c_3))|+2=4$. We get the branching vector $(8,8,8,8,8,8,8,8,8,4)$, whose root is smaller than 1.415.}

\bigskip

{\noindent Note that, since the previous three rules did not apply, we next assume that if the input graph has maximum degree 3, $C_3(G,U)=\emptyset$. In the remaining (branching) rules, we first branch on neighbors of leaves in $G[U]$ whose degree in $G[U]$ is at least two, then on leaves in $G[U]$, and finally on the remaining vertices in $U\setminus V(S(G,U))$. All of these rules are exhaustive. Although we can merge some of them, we present them separately for the sake of clarity.}

\begin{branchrule}\label{rule:wvcneileaf}
{\normalfont [There are $v,u\in U$ s.t.~$N(u)\cap U = \{v\}$ and $|N(v)\cap U|\geq 2$]
\begin{enumerate}
\item If the result of \alg{WVC*-Alg}($G[V\setminus \{v\}],w,W\!-w(v),U\setminus \{v\},f$) is not NIL: Return it along with $v$.
\item Else: Return \alg{WVC*-Alg}($G[V\setminus N(v)],w,W\!-w(N(v)),U\setminus N(v),f)\cup N(v)$.
\end{enumerate}}
\end{branchrule}

{\noindent It is easy to see that we get a branching vector that is at least as good as $(|\{v,u\}|,|(N(v)\cap U)\cup\{v\}|)$. Indeed, in the first branch, $v$ is removed from $U$ and $u$ is inserted to $S(G,U)$, and in the second branch, $N(v)\cap U$ is removed from $U$ and $v$ is inserted to $S(G,U)$. Since this branching vector is at least as good as $(2,3)$ (since $|N(v)\cap U|\geq 2$), we get a root that is smaller than 1.415.}

\begin{branchrule}\label{rule:wvcneileaf2}
{\normalfont [There are $v,u\in U$ such that $N(v)\cap U = \{u\}$]
\begin{enumerate}
\item If the result of \alg{WVC*-Alg}($G[V\setminus \{v\}],w,W\!-w(v),U\setminus \{v\},f$) is not NIL: Return it along with $v$.
\item Else: Return \alg{WVC*-Alg}($G[V\setminus N(v)],w,W\!-w(N(v)),U\setminus N(v),f)\cup N(v)$.
\end{enumerate}}
\end{branchrule}

{\noindent Since the previous rule did not apply, $N(u)\cap U=\{v\}$. Thus, at each branch, one vertex in $\{v,u\}$ is removed from $U$, and the other is inserted to $S(G,U)$. We get the branching vector $(|\{v,u\}|,|\{v,u\}|)$, whose root is smaller than 1.415.}

\begin{branchrule}\label{rule:tvc3}
{\normalfont [There is $v\in U$ such that $|N(v)\cap U| \geq 3$]
\begin{enumerate}
\item If the result of \alg{WVC*-Alg}($G[V\setminus \{v\}],w,W\!-w(v),U\setminus \{v\},f$) is not NIL: Return it along with $v$.
\item Else: Return \alg{WVC*-Alg}($G[V\setminus N(v)],w,W\!-w(N(v)),U\setminus N(v),f)\cup N(v)$.
\end{enumerate}}
\end{branchrule}

{\noindent Clearly, we get the branching vector $(|\{v\}|,|(N(v)\cap U)\cup\{v\}|)$. This branching vector is at least as good as $(1,4)$ (since $|N(v)\cap U| \geq 3$), and thus we get a root that is smaller than 1.415.}

\begin{branchrule}
{\normalfont [There is $c_3\in C_3(G,U)$] For all $A\subseteq V(c_3)$ s.t.~$|A|=2$:
\begin{itemize}
\item If the result of \alg{WVC*-Alg}($G[V\setminus A],w,W\!-w(A),U\setminus A,f$) is not NIL: Return it along with $A$.
\end{itemize}
\noindent If none of the branches returned a result that is not NIL: Return NIL.}
\end{branchrule}

{\noindent Recall that, at this point, if the original input graph has maximum degree 3, $C_3(G,U)=\emptyset$. Thus, this rule is applied only if the original input graph has degree greater than 3. At each branch, two vertices of $c_3$ are removed from $U$, and the other one is inserted to $S(G,U)$. Thus, we get the branching vector $(3,3,3)$, whose root is smaller than 1.443. Clearly, after this rule, $C_3(G,U)=\emptyset$.}

\begin{branchrule}
{\normalfont [Remaining case]
Let $v$ be a vertex in $U$ s.t.~$|N(v)\cap U|\!=\!2$.
\begin{enumerate}
\item If the result of \alg{WVC*-Alg}($G[V\setminus \{v\}],w,W\!-w(v),U\setminus \{v\},f$) is not NIL: Return it along with $v$.
\item Else: Return \alg{WVC*-Alg}($G[V\setminus N(v)],w,W\!-w(N(v)),U\setminus N(v),f)\cup N(v)$.
\end{enumerate}}
\end{branchrule}

{\noindent In this rule, $G[U]$ does not include vertices of degree at least 3 (due to Rule \ref{rule:tvc3}), triangles (since, as we concluded in the previous rule, $C_3(G,U)=\emptyset$) or leaves (due to Rules \ref{rule:wvcneileaf} and \ref{rule:wvcneileaf2}). Thus, $C(G,U)\setminus S(G,U)$ is a set of cycles, each on at least 4 vertices. Thus, after choosing $v$, we can apply Rule \ref{rule:wvcneileaf}. This results in a branching vector that is at least as good as $(1+(2,3),3)=(3,4,3)$, whose root is smaller than 1.415.}

\section{Weighted 3-Hitting Set}\label{section:w3hs}

In this appendix, we develop an algorithm for {\sc W3HS} which uses $O^*(2.168^s)$ time and polynomial space (see Appendix \ref{section:whs1}). This algorithm is complemented (in Appendix \ref{section:whs2}) by an algorithm for {\sc W3HS} which uses $O^*(1.381^{s-t}2.381^t)$ time and polynomial space, or $O^*(1.363^{s-t}2.363^t)$ time and $O^*(1.363^s)$~space.

\subsection{An $O^*(2.168^s)$-Time Algorithm for {\sc $k$-W3HS}}\label{section:whs1}

This appendix presents \alg{W3HS-Alg}, an algorithm for {\sc $k$-W3HS}, based on measure and conquer (see Appendix \ref{section:wvc3}). Some of our rules build upon the rules given in \cite{whs2010}, applied in a more refined manner, and an adaptation of the measure used in \cite{hs2007}. We obtain the~following~result.

\begin{theorem}
\alg{WHS-Alg} solves {\sc $k$-W3HS} in $O^*\!(2.168^k\!)$ time and polynomial space.
\end{theorem}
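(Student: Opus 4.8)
The plan is to prove this exactly as \alg{WVC-Alg} and \alg{WVCnoW-Alg} were handled earlier: present \alg{W3HS-Alg} as a bounded search tree algorithm driven by a list of reduction and branching rules, verify that each rule preserves properties $(i)$ and $(ii)$ of the multivariate framework, and bound the root of every branching vector by $2.168$. Since the first rule returns once $\min\{W,k\}<0$ and every branching root is at most $2.168$, the running time $O^*(2.168^k)$ then follows from the bounded search tree analysis of Section~\ref{section:preliminaries}; polynomial space is immediate, because the recursion depth is polynomial and (unlike the memorized variant of Theorem~\ref{theorem:wvc2}) no solution table is stored.

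First I would fix the analysis measure. As announced in the framework, the rules track the weight budget $W$ inside the instance, while the branching vectors are measured against the size parameter $k$, refined in the measure-and-conquer style of \cite{hs2007}: I would assign each element a fractional contribution to a measure $\mu$ according to how many $3$-edges and $2$-edges currently contain it, chosen so that $\mu\le k$ at the root and so that deleting a low-frequency element, or collapsing a $3$-edge into a $2$-edge, already lowers $\mu$. The reduction rules come first: discard empty or already-covered edges; apply hyperedge domination (if one hyperedge is contained in another, drop the superset, since hitting the smaller one hits the larger); and single out low-degree elements. The crucial one is the degree-$2$ reduction, the analogue of the $W$-only rule referenced in the introduction, where the algorithm decides the membership of a degree-$2$ element from the value of $W$ alone and decreases $k$ by \emph{less} than the number of elements it actually commits. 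This is precisely the flexibility of the framework, and I would argue its correctness by the same case split as in Rule~\ref{rule:deg1} of \alg{WVC-Alg}: pick the lighter of ``the element itself'' versus ``its two co-occurring elements'' in terms of $W$, while charging $k$ conservatively so that no returned set exceeds weight $W$ even though its size may exceed $k$.

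Next come the branching rules, organized by the maximum element degree and the local hyperedge structure. The backbone is branching on a high-frequency element $v$: in one branch $v$ enters the hitting set and every edge through it vanishes, and in the other branch $v$ is excluded and each $3$-edge through it collapses to a $2$-edge, which the measure-and-conquer weighting charges as progress. I would then add the finer rules handling the bounded-degree residual cases (after high-degree elements are gone), exploiting the reduction guaranteed to fire \emph{after} each branch---exactly as the \alg{WVC-Alg} rules justify a follow-up application of a cheaper reduction that drops $k$ by one more---to push each branching vector down until its root falls below $2.168$. In every branch the correctness argument is the standard ``at least one of the three elements of some hyperedge must be chosen'', combined with the conservative accounting of $k$, so that if a solution of weight at most $W$ and size at most $k$ exists, one of weight at most $W$ is returned.

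The main obstacle will be the measure-and-conquer calibration: one must choose the fractional weights of the low-degree elements so that \emph{simultaneously} every branching vector arising in the full case analysis has root at most $2.168$, and the follow-up reductions promised after each branch genuinely apply, so that the claimed extra decrements in the branching vectors are legitimate. Finding a single weight assignment that makes the worst branch tight at $2.168$---while the flexible, sub-actual decrements of $k$ keep all returned solutions valid---is the delicate part; the remaining per-rule correctness arguments are routine adaptations of the hitting-set reductions of \cite{whs2010} together with the \alg{WVC-Alg} bookkeeping.
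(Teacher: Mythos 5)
Your overall strategy matches the paper's in broad outline: a bounded search tree whose rules track $W$ inside the instance while the branching vectors are analyzed against a refined size measure, a polynomial-time terminal rule for the degree-2 case that ignores $k$ entirely, edge- and vertex-domination reductions, and a cascade of branching rules whose roots must all be pushed below $2.168$. But what you have written is a plan rather than a proof, and the two places where you defer the work are exactly where the entire content of the paper's argument lives. First, the measure: you propose a per-element fractional weighting in the classic measure-and-conquer style (each element charged according to how many 2-edges and 3-edges currently contain it). The paper instead uses the global potential $m(G,k)=k-\alpha(G)+1$, where $\alpha(G)$ is one of five constants ($0$, $0.35$, $0.55$, $0.8$, $0.87$) determined only by how many 2-edges the hypergraph currently contains and whether three of them share a common vertex. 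These are structurally different measures, and the constant $2.168$ arises from optimizing those five constants against the specific branching vectors of some twenty rules; nothing in your proposal shows that your per-element weighting admits any calibration achieving the same root, and you explicitly flag this calibration as the part you have not done. Second, the case analysis itself---the explicit list of rules organized by the number and configuration of 2-edges, the verification that the promised follow-up reductions genuinely fire after each branch, and the computation of every branching vector---is absent, so the bound $O^*(2.168^k)$ is not actually established.

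A smaller but real point: you conflate two distinct rules. The ``$W$-only'' rule referenced in the introduction (Rule \ref{rule2:hsdeg2}) applies when the whole hypergraph has degree at most $2$; it computes a minimum-weight hitting set in polynomial time and returns it iff its weight is at most $W$, without decrementing $k$ at all (the returned set may be larger than $k$, which is where the framework's flexibility enters). The rule that commits more elements than it charges to $k$ is the separate leaf reduction (Rule \ref{rule:whsleafreduce}), the analogue of Cases 2b--2c of Rule \ref{rule:deg1} of \alg{WVC-Alg}. Both are needed, but they are different rules with different correctness arguments, and your description merges them into one.
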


By the discussion in Section \ref{section:technique}, this implies the desired result:

\begin{cor}
{\sc W3HS} can be solved in $O^*(2.168^s)$ time and polynomial space.
\end{cor}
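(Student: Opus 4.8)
The plan is to present \alg{WHS-Alg} as a bounded search tree algorithm in the same style used for \alg{WVC-Alg} in Section~\ref{section:wvc1}, but analyzed by measure and conquer as in Appendix~\ref{section:wvc3}. Following the framework of Section~\ref{section:technique}, I would track the weight budget $W$ exactly while analyzing every branching vector with respect to a measure $\mu$ that is bounded above by $k$ and refines the plain size parameter by assigning smaller weights to low-degree vertices, adapting the measure of Wahlstr\"om~\cite{hs2007}. Since the largest branching root will be shown to be at most $2.168$ and the algorithm halts once $\min\{W,k\}<0$, the bounded-search-tree bound of Section~\ref{section:preliminaries} yields the claimed $O^*(2.168^k)$ running time; polynomial space follows because every rule strictly decreases the instance, bounding the recursion depth polynomially while each node stores only the current hypergraph.

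First I would install the cheap reduction rules: return NIL when $\min\{W,k\}<0$, return $\emptyset$ when $E=\emptyset$, and apply domination rules deleting a vertex $v$ whenever another vertex of no larger weight hits a superset of the hyperedges of $v$ (the weighted analogue of Rule~\ref{rule:dominate}). The crucial step is to handle low-degree vertices by weight manipulation rather than by branching, exactly as Rules~\ref{rule:deg1} and~\ref{rule:vctriangle} eliminate leaves and small triangles for \alg{WVC}: when a vertex lies in few hyperedges, I would either fix its membership, or transfer its weight onto its partners and delete it, decreasing $k$ by strictly less than the true decrease in the instance. This is precisely where the framework's freedom to ``fail'' is exploited---returning a hitting set whose size exceeds $k$ is permitted, so these reductions never sacrifice correctness while removing branching that the $O^*(2.247^W)$ algorithm of~\cite{whs2010} is forced to perform. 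Once all reductions are exhausted the surviving hyperedges contain only vertices of degree $\ge 2$, and I would branch on a maximum-degree vertex $v$: in one branch add $v$, charging $w(v)$ to $W$ and decreasing $\mu$; in the other discard $v$, forcing every partner of $v$ into the solution and collapsing the $3$-edges through $v$ into $2$-edges, which feeds the transition terms of the measure.

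For each branching rule I would verify that it is exhaustive and weight-correct, then compute its vector with respect to $\mu$; the degree-based case split (vertices of degree $\ge 3$, degree exactly $2$, and the residual $2$-HS/vertex-cover structure that emerges after $3$-edges collapse) is arranged so that every root stays below $2.168$. I expect the main obstacle to be the quantitative tuning of the measure-and-conquer weights together with the degree-case split so that the \emph{worst} vector attains root $2.168$ rather than the weaker $2.247$; as in the \alg{WVC} analysis, the decisive savings come from arguing that after a branch some subsequent reduction rule (one decreasing $k$ by $1$ while the genuine decrease is larger) must fire, so the effective vectors are appreciably better than the naive ones. Checking that these ``follow-up reduction'' arguments hold in every degree configuration, and that a single choice of vertex weights simultaneously satisfies all the resulting root inequalities, is the technically delicate core of the proof.
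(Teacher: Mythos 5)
Your overall strategy is the paper's: develop a measure-and-conquer bounded search tree for {\sc $k$-W3HS} whose branching vectors all have roots at most $2.168$ with respect to a measure bounded by $k$, exploit the framework's licence to return hitting sets of size larger than $k$, use weight-transfer reductions for degree-1 vertices (the paper's Rule~\ref{rule:whsleafreduce} is exactly the analogue of Cases~2(b)--(c) of Rule~\ref{rule:deg1} for {\sc WVC} that you invoke), and then iterate $k=1,2,\ldots$ as in Section~\ref{section:technique} to turn $O^*(2.168^k)$ into $O^*(2.168^s)$. However, for a statement whose entire content is the constant $2.168$, the case analysis \emph{is} the proof, and your proposal defers all of it. Two of the structural choices you do commit to also diverge from what the paper needs. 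First, the measure: the paper uses $m(G,k)=k-\alpha(G)+1$ where $\alpha(G)\in\{0,0.35,0.55,0.8,0.87\}$ is determined by how many \emph{2-edges} (hyperedges of cardinality two) the current hypergraph contains and whether they share a common vertex; it is not a sum of per-vertex weights graded by vertex degree, and the roughly fifteen branching rules are organized as a case split on these 2-edge configurations (exactly one, exactly two sharing a vertex, exactly three with or without a common vertex, at least four), not on vertex degrees.

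Second, and more substantively, you are missing the base case that makes the whole scheme terminate with the claimed bound. The paper's Rule~\ref{rule2:hsdeg2} observes that when the hypergraph has maximum \emph{degree} two (every vertex lies in at most two hyperedges), a minimum-weight hitting set is computable in polynomial time, and returns it even though its size may exceed $k$ --- this is the place where the framework's ``failure'' allowance does real work. Your sketch instead ends in a ``residual 2-HS/vertex-cover structure,'' but weighted vertex cover is NP-hard, so that residual cannot simply be solved; it would have to be branched on further (with its own vectors folded into the measure) or handed to a {\sc WVC} algorithm --- and the latter is precisely the \emph{different} algorithm of Appendix~\ref{section:whs2}, which yields $O^*(1.381^{s-t}2.381^t)$, not $O^*(2.168^s)$. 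Without the degree-2 polynomial base case, the concrete 2-edge-based measure, and the verified vectors (e.g.\ $(1-\alpha_1,1-\alpha_1+\alpha_2)$, $(1-\alpha_4+\alpha_3,1-\alpha_4+\alpha_3)$, etc.), the claim that every root is at most $2.168$ is unsupported.
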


Next, we present each rule related to a call \alg{WHS-Alg}$(G,w,W,k)$. When presenting a branching rule, we analyze its branching vector with respect to $m(G,k)=k-\alpha(G)+1$, where
\begin{enumerate}
\item If $G$ contains at least four 2-edges: $\alpha(G)=\alpha_4=0.87$.
\item Else if $G$ contains exactly three 2-edges, and no vertex is contained in all of them: $\alpha(G)\!=\!\alpha_3\!=\!0.8$.
\item Else if $G$ contains at least two 2-edges: $\alpha(G)=\alpha_2=0.55$.
\item Else if $G$ contains exactly one 2-edge: $\alpha(G)=\alpha_1=0.35$.
\item Else ($G$ does not contain a 2-edge): $\alpha(G)=0$.
\end{enumerate}

We note that the constants above were chosen to allow us to obtain branching vectors whose roots are bounded by 2.168,\footnote{That is, we examined the transitions between the cases associated with the constants (performed when executing our branching rules), and optimized the constants accordingly.} yielding the desired running time.

\setcounter{reducerule}{0}

\begin{reducerule}
{\normalfont [$\min\{W,k\}<0$]
Return NIL.}
\end{reducerule}

{\noindent Thus, the branching stops when $m(G,k)\leq 0$ (it actually stops earlier, which only improves the running time of the algorithm). To see this, note that if $m(G,k)\leq 0$, then $k< k-\alpha(G)+1 = m(G,k)\leq 0$.}

\begin{reducerule}\label{rule2:hsdeg2}
{\normalfont [$G$ is a hypergraph of bounded degree 2]\footnote{Recall that we defined the degree of a vertex in a hypergraph in Section \ref{section:preliminaries}.}
Compute a minimum-weight hitting set $U$ of $G$ in polynomial time (see \cite{whs2010}). Return $U$ iff $w(U)\leq W$.}
\end{reducerule}

{\noindent Clearly, if there is a hitting set of weight at most $W$, it is returned by this rule. Note that the correctness of this rule relies on our flexible use of the parameter $k$, since we may return a hitting set that contains more than $k$ vertices.}

\begin{reducerule}\label{rule:whsedgedominate}
{\normalfont [There are $e,e'\in E$ s.t.~$e\subset e'$]
Return \alg{WHS-Alg}($G'=(V, E\setminus \{e'\}),w,W,k$).}
\end{reducerule}

{\noindent This is a standard edge-domination rule: If $e$ is covered, so is $e'$ (because $e\subset e'$), and since a hitting set of a graph covers all of its edges, we can simply omit $e'$. The application of this rule does not change $m(G,k)$.}

\begin{reducerule}\label{rule:whsdominate}
{\normalfont [There are $v,u\in V$ such that $w(v)\leq w(u)$ and $E(u)\subseteq E(v)$] Consider the following cases.
\begin{enumerate}
\item If $\{v,u\}\in E$: Return \alg{WHS-Alg}($G'=(V\setminus \{v,u\}, E\setminus E(v)),w,W\!-\!w(v),k\!-\!1$), along~with~$v$.
\item Else: Return \alg{WHS-Alg}($G[V\setminus \{u\}],w,W,k$).
\end{enumerate}}
\end{reducerule}

{\noindent This is a standard vertex-domination rule: Since $w(v)\leq w(u)$, and $v$ covers all the edges that $u$ covers, it is always better to choose $v$ rather than $u$ (in the first case, we choose $v$, and not only omit $u$, to cover the edge $\{v,u\}$). The application of this rule does not increase $m(G,k)$, since if $\alpha(G)$ increases, the increase is smaller than 1, while $k$ decreases by 1.}

\begin{branchrule}\label{rule:02edge}
{\normalfont [$G$ does not contain a 2-edge] Let $v$ be a vertex of maximum degree in $G$. 
\begin{enumerate}
\item If the result of \alg{WHS-Alg}($G'=(V\!\setminus\! \{v\}, E\!\setminus\! E(v)),w,W\!\!-\!w(v),k\!-\!1$) is not NIL: Return it along~with~$v$.
\item Else: Return \alg{WHS-Alg}($G[V\setminus \{v\}],w,W,k$).
\end{enumerate}}
\end{branchrule}

{\noindent This branching is exhaustive ($v$ is either chosen or omitted), and therefore correct. Clearly, in the first branch we choose $v$, and thus $m(G,k)$ is decreased by 1. It is easy to see that $G[V\setminus\{v\}]$ contains (at least) three 2-edges such that no vertex is contained in all of them. Indeed, suppose by way of contradiction that this is not the case. Then, there is a vertex $u\in V$ such that every edge $e\in E$ satisfying $v\in e$, also satisfies $u\in e$. Since $v$ is a vertex of maximum degree in $G$, every $e\in E$ satisfying $u\in e$, also satisfies $v\in e$. This is a contradiction, since Rule \ref{rule:whsdominate} precedes the current rule (simply let $v$, defined in Rule \ref{rule:whsdominate}, be the vertex of minimum weight among $v$ and $u$ to which we refer in this explanation). Therefore, in the second branch, $\alpha(G)$ increases from 0 to (at least) $\alpha_3$. Thus, we get a branching vector that is at least as good as $(1,\alpha_3)=(1,0.8)$, whose root is smaller than 2.168.}

\bigskip

{\noindent From now on, since Rule \ref{rule:02edge} did not apply, $G$ contains at least one~2-edge.}

\begin{branchrule}\label{rule:12edgea}
{\normalfont [$G$ contains exactly one 2-edge $\{v,u\}$, and $|E(v)|\geq 3$]
\begin{enumerate}
\item If the result of \alg{WHS-Alg}($G'\!=\!(V\!\setminus\! \{v\}, E\!\setminus\! E(v)),w,W\!\!\!-\!w(v),k\!-\!1$) is not NIL: Return it along~with~$v$.
\item Else: Return \alg{WHS-Alg}($G'[V\setminus\{v,u\}],w,W\!\!-\!w(u),k-1)\cup\{u\}$, where $G' = G(V\!\setminus\! \{u\}, E\!\setminus\! E(u))$.
\end{enumerate}}
\end{branchrule}

{\noindent This branching is exhaustive. Choosing $v$, we delete the 2-edge $\{v,u\}$. Choosing $u$, we also delete this edge, but introduce (at least) two new 2-edges: There are at least two 2-edges in $E(v)$ (since $|E(v)\setminus\{\{v,u\}\}|\geq 2$), none of them contains $u$ (by Rule \ref{rule:whsedgedominate}) or a 2-edge (by Rule \ref{rule:whsedgedominate}, but also, since in this case, the only 2-edge is $\{u,v\}$); thus, once we omit $v$ in the second branch (where we choose $u$), the 3-edges in $E(v)$ turn into new 2-edges. Therefore, the branching vector is at least as good as $(1\!-\!\alpha_1,1\!-\!\alpha_1\!+\!\alpha_2)=(0.65,1.2)$, whose root is smaller than 2.168.}

\begin{branchrule}\label{rule:12edgeb}
{\normalfont [$G$ contains exactly one 2-edge $\{v,u\}$, where $w(v)\geq w(u)$] Let $e$ be the 3-edge adjacent to $v$. Let $S$ is the set of vertices of 1-edges in $\widetilde{G}[V\setminus e]$, where $\widetilde{G}=(V\setminus \{v\}, E\setminus E(v))$.
\begin{enumerate}
\item If the result of \alg{WHS-Alg}($G'[V\setminus (e\cup S)],w,W-w(\{v\}\cup S),k-1-|S|$) is not NIL, where $G'=(V\setminus (\{v\}\cup S), E\setminus E(\{v\}\cup S))$: Return it along with $\{v\}\cup S$.
\item Else: Return \alg{WHS-Alg}($G'[V\setminus \{v,u\}],w,W\!-w(u),k-1)\cup\{u\}$, where $G'=(V\!\setminus\! \{u\}, E\!\setminus\! E(u))$.
\end{enumerate}}
\end{branchrule}

{\noindent Since this rule is slightly technical, it is illustrated in Fig.~\ref{fig:12edgeb}. First, note that there exists a 3-edge adjacent to $v$ (by Rule \ref{rule:whsdominate}), and there is no more than one such 3-edge (by Rule \ref{rule:12edgea}), and therefore the choice of $e$ is well-defined. We next consider the correctness and branching vector of this rule. Denote by $r$ a vertex in $e\setminus \{v\}$ of degree at least 2. Note that this choice is possible, since if both vertices in $e\setminus \{v\}$ were of degree 1, Rule \ref{rule:whsdominate} would have been applied. Choosing $v$ (in the first branch), we need not choose another vertex from $e$, since then we can replace $v$ by $u$ and get a hitting set that is not heavier than the one with $v$ (note that the choice of $u$ is examined in the second branch). Thus, since we do not choose another vertex from $e$, we must choose all the vertices in $S$ (to get a hitting set). In this case, we delete $\{v,u\}$, but either $|S|\geq 1$ or we introduce a new 2-edge, since there is a 3-edge adjacent to $r$ (and it turns into a 1-edge or a 2-edge). Omitting $v$ (and choosing $u$), we delete $\{v,u\}$, but introduce a new 2-edge (from $e$). Therefore, at worst, we get the branching vector $(1,1)$, whose root~is~2.}

\begin{figure}[!h]\centering
\frame{\includegraphics[scale=0.8]{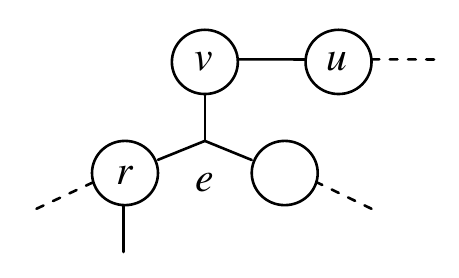}}
\caption{Rule \ref{rule:12edgeb} of \alg{WHS-Alg}.}
\label{fig:12edgeb}
\end{figure}

\bigskip

{\noindent From now on, since Rules \ref{rule:12edgea} and \ref{rule:12edgeb} did not apply, $G$ contains at least two 2-edges.}

\begin{branchrule}\label{rule:star}
{\normalfont [$G$ contains exactly three 2-edges, and they have a common vertex $v$] Let $v_1,v_2$ and $v_3$ be the other vertices in the 2-edges.
\begin{enumerate}
\item If the result of \alg{WHS-Alg}($G'\!=\!(V\!\setminus\! \{v\}, E\!\setminus\! E(v)),w,W\!\!-\!w(v),k\!-\!1$) is not NIL: Return it along~with~$v$.
\item Else: Return \alg{WHS-Alg}($G'[V\setminus \{v,v_1,v_2,v_3\}],w,W-w(\{v_1,v_2,v_3\}),k-3)\cup\{v_1,v_2,v_3\}$, where $G'=(V\setminus \{v_1,v_2,v_3\}, E\setminus (E(v_1)\cup E(v_2)\cup E(v_3))$.
\end{enumerate}}
\end{branchrule}

{\noindent This branching is exhaustive: we either choose $v$, or delete $v$ and therefore must choose $v_1,v_2$ and $v_3$ (to get a hitting set). At worst, no 2-edge is introduced (in both branches), and thus we get a branching vector that is at least as good as $(1-\alpha_2,3-\alpha_2)=(0.45,2.45)$, whose root is smaller than 2.168.}

\bigskip

{\noindent From now on, since Rule \ref{rule:star} did not apply, $G$ contains exactly two 2-edges, or exactly three 2-edges where there is no vertex that is contained in all of them, or at least four 2-edges.}

\begin{branchrule}\label{rule:hs9}
{\normalfont [$G$ contains exactly two 2-edges, and there exist different $v_1,v,v_2\in V$ such that $\{v_1,v\},\{v,v_2\}\in E$ and $|E(v)|\geq 3$]
\begin{enumerate}
\item If the result of \alg{WHS-Alg}($G'\!=\!(V\!\setminus\! \{v\}, E\!\setminus\! E(v)),w,W\!\!-\!w(v),k\!-\!1$) is not NIL: Return it along~with~$v$.
\item Else: Return \alg{WHS-Alg}($G'[V\setminus \{v_1,v,v_2\}],w,W\!-w(\{v_1,v_2\}),k-2)\cup \{v_1,v_2\}$, where $G'=(V\setminus \{v_1,v_2\}, E\setminus (E(v_1)\cup E(v_2)))$.
\end{enumerate}}
\end{branchrule}

{\noindent Note that this rule is illustrated in Fig.~\ref{fig:hs9}. The branching is exhaustive (we either choose $v$, or delete $v$ and thus choose $v_1$ and $v_2$). Choosing $v$, we delete both existing 2-edges. Deleting $v$, we also delete both existing 2-edges, but introduce a new 2-edge (from a 3-edge previously adjacent to $v$). We get a branching vector at least as good as $(1-\alpha_2,2-\alpha_2+\alpha_1)=(0.45,1.8)$, whose root is smaller than~2.168.}

\begin{figure}[!h]\centering
\frame{\includegraphics[scale=0.8]{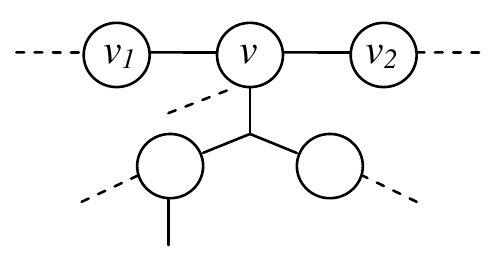}}
\caption{Rule \ref{rule:hs9} of \alg{WHS-Alg}.}
\label{fig:hs9}
\end{figure}

\begin{branchrule}\label{rule:hs10}
{\normalfont [$G$ contains exactly two 2-edges, and there exist different $v_1,v,v_2\in V$ such that $\{v_1,v\},\{v,v_2\}\in E$, and $|E(v_1)|\geq 3$]
\begin{enumerate}
\item If the result of \alg{WHS-Alg}($G'\!=\!(V\!\setminus\! \{v_1\}, E\!\setminus\! E(v_1)),w,W\!\!-\!w(v_1),k\!-\!1$) is not NIL: Return it along~with~$v_1$.
\item Else: Return \alg{WHS-Alg}($G'[V\setminus \{v_1,v\}],w,W\!-w(v),k-1)\cup\{v\}$, where $G'=(V\setminus \{v\}, E\setminus E(v))$.
\end{enumerate}}
\end{branchrule}

{\noindent This branching is exhaustive (we either choose $v_1$, or delete $v_1$ and thus choose $v$). Choosing $v_1$, we delete one 2-edge. Choosing $v$, we delete two 2-edges, but also introduce two new 2-edges (from 3-edges previously adjacent to $v_1$). We get a branching vector at least as good as $(1-\alpha_2+\alpha_1,1)=(0.8,1)$, whose root is smaller than 2.168.}

\begin{figure}[!h]\centering
\frame{\includegraphics[scale=0.8]{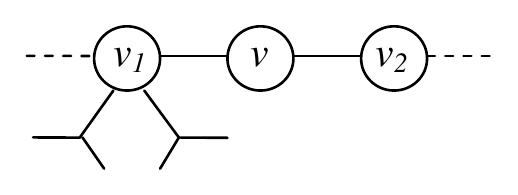}}
\caption{Rule \ref{rule:hs10} of \alg{WHS-Alg}.}
\end{figure}

\begin{branchrule}\label{rule:hs11}
{\normalfont [$G$ contains exactly two 2-edges, and there exist different $v_1,v,v_2\in V$ such that $\{v_1,v\},\{v,v_2\}\in E$, $|E(v_1)|=2$ and $w(v_1)\geq w(v)$] Let $e$ be the 3-edge adjacent to $v_1$. Let $S$ is the set of vertices of 1-edges in $\widetilde{G}[V\setminus e]$, where $\widetilde{G}=(V\setminus \{v_1\}, E\setminus E(v_1))$.
\begin{enumerate}
\item If the result of \alg{WHS-Alg}($G'[V\setminus (e\cup S)],w,W\!-w(\{v_1\}\cup S),k-1-|S|$) is not NIL, where $G'=(V\setminus (\{v_1\}\cup S), E\setminus E(\{v_1\}\cup S))$: Return it along with $\{v_1\}\cup S$.
\item Else: Return \alg{WHS-Alg}($G'[V\setminus \{v_1,v\}],w,W\!-w(v),k-1)\cup\{v\}$, where $G'=(V\setminus \{v\}, E\setminus E(v))$.
\end{enumerate}}
\end{branchrule}

{\noindent As in the proof of Rule \ref{rule:12edgeb}, there exists $r\in (e\setminus \{v_1\})$ of degree at least 2. Choosing $v_1$, we need not choose another vertex from $e$, since then we can replace $v_1$ by $v$ and get a hitting set that is not heavier than the one with $v_1$ (the choice of $v$ is examined in the second branch). At worst, in the first branch, $S=\emptyset$; then, we delete $\{v_1,v\}$ (but not $\{v,v_2\}$), and introduce a new 2-edge (from a 3-edge previously adjacent to $r$). Deleting $v_1$ (and choosing $v$), we delete $\{v_1,v\}$ and $\{v,v_2\}$, but introduce a new 2-edge (from $e$). Therefore, we get the branching vector $(1,1-\alpha_2+\alpha_1)=(1,0.8)$, whose root is smaller than 2.168.}

\begin{figure}[!h]\centering
\frame{\includegraphics[scale=0.8]{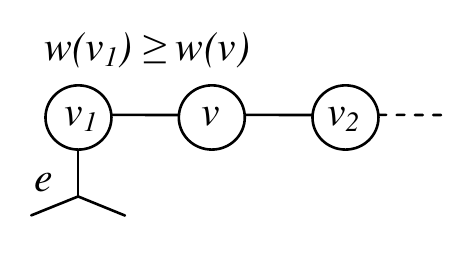}}
\caption{Rule \ref{rule:hs11} of \alg{WHS-Alg}.}
\end{figure}

\begin{branchrule}\label{rule:hs12}
{\normalfont [$G$ contains exactly two 2-edges, and there exist different $v_1,v,v_2\in V$ such that $\{v_1,v\},\{v,v_2\}\in E$, $|E(v_1)|=2$ and $w(v_1),w(v_2) < w(v)$]
\begin{enumerate}
\item If the result of \alg{WHS-Alg}($G'=(V\setminus \{v_1,v_2\}, E\setminus (E(v_1)\cup E(v_2))),w,W\!-w(\{v_1,v_2\}),k-2$) is not NIL: Return it along with $v_1$.
\item Else: Return \alg{WHS-Alg}($G'[V\setminus \{v_1,v\}],w,W\!-w(v),k-1)\cup\{v\}$, where $G'=(V\setminus \{v\}, E\setminus E(v))$.
\end{enumerate}}
\end{branchrule}

{\noindent Choosing $v_1$ (in the first branch), we need not choose $v$, since we can choose $v_2$ instead and get a hitting set that is not heavier than the one with $v$. In this case, we delete $\{v_1,v\}$ and $\{v,v_2\}$. Deleting $v_1$ (and therefore choosing $v$), we delete $\{v_1,v\}$ and $\{v,v_2\}$, but introduce a new 2-edge (from a 3-edge previously adjacent to $v_1$). Thus, the branching vector is $(2-\alpha_2,1-\alpha_2+\alpha_1)=(1.45,0.8)$, whose root is smaller than 2.168.}

\begin{figure}[!h]\centering
\frame{\includegraphics[scale=0.8]{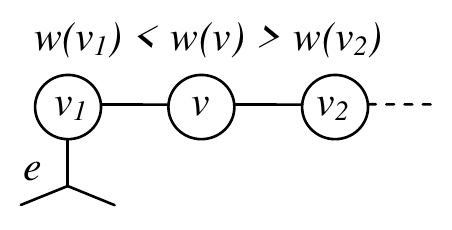}}
\caption{Rule \ref{rule:hs12} of \alg{WHS-Alg}.}
\end{figure}

\begin{branchrule}\label{rule:hs13}
{\normalfont [$G$ contains exactly two 2-edges, $\{v_1,v_2\}$ and $\{u_1,u_2\}$, where $v_1,v_2,u_1$ and $u_2$ are different vertices in $V$, and $|E(v_1)|\geq 2$]
\begin{enumerate}
\item If the result of \alg{WHS-Alg}($G'\!=\!(V\!\setminus\! \{v_1\}, E\!\setminus\! E(v_1)),w,W\!\!-\!w(v_1),k\!-\!1$) is not NIL: Return it along~with~$v_1$.
\item Else: Return \alg{WHS-Alg}($G'[V\!\setminus\! \{v_1,v_2\}],w,W\!\!-\!w(v_2),k\!-\!1)\cup\{v_2\}$, where $G'\!=\!(V\!\setminus\! \{v_2\}, E\!\setminus\! E(v_2))$.
\end{enumerate}}
\end{branchrule}

{\noindent This branching is exhaustive (we either choose $v_1$, or delete $v_1$ and thus choose $v_2$). Choosing $v_1$, we delete the 2-edge $\{v_1,v_2\}$. Deleting $v_1$ (and choosing $v_2$), we also delete the 2-edge $\{v_1,v_2\}$, but introduce a new 2-edge (from a 3-edge previously adjacent to $v_1$). We get a branching vector at least as good as $(1-\alpha_2+\alpha_1,1)=(0.8,1)$, whose root is smaller than 2.168.}

\begin{figure}[!h]\centering
\frame{\includegraphics[scale=0.8]{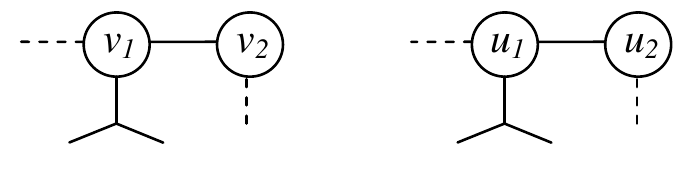}}
\caption{Rule \ref{rule:hs13} of \alg{WHS-Alg}. We illustrate, w.l.o.g, $u_1$ as a vertex of degree at least 2 (by Rule \ref{rule:whsdominate}, at least one vertex in $\{u_1,u_2\}$ has degree at least 2).}
\end{figure}

\bigskip

{\noindent From now on, since Rules \ref{rule:hs9}--\ref{rule:hs13} did not apply, $G$ contains exactly three 2-edges where there is no vertex that is contained in all of them, or at least four 2-edges.}

\begin{branchrule}\label{rule:hs14}
{\normalfont [$G$ contains exactly three 2-edges, and there exist different $v,u,r\in V$ such that $\{v,u\},\{v,r\}\in E$, and $|E(v)|\geq 3$]
\begin{enumerate}
\item If the result of \alg{WHS-Alg}($G'\!=\!(V\!\setminus \!\{v\}, E\!\setminus\! E(v)),w,W\!\!-\!w(v),k\!-\!1$) is not NIL: Return it along with $v$.
\item Else: Return \alg{WHS-Alg}($G'[V\setminus \{v,u,r\}],w,W\!-w(\{u,r\}),k-2)\cup\{u,r\}$, where $G'=(V\setminus \{u,r\}, E\setminus (E(u)\cup E(r)))$.
\end{enumerate}}
\end{branchrule}

{\noindent This branching is exhaustive (we either choose $v$, or delete $v$ and thus choose $u$ and $r$). Choosing $v$, we delete two 2-edges. Deleting $v$ (and choosing $u$ and $r$), we may also delete all 2-edges, but introduce a new 2-edge (from a 3-edge previously adjacent to $v$). We get a branching vector at least as good as $(1-\alpha_3+\alpha_1,2-\alpha_3+\alpha_1)=(0.55,1.55)$, whose root is smaller than 2.168.}

\begin{figure}[!h]\centering
\frame{\includegraphics[scale=0.8]{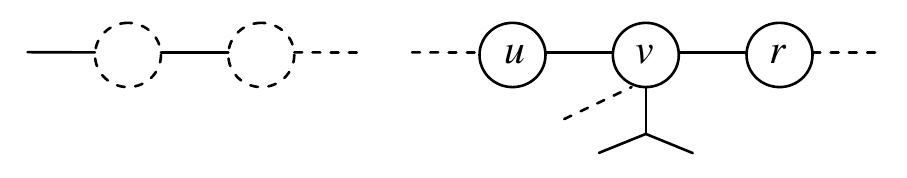}}
\caption{Rule \ref{rule:hs14} of \alg{WHS-Alg}. Two vertices are illustrated in dashed circles since each of them may be equal to $u$ or $r$.}
\end{figure}

\begin{reducerule}\label{rule:whsleafreduce}
{\normalfont [$G$ contains different $v,u,r\in V$ such that $\{v,u\},\{u,r\}\in E$ and $|E(v)|=1$]
Let $w'$ be $w$, except for $w'(r)=w(r)-(w(u)-w(v))$. 
\begin{enumerate}
\item If $w'(r)\leq 0$: Return \alg{WHS-Alg}($G[V\!\setminus\! \{v,u,r\}],w',W\!-\!w(v)\!-\!w(r),k\!-\!1)\cup\{v,r\}$.
\item Else: Return \alg{WHS-Alg}($G[V\setminus \{v,u\}],w',W\!-w(u),k-1$), along with $v$ if $r$ is in the returned result, and else along with $u$.
\end{enumerate}}
\end{reducerule}

{\noindent The correctness of this rule follows from the same arguments given for Cases 2(b) and 2(c) in Rule \ref{rule:deg1} of \alg{WVC-Alg} (see Section \ref{section:wvc1}). Note that, since $k$ is decreased by 1, $m(G,U)$ does not increase.}

\begin{figure}[!h]\centering
\frame{\includegraphics[scale=0.8]{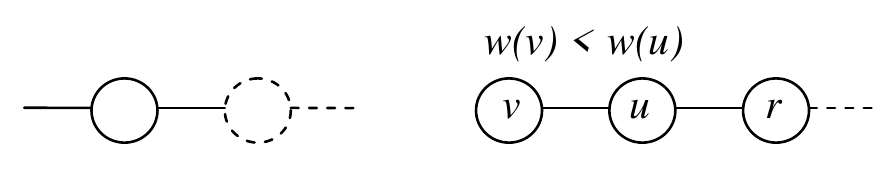}}
\caption{Rule \ref{rule:whsleafreduce} of \alg{WHS-Alg}. One vertex is illustrated in a dashed circle since it may be equal to $r$. Note that, by Rule \ref{rule:whsdominate}, $w(v)<w(u)$, and by Rule \ref{rule:hs14}, $|E(u)|=2$.}
\end{figure}

\begin{branchrule}\label{rule:hs16}
{\normalfont [$G$ contains exactly three 2-edges, and there exist different $v_1,v_2,v_3,v_4\in V$ such that $\{v_1,v_2\},\{v_2,v_3\},\{v_3,v_4\}\in E$ and $w(v_2)\geq w(v_3)$]
\begin{enumerate}
\item If the result of \alg{WHS-Alg}($G'\!=\!(V\!\setminus\! \{v_1\}, E\!\setminus\! E(v_1)),w,W\!\!-\!w(v_1),k\!-\!1$) is not NIL: Return it along~with~$v_1$.
\item Else: Return \alg{WHS-Alg}($G'[V\!\setminus\! \{v_1,v_2\}],w,W\!\!-\!w(v_2),k\!-\!1)\cup\{v_2\}$, where $G'\!=\!(V\!\setminus\! \{v_2\}, E\!\setminus\! E(v_2))$.
\end{enumerate}}
\end{branchrule}

{\noindent This branching is exhaustive (we either choose $v_1$, or delete $v_1$ and thus choose $v_2$). Choosing $v_1$, we delete one 2-edge, and then apply the first case of Rule \ref{rule:whsdominate}. Deleting $v_1$, we delete two 2-edges, but introduce a new 2-edge (from a 3-edge previously adjacent to $v_1$). Therefore, we get a branching vector that is at least as good as $(2-\alpha_3,1-\alpha_3+\alpha_2)=(1.2,0.75)$, whose root is smaller than 2.168.}

\begin{figure}[!h]\centering
\frame{\includegraphics[scale=0.8]{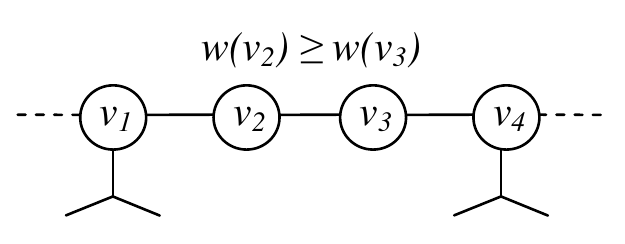}}
\caption{Rule \ref{rule:hs16} of \alg{WHS-Alg}. Note that, by Rule \ref{rule:hs14}, $|E(v_2)|=|E(v_3)|=2$, and by Rule \ref{rule:whsleafreduce}, $|E(v_1)|,|E(v_4)|\geq 2$.}
\end{figure}

\begin{branchrule}\label{rule:hs17}
{\normalfont [$G$ contains exactly three 2-edges, and there exist $v,u\in V$ such that $\{v,u\}\in E$, the only 2-edge in $E(v)\cup E(u)$ is $\{v,u\}$ and $|E(v)|\geq 3$]
\begin{enumerate}
\item If the result of \alg{WHS-Alg}($G'\!=\!(V\!\setminus\! \{v\}, E\!\setminus \!E(v)),w,W\!\!-\!w(v),k\!-\!1$) is not NIL: Return it along with $v$.
\item Else: Return \alg{WHS-Alg}($G'[V\!\setminus\!\{v,u\}],w,W\!\!-\!w(u),k\!-\!1)\cup\{u\}$, where $G' \!=\! G(V\!\setminus\! \{u\}, E\!\setminus\! E(u))$.
\end{enumerate}}
\end{branchrule}

{\noindent This branching is exhaustive. Choosing $v$, we delete the 2-edge $\{v,u\}$. Choosing $u$, we also delete this edge, but introduce two new 2-edges (from 3-edges previously adjacent to $v$). We get a branching vector that is at least as good as $(1-\alpha_3+\alpha_2,1-\alpha_3+\alpha_4)=(0.75,1.07)$, whose root is smaller than 2.168.}

\begin{figure}[!h]\centering
\frame{\includegraphics[scale=0.8]{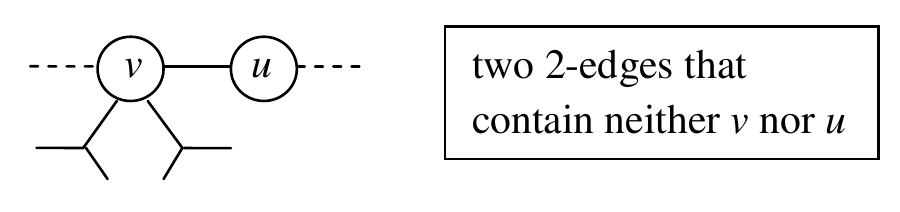}}
\caption{Rule \ref{rule:hs17} of \alg{WHS-Alg}.}
\end{figure}

\begin{branchrule}\label{rule:hs18}
{\normalfont [$G$ contains exactly three 2-edges, and there are $v,u\in V$ such that $\{v,u\}\in E$, the only 2-edge in $E(v)\cup E(u)$ is $\{v,u\}$, and $w(v)\geq w(u)$] Let $r_1$ and $r_2$ be the vertices, excluding $v$, in the 3-edge $e\in E(v)\setminus\{\{v,u\}\}$. Let $S$ be the set of vertices in 1-edges in $\widetilde{G}[V\setminus e]$, where $\widetilde{G}=(V\setminus \{v\}, E\setminus E(v))$.
\begin{enumerate}
\item If the result of \alg{WHS-Alg}($G'[V\setminus (e\!\cup\! S)],w,W\!-\!w(\{v\}\cup S),k\!-\!1\!-\!|S|$) is not NIL, where $G'=(V\setminus (\{v\}\!\cup\! S), E\setminus E(\{v\}\!\cup\! S))$: Return it along with $\{v\}\!\cup\! S$.
\smallskip
\item Else: Return \alg{WHS-Alg}($G'[V\setminus \{v,u\}],w,W\!-w(u),k-1)\cup\{u\}$, where $G'=(V\setminus \{u\}, E\setminus E(u))$.
\end{enumerate}}
\end{branchrule}

{\noindent As in the proof of Rule \ref{rule:12edgeb}, we can assume w.l.o.g that $|E(r_1)|\geq 2$. Choosing $v$, we need to choose neither $r_1$ nor $r_2$, since then we can replace $v$ by $u$ and get a hitting set that is not heavier than the one with $v$ (the choice of $u$ is examined in the second branch). Thus, in this branch, we delete $r_1$ and $r_2$, and thus we must choose the vertices in $S$. For the analysis of the branching vector, note that, in this branch, we delete $\{v,u\}$. If $S=\emptyset$, we introduce a new 2-edge (from a 3-edge previously adjacent to $r_1$), such that we get three edges that do not all have a common vertex (if we got exactly three 3-edges that have a common vertex, Rule \ref{rule:hs14} would have been applicable, and we would not have reached this rule). If $S\neq \emptyset$, at the worst case, $|S|=1$ and we delete the other two 2-edges (overall, we then delete all the 2-edges in $E$). Deleting $v$ (and choosing $u$) in the second branch, we delete $\{v,u\}$, but introduce a new 2-edge, $\{r_1,r_2\}$, such that we get three 2-edges that do not all have a common vertex (again, if this is not the case, Rule \ref{rule:hs14} should have been applied). Thus, we get a branching vector that is at least as good as $\max\{(1,1),(2-\alpha_3,1)\} = (1,1)$, whose root is 2.}

\begin{figure}[!h]\centering
\frame{\includegraphics[scale=0.8]{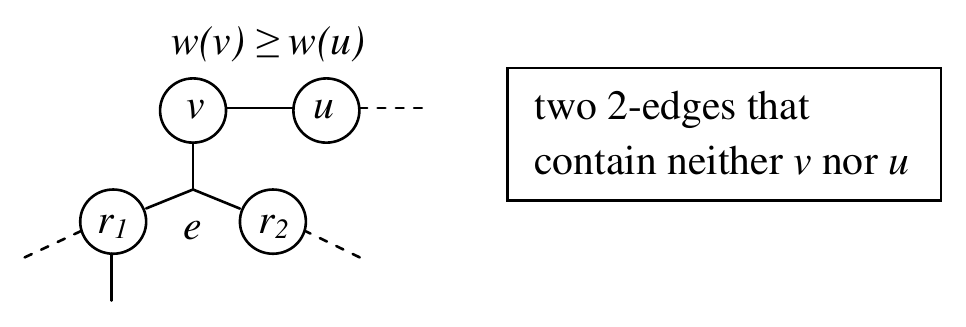}}
\caption{Rule \ref{rule:hs18} of \alg{WHS-Alg}. Note that, since Rules \ref{rule:whsdominate} and \ref{rule:hs17} did not apply, $|E(v)|=2$.}
\end{figure}

{\noindent From now on, since Rules \ref{rule:hs14}--\ref{rule:hs18} did not apply, $G$ contains at least four 2-edges.}

\begin{branchrule}\label{rule:hs19}
{\normalfont [$G$ contains at least four 2-edges that have a common vertex $v$] Let $S$ be set of vertices, excluding $v$, of the 2-edges in $E(v)$.
\begin{enumerate}
\item If the result of \alg{WHS-Alg}($G'\!=\!(V\!\setminus\! \{v\}, E\!\setminus\! E(v)),w,W\!\!-\!w(v),k\!-\!1$) is not NIL: Return it along with $v$.
\item Else: Return \alg{WHS-Alg}($G'[V\setminus (S\!\cup\!\{v\})],w,W\!\!-\!w(S),k\!-\!|S|)\cup S$, where $G'\!=\!(V\!\setminus\! S, E\!\setminus\! E(S))$.
\end{enumerate}}
\end{branchrule}

{\noindent This branching is exhaustive (we either choose $v$, or delete $v$ and thus choose $S$). We get a branching vector that is at least as good as $(1-\alpha_4,4-\alpha_4)=(0.13,3.13)$, whose root is smaller than 2.168.}

\begin{branchrule}\label{rule:hs20}
{\normalfont [$G$ contains three 2-edges that have a common vertex $v$] Let $S$ be set of vertices, excluding $v$, of the 2-edges in $E(v)$.
\begin{enumerate}
\item If the result of \alg{WHS-Alg}($G'\!=\!(V\!\setminus\! \{v\}, E\!\setminus\! E(v)),w,W\!\!-\!w(v),k\!-\!1$) is not NIL: Return it along with $v$.
\item Else: Return \alg{WHS-Alg}($G'[V\setminus (S\!\cup\!\{v\})],w,W\!\!-\!w(S),k\!-\!|S|)\cup S$, where $G'\!=\!(V\!\setminus\! S, E\!\setminus\! E(S))$.
\end{enumerate}}
\end{branchrule}

{\noindent This branching is exhaustive (we either choose $v$, or delete $v$ and thus choose $S$). Note that $|S|=3$, and that in the first branch we do not delete the 2-edge that is not adjacent to $v$. Thus, we get a branching vector that is at least as good as $(1-\alpha_4+\alpha_1,3-\alpha_4)=(0.48,2.13)$, whose root is smaller than 2.168.}

\begin{branchrule}\label{rule:hs21}
{\normalfont [There are different $v,u,r\in V$ such that $\{v,u\},\{u,r\}\in E$ and $|E(u)|\geq 3$]
\begin{enumerate}
\item If the result of \alg{WHS-Alg}($G'\!=\!(V\!\setminus\! \{u\}, E\!\setminus\! E(u)),w,W\!\!-\!w(u),k\!-\!1$) is not NIL: Return it along~with~$u$.
\item Else: Return \alg{WHS-Alg}($G'[V\setminus \{v,u,r\}],w,W\!-w(\{v,r\}),k-2)\cup\{v,r\}$, where $G'=(V\setminus \{v,r\}, E\setminus (E(v)\cup E(r)))$.
\end{enumerate}}
\end{branchrule}

{\noindent This branching is exhaustive (we either choose $u$, or delete $u$ and thus choose $v$ and $r$). Since the two previous rules did not apply, there are at least two 2-edges that are not adjacent to $u$; therefore, in the first branch, we do not delete at least two 2-edges. In the second branch, we may delete all existing 2-edges, but introduce a new one (since $|E(u)|\geq 3$). Thus, we get a branching vector that is at least as good as $(1-\alpha_4+\alpha_2,2-\alpha_4+\alpha_1)=(0.68,1.48)$, whose root is smaller than 2.168.}

\begin{branchrule}\label{rule:hs22}
{\normalfont [Remaining case]
Let $v$ be a vertex in $V$ that is adjacent to exactly one 2-edge, $\{v,u\}$, for some $u\in V$, where $|E(v)|\geq 2$.
\begin{enumerate}
\item If the result of \alg{WHS-Alg}($G'\!=\!(V\!\setminus\! \{v\}, E\!\setminus\! E(v)),w,W\!\!-\!w(v),k\!-\!1$) is not NIL: Return it along with $v$.
\item Else: Return \alg{WHS-Alg}($G'[V\setminus \{v,u\}],w,W\!-w(u),k-1)\cup S$, where $G'=(V\setminus \{u\}, E\setminus E(u))$.
\end{enumerate}}
\end{branchrule}

{\noindent Since the previous rule did not apply, the rule is well-defined (i.e., there exists $v$ as defined in this rule), and $u$ is contained in at most one 2-edge that is not $\{v,u\}$. The branching is exhaustive (we either choose $v$, or delete $v$ and thus choose $u$), and thus correct. In the first branch, we delete only one 2-edge. Thus, there remain at least three 2-edges, where, if there remain exactly three, they do not have a common vertex (otherwise Rule \ref{rule:hs19} or \ref{rule:hs20} was applied). In the second branch, we delete, at worst, two 2-edges. However, we also introduce a new 2-edge (from a 3-edge adjacent to $v$), and thus have at least three 2-edges, where, if there are exactly three, they do not have a common vertex (otherwise the previous rule was applied). Thus, we get a branching vector that is at least as good as $(1\!-\!\alpha_4\!+\!\alpha_3,1\!-\!\alpha_4\!+\!\alpha_3)=(0.93,0.93)$, whose root is smaller than~2.168.}

\mysubsection{A Faster Algorithm for Graphs that Have a Small HS}\label{section:whs2}

In this appendix, we show that our algorithms for {\sc $k$-WVC} (given in Section \ref{section:wvc1} and Appendix \ref{section:wvc2}) can be used to develop an algorithm for {\sc W3HS} that is fast on graphs that have a small (unweighted) hitting set. More precisely, we prove the following.

\begin{theorem}\label{theorem:whs2}
{\sc W3HS} can be solved in $O^*(1.381^{s-t}2.381^t)$ time and polynomial space, or $O^*(1.363^{s-t}2.363^t)$ time and $O^*(1.363^s)$ space.
\end{theorem}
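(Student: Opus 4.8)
The plan is to exploit a small \emph{unweighted} hitting set $H$ to reduce {\sc W3HS} to {\sc WVC}, for which we already have the fast algorithms of Section~\ref{section:wvc1} and Theorem~\ref{theorem:wvc2}. First I would compute a minimum-size unweighted hitting set $H$, with $|H|=t$, using the $O^*(2.076^t)$-time algorithm of Wahlstr\"om~\cite{hs2007}; since $2.076<2.363<2.381$, this preprocessing is dominated by the target running times. The key structural fact is that every hyperedge of $G$ contains at least one vertex of $H$. Hence, once we have decided for each vertex of $H$ whether it belongs to the solution, every hyperedge is either already hit (by a chosen vertex of $H$) or reduced to a constraint on its at most two vertices outside $H$ --- that is, to a $2$-edge, or to a (forced) $1$-edge when two of its vertices lie in $H$, or to an empty (unsatisfiable) edge when all three lie in $H$. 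After folding the forced $1$-edge vertices into the partial solution, the residual instance is exactly a {\sc WVC} instance on $V\setminus H$.

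Concretely, I would branch over the $t$ vertices of $H$: for $v\in H$, one branch \emph{includes} $v$ (delete $v$ and all incident hyperedges, pay $w(v)$, and decrease the size budget $k$ by $1$), and the other \emph{excludes} $v$ (delete $v$ from every hyperedge, turning its $3$-edges into $2$-edges, and \emph{leave $k$ unchanged}). Since $H$ is a fixed set of $t$ vertices, this branching terminates after depth $t$ regardless of whether $k$ decreases; a branch creating an empty edge returns NIL. At each leaf I would invoke the {\sc $k$-WVC} subroutine on the residual graph with weight bound $W-w(D)$ and size budget $k-|D|$, where $D$ is the set of included vertices, and return its result together with $D$. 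Correctness follows the multivariate framework of Section~\ref{section:technique}: given any hitting set $S$ of weight at most $W$ and size at most $s$, the unique branch that guesses $D=S\cap H$ correctly leaves a {\sc WVC} instance whose minimum-weight cover $C$ of $V\setminus H$, together with $D$, is a hitting set of weight at most $w(S)\le W$; the flexible use of $k$ then guarantees that the returned solution has weight at most $W$ even if its size exceeds $k$. Running the algorithm with $k$ increasing until success (or $k=\min\{W,m\}$) yields a result in terms of $s$.

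For the running time I would analyse the branching tree over $H$ directly. A leaf in which $j$ vertices of $H$ were included reduces the budget from $s$ to at most $s-j$ before the {\sc WVC} call (folded forced inclusions only decrease it further), so that leaf costs $O^*(c^{\,s-j})$, where $c=1.381$ for the polynomial-space subroutine of Section~\ref{section:wvc1} and $c=1.363$ for \alg{WVCnoW-Alg2} of Theorem~\ref{theorem:wvc2}. There are exactly ${t \choose j}$ such leaves, so the total time is bounded by
\[
\sum_{j=0}^{t}{t \choose j}\,c^{\,s-j} \;=\; c^{\,s}\Bigl(1+\frac{1}{c}\Bigr)^{t} \;=\; c^{\,s-t}(c+1)^{t}.
\]
With $c=1.381$ this gives $O^*(1.381^{s-t}2.381^t)$ time and polynomial space (both the branching and the subroutine use polynomial space), and with $c=1.363$ it gives $O^*(1.363^{s-t}2.363^t)$ time and $O^*(1.363^{s})$ space, since each {\sc WVCnoW-Alg2} call uses $O^*(1.363^{k'})\le O^*(1.363^{s})$ space that is reused across branches. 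The geometric sum over the iterations $k=1,2,\dots,s$ is dominated by its last term, so the bounds hold with the parameter $s$ as stated; note that minimality of $H$ matters, since $c^{\,s-t'}(c+1)^{t'}$ is increasing in $t'$.

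The hard part will be the correctness bookkeeping rather than the arithmetic: I must verify that decreasing the budget only on \emph{include} branches (and never on \emph{exclude} branches) is sound within the flexible framework --- i.e.\ that it never rejects a genuine weight-$\le W$ solution, yet keeps the size accounting tight enough for the $c^{\,s-t}(c+1)^{t}$ bound --- and that the reduction of hyperedges with two or three vertices in $H$ to forced $1$-edges (handled by pre-including the forced vertex) or to pruned empty edges is correctly propagated into the downstream {\sc WVC} instance, so that the cover returned there, combined with $D$ and the forced vertices, is always a valid hitting set. Once these invariants are established, the remainder is the routine calculation displayed above.
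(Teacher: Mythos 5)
Your proposal is correct and takes essentially the same approach as the paper: the paper's algorithm \alg{WHS*-Alg} likewise computes a minimum unweighted hitting set $U$ via the $O^*(2.076^t)$ algorithm of Wahlstr\"om, enumerates all $2^t$ subsets $U'\subseteq U$ as the included part (your include/exclude branching over $H$ is the same enumeration), adds the forced vertices (its set $S$, matching your folding of forced $1$-edges, with the same NIL check when a forced vertex is also excluded), and calls a {\sc $k$-WVC} subroutine on the residual $2$-uniform instance with budget $k-|U'\cup S|$. The running-time calculation is also identical, namely $\sum_{U'\subseteq U} c^{k-|U'|} = c^{k-t}(c+1)^t$ for $c\in\{1.381,1.363\}$.
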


For example, if $s=2t$, we obtain algorithms which use only $O^*(1.814^s)$ time and polynomial space, and $O^*(1.795^s)$ time and $O^*(1.363^s)$ space, respectively. Next, we develop \alg{WHS*-Alg}, an algorithm for which we prove the following result, which implies the correctness of Theorem \ref{theorem:whs2}.

\begin{lem}
\alg{WHS*-Alg} solves {\sc $k$-W3HS} in $O^*(1.381^{k-t}2.381^t)$ time and polynomial space, or $O^*(1.363^{k-t}2.363^t)$ time and $O^*(1.363^k)$ space.
\end{lem}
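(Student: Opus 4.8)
The plan is to reduce {\sc $k$-W3HS} to $2^t$ independent instances of {\sc $k$-WVC}, one for each way an optimal solution can intersect a minimum (unweighted) hitting set, and then invoke our {\sc $k$-WVC} algorithms as black boxes. Concretely, \alg{WHS*-Alg} first computes a minimum-size unweighted hitting set $H^*$ of the input hypergraph $G$, with $|H^*|=t$, using the best known unweighted {\sc 3HS} algorithm, which runs in $O^*(2.076^t)$ time and polynomial space \cite{hs2007}; since $2.076 < 2.363 < 2.381$, this preprocessing cost is dominated by the bounds we are after. The algorithm then branches over all $2^t$ subsets $S\subseteq H^*$: in the branch indexed by $S$, it commits to placing every vertex of $S$ in the solution and forbidding every vertex of $H^*\setminus S$.

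The key structural observation is that, because $H^*$ hits every (hyper)edge, deciding all of $H^*$ turns each branch into a pure {\sc WVC} instance. Indeed, in branch $S$, consider any edge $e$. Either $e$ contains a vertex of $S$, in which case $e$ is already hit and can be discarded; or all vertices of $e\cap H^*$ lie in $H^*\setminus S$ and are forbidden, so $e$ collapses to $e\setminus H^*$, which has at most $|e|-1\le 2$ vertices. Thus every surviving edge is a $2$-edge, a $1$-edge, or a $0$-edge. A $0$-edge makes the branch infeasible (return NIL); a $1$-edge forces its unique vertex (a reduction, charged to the weight and size budgets as usual); and the remaining $2$-edges form a graph $G_S$ on $V\setminus H^*$ that is exactly a weighted vertex cover instance. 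I would then solve $G_S$ with budgets $W-w(S)$ and $k-|S|$ by calling \alg{WVC-Alg} (the polynomial-space variant, base $1.381$) or \alg{WVCnoW-Alg2} of Theorem~\ref{theorem:wvc2} (base $1.363$), and return $S$ together with the cover it reports; over all branches, \alg{WHS*-Alg} outputs the lightest hitting set of weight at most $W$ it has produced, or NIL.

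For correctness, let $O$ be a hitting set of weight at most $W$ and size at most $k$, and set $S_0=O\cap H^*$. In branch $S_0$ the residual instance $G_{S_0}$ admits the vertex cover $O\setminus H^*\subseteq V\setminus H^*$, which has weight at most $W-w(S_0)$ and size at most $k-|S_0|$; hence, by the {\sc $k$-WVC} guarantee, the {\sc WVC} subroutine returns a cover of weight at most $W-w(S_0)$, and $S_0$ together with this cover is a hitting set of weight at most $W$. For the running time, since the subroutine on $G_S$ runs in $O^*(1.381^{k-|S|})$ time (or $O^*(1.363^{k-|S|})$), the total time is, up to polynomial factors, $\sum_{S\subseteq H^*}1.381^{k-|S|}=1.381^{k}\bigl(1+1.381^{-1}\bigr)^{t}=1.381^{k-t}\bigl(1.381+1\bigr)^{t}=1.381^{k-t}\,2.381^{t}$, and identically $1.363^{k-t}2.363^{t}$ in the second variant; the clean identity $\beta(1+\beta^{-1})=\beta+1$ is exactly what produces the bases $2.381$ and $2.363$. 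Processing the $2^t$ branches one at a time keeps the space polynomial in the first variant, and lets the second variant reuse a single memoization store, so its space stays $O^*(1.363^{k})$.

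The main obstacle I expect is not the arithmetic but the careful interaction with the framework's flexible use of $k$. I must verify that passing the reduced size budget $k-|S|$ to the {\sc WVC} subroutine, together with the reductions that force $1$-edge vertices, still composes into a valid {\sc $k$-W3HS} output: a returned cover may legitimately exceed size $k-|S|$, so $S\cup(\text{cover})$ may exceed size $k$, yet this is permitted by the {\sc $k$-W3HS} contract, and one must check that no genuine solution of size at most $k$ is lost in the branch $S_0$. I would also confirm that using a minimum (rather than arbitrary) $H^*$ is essential---since $2.381>1.381$, the bound $1.381^{k-t'}2.381^{t'}$ is increasing in $|H^*|=t'$---and that the $O^*(2.076^t)$ computation of $H^*$ and the $2^t$-fold enumeration overhead are genuinely dominated by the stated bounds.
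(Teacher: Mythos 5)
Your proposal is correct and follows essentially the same route as the paper: compute a minimum unweighted hitting set in $O^*(2.076^t)$ time, enumerate its $2^t$ subsets as the intersection of the solution with that set, propagate the resulting forced vertices (your $1$-edges are exactly the paper's set $S$, and your infeasible $0$-edge branches are its skipped iterations), reduce each branch to a {\sc $k$-WVC} instance on the complement with budgets $W-w(\cdot)$ and $k-|\cdot|$, and sum $\sum_{S}c^{k-|S|}=c^{k-t}(c+1)^t$. Your observations about the flexible use of $k$ (the combined set may exceed size $k$ but this is permitted by the contract) and about needing a \emph{minimum} hitting set match the paper's treatment, so there is nothing to correct.
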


\begin{proof}
Let \alg{ALG}($G,w,W,k$) be an algorithm for {\sc $k$-WVC}.
On a high level, building upon the approach of \cite{hs2010a}, \alg{WHS*-Alg} first computes a minimum(-size) hitting set $U$. Then, \alg{WHS*-Alg} considers every subset $U'$ of $U$ as a possible ``partial solution'', which should be completed to a solution (i.e., a light hitting set) by adding vertices from $V\setminus U$ (only). \alg{WHS*-Alg} attempts to complete $U'$ in this manner by running \alg{ALG} on a certain subgraph of $G[V\setminus U]$. We now give the pseudocode of \alg{WHS*-Alg}, which contains the exact definition of this subgraph.

\begin{algorithm}[!ht]
\caption{\alg{WHS*-Alg}($G=(V,E),w: V\rightarrow \mathbb{R}^{\geq 1},W,k$)}
\begin{algorithmic}[1]
\STATE\label{step:minhs} Compute a minimum(-size) hitting set $U$ for $G$ using the algorithm of \cite{hs2007}.
\FORALL{$U'\subseteq U$}\label{step:loop}
	\STATE Let $S$ include every $v\in V$ for which there exists $e\in E$ such that $e\setminus\{v\}\subseteq (U\setminus U')$.
	\IF{$S\cap (U\setminus U') \neq \emptyset$}\label{step:skipite}
		\STATE Skip the rest of the current iteration.
	\ENDIF
	\STATE Let $G_{U'}$ be defined by $V(G_{U'})\!=\!V\setminus (U'\!\cup\! S)$ and $E(G_{U'})\!=\!E\setminus E(U'\!\cup\! S)$.
	\STATE Let $S'\Leftarrow$ \alg{ALG}($G_{U'}[V\setminus(U\cup S)], w, W-w(U'\cup S), k-|U'\cup S|$).	
	\IF{$S'\neq$ NIL}
		\STATE Return $U'\cup S\cup S'$.
	\ENDIF
\ENDFOR
\STATE Return NIL.
\end{algorithmic}
\end{algorithm}

\myparagraph{Correctness} Denote $\widetilde{G}_{U'} = G_{U'}[V\setminus(U\cup S)]$. First, observe that since $U$ is a hitting set, $\widetilde{G}_{U'}$ does not contain a 3-edge, and thus it is a legal input for \alg{ALG}.

Let $U'\cup S\cup S'$ be a solution returned by \alg{WHS*-Alg}. Clearly, $w(U'\cup S\cup S')\leq W$. Consider some edge $e\in E$. By Step \ref{step:skipite}, $e$ is not a subset of $U\setminus U'$. Thus, if $e\cap (U'\cup S) = \emptyset$, there is an edge $e'\in \widetilde{G}_{U'}$ such that $e'\subset e$. Since $S'$ is a vertex cover for $\widetilde{G}_{U'}$, we get that $e'\cap S'\neq\emptyset$. Thus, $U'\cup S\cup S'$ is a hitting set for $G$.

Now, let $A$ be a hitting set for $G$ such that $w(A)\leq W$ and $|A|\leq k$. Denote $U'=A\cap U$. Suppose that \alg{WHS*-Alg} reached the iteration of Step \ref{step:loop} that corresponds to $U'$ (by the former direction, if the algorithm has not reached this iteration, then it has already returned a correct solution). Since $S\subseteq A$, and $A\setminus(U'\cup S)$ is a vertex cover for $\widetilde{G}_{U'}$ of weight at most $W-(U'\cup S)$ and size at most $k-|U'\cup S|$, \alg{WHS*-Alg} returns a solution in the current iteration, which is correct (by the former direction).

\myparagraph{Time and Space Complexities} The algorithm of \cite{hs2007}, called in Step \ref{step:minhs}, uses $O^*(2.076^t)$ time and polynomial space. By Section \ref{section:wvc1}, we can choose \alg{ALG} such that \alg{WHS*-Alg} uses polynomial space and runs in time bounded by

\[
\displaystyle{O^*(\sum_{U'\subseteq U}1.381^{k-|U'|}) = O^*(1.381^{k-t}\sum_{U'\subseteq U}1.381^{t-|U'|}) = O^*(1.381^{k-t}2.381^t)}
\]

Moreover, by Appendix \ref{section:wvc2}, we can choose \alg{ALG} such that \alg{WHS*-Alg} uses $O^*(1.363^k)$ space and runs in time bounded by

\[
\displaystyle{O^*(\sum_{U'\subseteq U}1.363^{k-|U'|}) = O^*(1.363^{k-t}\sum_{U'\subseteq U}1.363^{t-|U'|}) = O^*(1.363^{k-t}2.363^t)}
\]\qed
\end{proof}
\section{Algorithms for Weighted Edge Dominating Set}\label{sec:WEDS}

In this appendix, we develop FPT algorithms for {\sc WEDS}. First, in Appendix \ref{section:edsvc}, we recall known results on the relation between edge dominating sets and vertex covers. We also observe (in Appendix \ref{section:edsvc}) that, relying on the flexible use of the parameter $k$ in our framework, the algorithm for {\sc EDS} of \cite{eds2013} can be modified to solve {\sc $k$-WEDS} in $O^*(2.315^k)$ time and polynomial space. Thus, {\sc WEDS} can be solved in $O^*(2.315^s)$ time and polynomial space. We complement this result (in Appendix~\ref{section:weds2}) by developing an $O^*(3^t)$ time and polynomial space algorithm for {\sc WEDS}.

\subsection{A Relation Between Edge Dominating Sets and Vertex Covers}\label{section:edsvc}

Consider some instance $(G=(V,E), w: V\rightarrow \mathbb{R}^{\geq 1}, W, k)$ of {\sc WEDS}. Let $U\subseteq E$ be an edge dominating set of $G$, and let $V(U)=\bigcup_{e\in U}e$ denote
 the set of endpoints of the edges in $U$. As observed in \cite{weds2006}, $V(U)$ is a vertex cover of $G$. Indeed, if there is an edge that is not covered by a vertex in $V(U)$, then this edge is not covered by an edge in $U$, which contradicts the fact that $U$ is an edge dominating set. Thus, we have the following observation.

\begin{obs}\label{obs:edstovc}
For any edge dominating set $U$ of $G$, $V(U)$ is a vertex cover (of size at most $2|U|$) of $G$.
\end{obs}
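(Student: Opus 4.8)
The plan is to verify both assertions directly from the definition of an edge dominating set, as no branching or case analysis is needed. First I would dispose of the cardinality bound: each edge $e\in U$ contributes its (at most two) endpoints to $V(U)=\bigcup_{e\in U}e$, so $|V(U)|\le \sum_{e\in U}|e|=2|U|$, which gives the claimed size. The only subtlety here is the trivial observation that endpoints contributed by distinct edges of $U$ may coincide, so the bound is an inequality rather than an equality.

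For the vertex cover property, I would argue by contradiction. Suppose $V(U)$ is \emph{not} a vertex cover of $G$. Then some edge $f=\{x,y\}\in E$ has neither endpoint in $V(U)$, i.e.\ $x,y\notin V(U)$. By the definition of $V(U)$, this means that no edge of $U$ has $x$ or $y$ as an endpoint, so $f$ shares no endpoint with any edge of $U$. But this says precisely that $f$ is not dominated by $U$, contradicting the hypothesis that $U$ is an edge dominating set (recall that $U$ dominates $f$ exactly when $f$ touches an endpoint of some edge in $U$). Hence every edge of $E$ has an endpoint in $V(U)$, which is exactly the statement that $V(U)$ is a vertex cover of $G$.

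The statement has no genuine obstacle; the single point requiring care is to match the paper's formulation of domination (``every edge in $E$ touches an endpoint of an edge in $U$'') with the covering condition, i.e.\ to make explicit that ``$f$ touches an endpoint of an edge in $U$'' is logically the same as ``$f$ has an endpoint in $V(U)$''. Once this equivalence is spelled out, each half of the observation follows in a single line, and the proof is complete.
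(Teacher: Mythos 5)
Your proof is correct and follows essentially the same route as the paper: the paper also argues that an edge not covered by any vertex of $V(U)$ would not be dominated by any edge of $U$, contradicting that $U$ is an edge dominating set, and treats the $2|U|$ bound as immediate from counting endpoints. No issues.
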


We say that a subset of vertices $A\subseteq V$ {\em represents} another subset of vertices $B\subseteq V$ if $A\subseteq B$. 
Also, a family ${\cal A}$ of subsets of vertices is an {\em $\ell$-representation of vertex covers}, for some $\ell\in\mathbb{N}$, if for any vertex cover $B$ of $G$ of size at most $\ell$, for which there exists an edge dominating set $U$ of size at most $k$ that satisfies $U\subseteq V(B)$, there exists $A\in{\cal A}$ that represents $B$. Now, if for every $A\in{\cal A}$, $G[V\setminus A]$ contains only connected components that are cliques on at most three vertices (i.e., isolated vertices, paths on two vertices and triangles), we further say that ${\cal A}$ is a {\em good} $\ell$-representation of vertex covers. For such representations, we can use the following result, whose proof is given in \cite{exactweds} (generalizing a~result~of~\cite{weds2006}).

\begin{lem}\label{lemma:wedsgoodrep}
Let $A$ be a vertex cover that belongs to some good $\ell$-representation of vertex covers, and let ${\cal U}$ be the set of every edge dominating set $U$ of $G$ such that $A\subseteq V(U)$. Then, one can compute in polynomial time an edge dominating set of $G$ that has minimum weight among those in ${\cal U}$.
\end{lem}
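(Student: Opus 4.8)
The plan is to reduce the weighted optimization over $\mathcal{U}$ to a polynomial-time weighted matching computation, after first recasting membership in $\mathcal{U}$ as a pure covering condition. I would begin with the following reformulation. For any $U\in\mathcal{U}$ we have $A\subseteq V(U)$, so every edge of $G$ incident to $A$ is dominated (it shares an endpoint in $A\subseteq V(U)$ with some edge of $U$); hence $U$ is an edge dominating set precisely when, in addition, every edge lying strictly inside a connected component of $G[V\setminus A]$ is dominated. Since $A$ belongs to a \emph{good} representation, each such component $K$ is a clique on at most three vertices, and an edge inside $K$ is dominated exactly when one of its endpoints lies in $V(U)$. Thus the extra requirement is simply that $V(U)\cap K$ be a vertex cover of $K$: nothing for an isolated vertex, at least one endpoint for a $2$-vertex path, and at least two vertices for a triangle. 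Consequently $\mathcal{U}$ is exactly the family of edge sets $U$ such that (i) every vertex of $A$ is an endpoint of an edge of $U$, and (ii) the endpoints of $U$ falling in each clique component form a vertex cover of that component. In particular $\mathcal{U}=\emptyset$ only when some vertex of $A$ is isolated, a case I would dispose of immediately; otherwise the task is to minimize $w(U)$ over this family.

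Next I would cast this as a minimum-weight edge-cover problem with a flexible terminal set and reduce it to weighted matching, which is solvable in polynomial time by Edmonds' algorithm. The vertices of $A$ are mandatory terminals (they must be covered); for each $2$-vertex clique at least one of its vertices must become a terminal, and for each triangle at least two of its three. Following the classical reduction of minimum-weight edge cover to maximum-weight matching -- assign each terminal $v$ its cheapest incident edge, of weight $c(v)=\min_{e\ni v}w(e)$, and record the saving $c(u)+c(v)-w(\{u,v\})$ whenever a single edge $\{u,v\}$ is used to cover two terminals at once -- I would build an auxiliary weighted graph $H$ of size polynomial in $|V|$ in which minimum-weight members of $\mathcal{U}$ correspond, up to a fixed additive constant, to maximum-weight matchings of $H$. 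The only nonstandard ingredient is the freedom in which clique vertices are chosen as terminals; but since each component of $G[V\setminus A]$ has at most three vertices and is attached to the rest of $G$ only through edges into $A$, each component admits a constant number of feasible cover patterns, which I would encode by a constant-size gadget in $H$ so that any matching selects exactly one feasible pattern per clique.

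I expect the main obstacle to be verifying correctness of this reduction in both directions while preserving weights -- in particular handling the triangle constraint (two of three vertices covered, where the two chosen vertices may be joined to each other by a clique edge or attached separately into $A$) and ensuring that an edge simultaneously satisfying two coverage demands is counted once. I would settle this by exhibiting the correspondence explicitly: from any feasible $U\in\mathcal{U}$ one extracts a matching of $H$ of the corresponding weight, and from an optimal (maximum-weight) matching of $H$ one reads off a feasible $U\in\mathcal{U}$ of equal weight, so that optimality of the matching yields a minimum-weight element of $\mathcal{U}$. Since $H$ is polynomially large and weighted matching is polynomial-time solvable, the procedure runs in polynomial time; it is the weighted, clique-structured generalization of Fernau's extension argument for unweighted {\sc EDS} \cite{weds2006}, as formalized in \cite{exactweds}.
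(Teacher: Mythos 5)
Your proposal is correct and follows essentially the same route as the proof the paper relies on: the paper does not prove this lemma itself but defers to \cite{exactweds} (generalizing \cite{weds2006}), and that argument proceeds exactly as you describe --- recast membership in ${\cal U}$ as the condition that $V(U)$ contains $A$ together with a vertex cover of each clique component of $G[V\setminus A]$ on at most three vertices, then solve the resulting minimum-weight edge-cover problem with flexible terminals via maximum-weight matching. Your reformulation step is complete and accurate; the only part you leave schematic (the explicit constant-size matching gadgets enforcing ``at least one of two'' and ``at least two of three'' per component) is precisely the technical content worked out in \cite{exactweds}.
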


Thus, by Observation \ref{obs:edstovc} and Lemma \ref{lemma:wedsgoodrep}, we have the following proposition, used to develop algorithms for {\sc WEDS}.

\begin{prop}\label{prop:weds}
An edge dominating set $U$ of $G$ whose weight is smaller or equal to the weight of any edge dominating set of $G$ whose size is at most $k$ can be computed as follows. For an arbitrary good $2k$-representation of vertex covers $\cal A$ of $G$, iterate over every set $A\in{\cal A}$, and compute a corresponding best edge dominating set using Lemma \ref{lemma:wedsgoodrep}. Then, return an edge dominating set of minimum weight among the computed ones.
\end{prop}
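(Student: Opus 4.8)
The plan is to show that the returned set $U^*$ is an edge dominating set whose weight is at most that of any edge dominating set of size at most $k$. First I would fix a minimum-weight edge dominating set $U_{\mathrm{opt}}$ among all those of size at most $k$ (if none exists, the claim holds vacuously), and reduce the goal to proving $w(U^*) \le w(U_{\mathrm{opt}})$. Since every set the procedure computes via Lemma~\ref{lemma:wedsgoodrep} is an edge dominating set, $U^*$ is itself an edge dominating set, so it suffices to exhibit a single computed candidate of weight at most $w(U_{\mathrm{opt}})$.

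I would then invoke Observation~\ref{obs:edstovc} to obtain that $B := V(U_{\mathrm{opt}})$ is a vertex cover of $G$ of size at most $2|U_{\mathrm{opt}}| \le 2k$. The set $U_{\mathrm{opt}}$ itself witnesses the side condition in the definition of a $2k$-representation: it is an edge dominating set of size at most $k$ whose endpoints satisfy $V(U_{\mathrm{opt}}) \subseteq B$ (in fact with equality). Hence, because $\cal A$ is a good $2k$-representation of vertex covers, there exists $A \in \cal A$ that represents $B$, that is, $A \subseteq B = V(U_{\mathrm{opt}})$.

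Next I would examine the iteration in which the procedure processes this particular $A$. By Lemma~\ref{lemma:wedsgoodrep}, applied to $A$ (which belongs to the good $2k$-representation $\cal A$), the procedure computes an edge dominating set $U_A$ of minimum weight among all edge dominating sets $U$ with $A \subseteq V(U)$. Since the inclusion $A \subseteq V(U_{\mathrm{opt}})$ established above places $U_{\mathrm{opt}}$ inside this very family, we get $w(U_A) \le w(U_{\mathrm{opt}})$. As $U^*$ is chosen to be a computed edge dominating set of minimum weight, $w(U^*) \le w(U_A) \le w(U_{\mathrm{opt}})$, which completes the argument (and, by minimality of $U_{\mathrm{opt}}$, gives $w(U^*) \le w(U)$ for every edge dominating set $U$ with $|U| \le k$).

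The main point to get right, and the only genuine obstacle, is the bookkeeping that links the three ingredients: the size bound $|B| \le 2k$ (so that the $2k$-representation is the correct object to apply), the representation property (so that some $A \in \cal A$ lies inside $V(U_{\mathrm{opt}})$), and the optimality guarantee of Lemma~\ref{lemma:wedsgoodrep} (so that $U_{\mathrm{opt}}$ is genuinely a competitor in the minimization performed for $A$). Everything else is a short chain of inequalities; no combinatorial construction is needed here, since both the representation and the per-set optimization are supplied as black boxes by the preceding statements.
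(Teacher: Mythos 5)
Your proposal is correct and follows exactly the route the paper intends: the paper derives the proposition in one line from Observation~\ref{obs:edstovc} and Lemma~\ref{lemma:wedsgoodrep}, and your argument simply spells out that chain (take $B=V(U_{\mathrm{opt}})$, use the representation property to find $A\subseteq B$, and use the per-$A$ optimization to beat $U_{\mathrm{opt}}$). No differences worth noting.
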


In particular, note that the returned edge dominating set may be of size larger than $k$, which complies with our flexible use of the parameter $k$ in the definition of {\sc $k$-WEDS}. Now, the algorithm of \cite{eds2013} iterates over a family of sets that is {\em almost} a $2k$-representation of vertex covers. It is straightforward to modify this algorithm to iterate over sets of a $2k$-representation of vertex covers without increasing its time and space complexities.\footnote{The modification simply involves branching on vertices of cliques that contain at least four vertices, which are ignored by the algorithm of \cite{eds2013}, since for {\sc EDS} one can use a more relaxed definition for a $2k$-representation of vertex covers (as shown~in~\cite{exactweds}).} We thus obtain the following results.

\begin{theorem}
{\sc $k$-WEDS} can be solved in $O^*(2.315^k)$ time and polynomial space.
\end{theorem}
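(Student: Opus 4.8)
The plan is to obtain the theorem as an almost immediate consequence of Proposition~\ref{prop:weds}, isolating the enumeration of a good $2k$-representation of vertex covers as the only nontrivial ingredient. Concretely, I would run the procedure of Proposition~\ref{prop:weds} with $\ell=2k$ (the bound $|V(U)|\le 2|U|\le 2k$ coming from Observation~\ref{obs:edstovc}) to compute an edge dominating set $U^{*}$ whose weight is at most that of \emph{every} edge dominating set of size at most $k$, and then output $U^{*}$ if $w(U^{*})\le W$ and NIL otherwise. Correctness is a short case check against the definition of {\sc $k$-WEDS}. If some edge dominating set $U_{0}$ with $w(U_{0})\le W$ and $|U_{0}|\le k$ exists, then $w(U^{*})\le w(U_{0})\le W$, so we return an edge dominating set of weight at most $W$, as required. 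Otherwise we return either $U^{*}$ (a legitimate edge dominating set of weight at most $W$) or NIL, both of which are permitted. Note that $U^{*}$ may contain more than $k$ edges, which is precisely the slack granted by our flexible use of $k$.

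It then remains to produce a good $2k$-representation $\mathcal{A}$ and to invoke the polynomial-time routine of Lemma~\ref{lemma:wedsgoodrep} on each of its members. For the enumeration I would start from the $O^{*}(2.315^{k})$-time, polynomial-space branching algorithm of \cite{eds2013} for {\sc EDS}, which already outputs a family whose every set $A$ satisfies that $G[V\setminus A]$ is a disjoint union of cliques; the only defect, relative to what Lemma~\ref{lemma:wedsgoodrep} demands, is that some of these cliques may have more than three vertices. I would repair this by interleaving, into the same search tree, a clique-splitting step: while $G[V\setminus A]$ contains a clique $K$ on $q\ge 4$ vertices, pick a vertex $v\in V(K)$ and branch on either adding $v$ to $A$ (so $K$ shrinks to a $(q-1)$-clique) or adding all of $V(K)\setminus\{v\}$ to $A$ (so $K$ is fully resolved), stopping once every clique has at most three vertices. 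The exhaustiveness of this branching rests on the fact that in any vertex cover the complement within a clique is independent and thus has at most one vertex, so each relevant $B\supseteq A$ contains at least $q-1$ vertices of $K$; the preservation of the (now good) $2k$-representation property is the content of the machinery of \cite{exactweds}. To keep the space polynomial I would not materialize $\mathcal{A}$, but instead traverse the combined search tree in depth-first order, apply Lemma~\ref{lemma:wedsgoodrep} at each leaf, and retain only the lightest edge dominating set seen so far.

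The step I expect to be the crux is verifying that the clique-splitting branching does not inflate the running time past $O^{*}(2.315^{k})$. Here I would exploit that this branching is \emph{cheaper} than the vertex-cover branchings already driving the $O^{*}(2.315^{k})$ bound: excluding a single clique vertex forces the remaining $q-1\ge 3$ vertices into the cover, so, measured in the size of the vertex cover (at most $2k$ by Observation~\ref{obs:edstovc}), the branching vector is at least as good as $(1,q-1)$, whose worst case $(1,3)$ has root $\rho$ equal to the positive root of $x^{3}=x^{2}+1$, so $\rho<1.466$. Since $\rho^{2}<2.148$ and $2.315=1.5215^{2}$, the contribution of these branchings is at most $\rho^{2k}<2.148^{k}<2.315^{k}$, so they are dominated by the branchings of \cite{eds2013} and leave the time bound unchanged. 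Finally, each of the $O^{*}(2.315^{k})$ leaves incurs only the polynomial cost of Lemma~\ref{lemma:wedsgoodrep}, and the algorithm stores only a root-to-leaf path together with one candidate solution; hence the overall algorithm runs in $O^{*}(2.315^{k})$ time and polynomial space, as claimed.
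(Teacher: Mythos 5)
Your proposal follows essentially the same route as the paper: reduce the theorem to Proposition~\ref{prop:weds} (with the size-$>k$ slack absorbed by the flexible use of $k$), and obtain a good $2k$-representation of vertex covers by augmenting the branching algorithm of \cite{eds2013} with an extra branching step on cliques of at least four vertices --- exactly the modification the paper describes in its footnote. The only difference is that you make explicit the $(1,q-1)$ branching-vector analysis and the depth-first traversal ensuring polynomial space, which the paper leaves implicit under ``without increasing its time and space complexities''; both details are correct.
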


\begin{cor}
{\sc WEDS} can be solved in $O^*(2.315^s)$ time and polynomial space.
\end{cor}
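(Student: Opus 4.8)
The plan is to derive the corollary from the preceding theorem by means of the generic iterative scheme of Section~\ref{section:technique}, exactly as is done for the other minimization problems treated in this paper. I would run the $O^*(2.315^k)$-time algorithm for {\sc $k$-WEDS} repeatedly, with $k$ taking the values $1,2,3,\ldots$, until either the algorithm returns a (non-NIL) edge dominating set of weight at most $W$, which I output immediately, or $k$ reaches $\min\{W,m\}$, where $m=|E|$ is the maximum size of any edge dominating set of $G$; in the latter case I return NIL. Since each edge weighs at least $1$, any edge dominating set of weight at most $W$ has at most $\lfloor W\rfloor$ edges, so the loop is finite and the stopping value $\min\{W,m\}$ is always reached if success never occurs.

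First I would verify correctness. By the two defining clauses of {\sc $k$-WEDS}, whenever the algorithm returns a non-NIL answer, that answer is an edge dominating set of weight at most $W$; hence outputting any such answer is always safe. Conversely, let $s$ be the minimum size of an edge dominating set of weight at most $W$ (if one exists). For each $k<s$ the algorithm is permitted to return NIL (there is no solution of size at most $k$ and weight at most $W$ to force otherwise), but once $k=s$ property~$(i)$ of Section~\ref{section:technique} guarantees a returned solution of weight at most $W$; thus the loop terminates no later than iteration $s$ with a correct answer. If no edge dominating set of weight at most $W$ exists, then by definition $s=\min\{W,m\}$, every call returns NIL by property~$(ii)$, and the loop correctly terminates at $k=\min\{W,m\}$ returning NIL.

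Next I would bound the resources. The loop performs at most $s$ iterations, the $k$th of which costs $O^*(2.315^k)$ time and polynomial space by the theorem. Summing the geometric series gives $\sum_{k=1}^{s}O^*(2.315^k)=O^*(2.315^s)$, while the space stays polynomial because nothing is carried across iterations. This establishes the claimed $O^*(2.315^s)$ time and polynomial-space bound.

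The one point requiring care — and the step I would treat as the main obstacle — is pinning down the governing parameter so that the running time is controlled by $s$ rather than by $W$ or $m$. The subtlety is twofold: I must argue that a returned solution is valid even when it is produced at some $k<s$ (so early termination never outputs an illegal set), and that iteration $s$ is the \emph{first} $k$ at which success is forced (so the summation does not run past $s$). Both facts follow from the defining clauses of {\sc $k$-WEDS}, but they rely crucially on the algorithm being allowed to return a correct solution whose size exceeds $k$; it is exactly this flexibility, emphasized in Section~\ref{section:technique}, that makes $s$, rather than $W$, the parameter governing the complexity.
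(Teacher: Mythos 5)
Your proposal is correct and follows exactly the route the paper intends: the corollary is obtained from the $O^*(2.315^k)$ algorithm for {\sc $k$-WEDS} by the generic iteration $k=1,2,\ldots$ up to $\min\{W,m\}$ described in Section~\ref{section:technique}, with the geometric-series summation giving the $O^*(2.315^s)$ bound. Your added care about why early termination is safe and why the loop stops by iteration $s$ matches the framework's properties $(i)$ and $(ii)$ and introduces nothing beyond what the paper relies on.
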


For our second result, given in the following appendix, it is enough to consider the following weaker proposition.

\begin{prop}\label{prop:weds2}
Let $MinVC$ be the set of all minimal vertex covers of $G$.
An edge dominating set $U$ of $G$ can be computed as follows. For an arbitrary superset ${\cal A}$ of $MinVC$, iterate over every vertex cover $A\in{\cal A}$, and compute a corresponding best edge dominating set using Lemma \ref{lemma:wedsgoodrep}. Then, return an edge dominating set of minimum weight among the computed ones.
\end{prop}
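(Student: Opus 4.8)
The plan is to show that the described procedure outputs a minimum-weight edge dominating set of $G$; the literal claim that it returns \emph{some} edge dominating set then follows as an immediate byproduct. The argument splits into two halves: \emph{validity} (every candidate the procedure produces is a genuine edge dominating set, so the procedure is well defined and its output is one) and \emph{optimality} (no edge dominating set of $G$ is lighter than the output).

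For validity, I would first check that each minimal vertex cover is a legal input to Lemma~\ref{lemma:wedsgoodrep}. Indeed, if $A$ is a minimal vertex cover then $V\setminus A$ is an independent set, so $G[V\setminus A]$ has no edges and all its connected components are single vertices; hence $A$ belongs to a good $|V|$-representation of vertex covers (one may take $MinVC$ itself, which is an $\ell$-representation for every $\ell$ because every vertex cover contains a minimal one, and which is good because every complement is edgeless). Consequently Lemma~\ref{lemma:wedsgoodrep} applies to each $A\in MinVC$. If $E=\emptyset$ the empty set is the unique edge dominating set and the claim is trivial; otherwise $E$ is an edge dominating set and, since a minimal vertex cover contains no isolated vertex, $A\subseteq V(E)$, so the family $\mathcal U$ in Lemma~\ref{lemma:wedsgoodrep} is nonempty and the lemma returns an honest edge dominating set of $G$. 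The minimum-weight element among all returned candidates is therefore again an edge dominating set. Elements of $\mathcal A\setminus MinVC$ are treated the same way whenever the lemma applies to them; they only contribute further genuine candidates and cannot harm correctness.

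For optimality, let $U^\star$ be a minimum-weight edge dominating set of $G$. By Observation~\ref{obs:edstovc}, $V(U^\star)$ is a vertex cover of $G$, and since every vertex cover contains a minimal one I would fix a minimal vertex cover $A^\star\subseteq V(U^\star)$. Because $MinVC\subseteq\mathcal A$, the procedure processes $A^\star$, and since $A^\star\subseteq V(U^\star)$ the family $\mathcal U$ over which Lemma~\ref{lemma:wedsgoodrep} optimizes for $A^\star$ contains $U^\star$. Hence the edge dominating set the lemma computes for $A^\star$ has weight at most $w(U^\star)$, so the overall returned set has weight at most $w(U^\star)$ and is a minimum-weight edge dominating set.

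The hard part is purely the bookkeeping around the two ``matching'' facts that let enumeration over $MinVC$ (rather than over all relevant vertex covers, as in Proposition~\ref{prop:weds}) suffice: that each minimal vertex cover really is admissible for Lemma~\ref{lemma:wedsgoodrep}, which rests on $V\setminus A$ being independent, and that the optimum $U^\star$ is captured by \emph{some} processed $A^\star$, which hinges precisely on Observation~\ref{obs:edstovc} together with the containment $A^\star\subseteq V(U^\star)$ placing $U^\star$ inside the family optimized by the lemma. Everything else is a direct consequence of the quoted results.
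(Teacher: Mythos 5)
Your proof is correct and takes exactly the route the paper intends: Proposition~\ref{prop:weds2} is stated there without an explicit proof, as an immediate consequence of Observation~\ref{obs:edstovc} and Lemma~\ref{lemma:wedsgoodrep}, and your argument supplies precisely the missing details (the complement of any vertex cover is independent, so the lemma applies to every minimal vertex cover, and the minimal vertex cover contained in $V(U^\star)$ for an optimal edge dominating set $U^\star$ witnesses optimality). Nothing to add.
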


\subsection{{\sc WEDS} Parameterized by the Size of a Minimum EDS}\label{section:weds2}

In this appendix, we develop an algorithm, \alg{WEDS*-Alg}, that solves the following problem.

\myparagraph{WEDS*} Given an instance of {\sc WEDS}, along with a minimum(-size) edge dominating set $U$, return a superset of $MinVC$.

\smallskip

For this algorithm, we obtain the following result.

\begin{theorem}\label{theorem:weds*}
\alg{WEDS*-Alg} solves {\sc WEDS*} in time $O^*(3^t)$ and polynomial space.
\end{theorem}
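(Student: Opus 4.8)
The plan is to build the required superset $\mathcal{A}$ of $MinVC$ by enumerating, for the given minimum edge dominating set $U$ (with $|U|=t$), all the ways a vertex cover can intersect $Y := V(U)$, and then extending each such partial choice to $V\setminus Y$ in the unique way dictated by minimality. First I would record the starting structure: by Observation \ref{obs:edstovc}, $Y$ is a vertex cover of $G$ of size at most $2t$, so $I := V\setminus Y$ is an independent set and every $v\in I$ satisfies $N(v)\subseteq Y$.

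The heart of the argument is a structure lemma: every minimal vertex cover $C$ of $G$ is \emph{completely determined} by its trace $Z := C\cap Y$. Concretely, for $v\in I$ one has $v\in C$ if and only if $N(v)\not\subseteq Z$: if some neighbour of $v$ lies outside $Z\subseteq C$, the edge from $v$ to that neighbour forces $v\in C$; while if $N(v)\subseteq Z\subseteq C$, then $v$ is redundant and minimality forbids $v\in C$. Hence $C = Z\cup I_{\mathrm{must}}(Z)$, where $I_{\mathrm{must}}(Z):=\{v\in I : N(v)\not\subseteq Z\}$ is computable in polynomial time. Moreover, since every edge of $U$ lies inside $Y$ and is covered by $C$, the trace $Z$ is a hitting set of $U$ (it meets each $e\in U$).

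Next I would bound and enumerate the possible traces. Branch over the $t$ edges of $U$: when processing $e=\{a,b\}\in U$, assign ``in $Z$''/``out of $Z$'' to whichever endpoints are still unassigned, keeping only the assignments that leave $e\cap Z\neq\emptyset$ --- at most three consistent patterns per edge ($\{a\}$, $\{b\}$, or $\{a,b\}$ inside $Z$), where shared endpoints only \emph{prune} options. Since $Y=V(U)$, every vertex of $Y$ is an endpoint of some edge of $U$ and is therefore assigned by the time all edges are processed; thus each root-to-leaf path fixes one trace $Z$, every hitting set of $U$ contained in $Y$ is reached, and the number of leaves --- hence the number of distinct traces --- is at most $3^t$. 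At each leaf I compute $I_{\mathrm{must}}(Z)$ and output $C_Z:=Z\cup I_{\mathrm{must}}(Z)$; the family $\{C_Z\}$ is the required superset of $MinVC$, since the structure lemma shows that every minimal vertex cover equals some $C_Z$. Running the branching as a depth-first search and handing each $C_Z$ immediately to the downstream computation (Lemma \ref{lemma:wedsgoodrep} via Proposition \ref{prop:weds2}) keeps the space polynomial, while the total work is $3^t\cdot n^{O(1)}=O^*(3^t)$.

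The step I expect to be the main obstacle is the structure lemma, specifically the ``only if'' direction that uses minimality to pin down $C\cap I$ from $Z$: one must argue that no vertex of $I$ with all neighbours already in $Z$ can belong to a minimal cover, and dually that the forced vertices are exactly those with a neighbour outside $Z$. The second delicate point is the counting bound $|\{C_Z\}|\le 3^t$ when the edges of $U$ share endpoints; here the branching-with-pruning argument is what guarantees that shared endpoints can only decrease, never increase, the per-edge factor of $3$, so the clean $3^t$ bound survives even though $U$ need not be a matching.
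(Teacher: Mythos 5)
Your proposal is correct and follows essentially the same route as the paper: enumerate the at most $3^t$ traces on $V(U)$ (three choices per edge of $U$, with shared endpoints only pruning options) and extend each trace to the unique candidate minimal cover, using minimality to show every $X\in MinVC$ is recovered. Your extension rule $Z\cup\{v\in V\setminus V(U): N(v)\not\subseteq Z\}$ coincides with the paper's $S\cup N(V(U)\setminus S)$ on actual traces of minimal vertex covers, so the difference is purely cosmetic.
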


Given an instance $(G,w,W)$ of {\sc WEDS}, we can find, in time $O^*(2.315^t)$ and polynomial space, a minimum edge dominating set $U$, using the algorithm of \cite{eds2013}. Thus, by Proposition \ref{prop:weds2}, we obtain the following result.

\begin{cor}\label{cor:wedsT}
{\sc WEDS} can be solved in time $O^*(3^t)$ and polynomial space.
\end{cor}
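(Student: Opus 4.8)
The plan is to obtain Corollary~\ref{cor:wedsT} by threading together the three ingredients already established just above it: the $O^*(2.315^t)$-time, polynomial-space algorithm of \cite{eds2013} for a minimum(-size) edge dominating set, Theorem~\ref{theorem:weds*} (which supplies \alg{WEDS*-Alg}, solving {\sc WEDS*} in $O^*(3^t)$ time and polynomial space), and Proposition~\ref{prop:weds2} (which converts any superset of $MinVC$ into an optimal edge dominating set). The resulting procedure has three phases, carried out in order, and the whole argument is an assembly rather than a fresh combinatorial construction.

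First I would run the algorithm of \cite{eds2013} on $(G,w)$ to obtain a minimum-size edge dominating set $U$; by Observation~\ref{obs:edstovc} we have $|U|=t$ and $V(U)$ is a vertex cover of size at most $2t$, and this step costs $O^*(2.315^t)$ time and polynomial space. Second, I would feed $(G,w,W,U)$ into \alg{WEDS*-Alg}; by Theorem~\ref{theorem:weds*} this returns, in $O^*(3^t)$ time and polynomial space, a family ${\cal A}$ that is a superset of $MinVC$, the set of all minimal vertex covers of $G$. Third, I would invoke Proposition~\ref{prop:weds2} on ${\cal A}$: for each vertex cover $A\in{\cal A}$ the set $V\setminus A$ is independent, so $G[V\setminus A]$ is edgeless and is trivially a union of cliques on at most three vertices; hence Lemma~\ref{lemma:wedsgoodrep} applies and computes in polynomial time a minimum-weight edge dominating set $U_A$ with $A\subseteq V(U_A)$. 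I then return the lightest $U_A$ over all $A\in{\cal A}$, reporting it iff its weight is at most $W$ and NIL otherwise.

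For correctness the key point I would emphasize is that $MinVC\subseteq{\cal A}$ already captures an optimal solution. Let $U^*$ be an edge dominating set of minimum weight; then $V(U^*)$ is a vertex cover and contains some minimal vertex cover $A^*\in MinVC\subseteq{\cal A}$ with $A^*\subseteq V(U^*)$. Applying Lemma~\ref{lemma:wedsgoodrep} to $A^*$ yields an edge dominating set $U_{A^*}$ of minimum weight among those whose endpoint set contains $A^*$; since $U^*$ is one such set, $w(U_{A^*})\le w(U^*)$. Thus the lightest $U_A$ found has weight equal to the optimum, and the final comparison against $W$ correctly decides whether a solution of weight at most $W$ exists. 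This is exactly the content of Proposition~\ref{prop:weds2}, so no new reasoning is needed beyond noting that every vertex cover satisfies the ``good'' property.

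For the running time, the three phases contribute $O^*(2.315^t)$, $O^*(3^t)$, and $|{\cal A}|\cdot\mathrm{poly}(n)$ respectively; since \alg{WEDS*-Alg} produces $|{\cal A}|=O^*(3^t)$ sets and each call to Lemma~\ref{lemma:wedsgoodrep} is polynomial, the third phase is also $O^*(3^t)$, and the total is dominated by $O^*(3^t)$. Space remains polynomial throughout, since ${\cal A}$ can be enumerated on the fly and only the current best edge dominating set need be retained. The only place where real work could hide is confirming that \alg{WEDS*-Alg} returns a family of size $O^*(3^t)$ that genuinely contains $MinVC$ while running in polynomial space; but both facts are exactly what Theorem~\ref{theorem:weds*} delivers, so the corollary reduces cleanly to combining the stated components.
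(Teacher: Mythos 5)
Your proposal is correct and follows exactly the paper's route: compute a minimum edge dominating set with the $O^*(2.315^t)$ algorithm of \cite{eds2013}, run \alg{WEDS*-Alg} to obtain a superset of $MinVC$, and finish via Proposition~\ref{prop:weds2} and Lemma~\ref{lemma:wedsgoodrep}. The extra details you supply (every member of ${\cal A}$ is a vertex cover whose complement is edgeless, hence trivially ``good''; on-the-fly enumeration keeps the space polynomial) are just explicit versions of what the paper leaves implicit.
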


We now turn to present \alg{WEDS*-Alg}, whose pseudocode is given below. This algorithm computes a superset $\cal A$ of $MinVC$ by adding (to $\cal A$) every set that consists of a subset $S\subseteq V(U)$ that covers all the edges in $U$, along with the neighbor set of $V(U)\setminus S$.

\begin{algorithm}[!ht]
\caption{\alg{WEDS*-Alg}($G=(V,E),w: V\rightarrow \mathbb{R}^{\geq 1},W,U$)}
\begin{algorithmic}[1]
\STATE Initialize ${\cal A}\Leftarrow\emptyset$.
\FORALL{$S\subseteq V(U)$ such that $S$ contains at least one endpoint of any edge in $U$}\label{step*:loop}
	\STATE Let $A_{S}\Leftarrow S\cup N(V(U)\setminus S)$.
	\STATE Add $A_{S}$ to $\cal A$.
\ENDFOR
\STATE Return $\cal A$.
\end{algorithmic}
\end{algorithm}

{\noindent First, note that iterating over every subset $S$ in Step \ref{step*:loop} can be perfomed by choosing, for every edge $\{v,u\}\in U$, (a) $v$, (b) $u$, or (c) $v$ and $u$ (i.e., there are three choices per edge). Therefore, \alg{WEDS*-Alg} runs in time $O^*(3^t)$, and it clearly uses polynomial space. To prove its correctness, we need to show that it returns a superset $\cal A$ of $MinVC$. To this end, consider an arbitrary minimal vertex cover $X\in MinVC$, and denote $S=X\cap V(U)$. Since $X$ is a vertex cover, there is an iteration of Step \ref{step*:loop} that corresponds to this set $S$, and we can let $A_S$ be the corresponding set that \alg{WEDS*-Alg} adds to $\cal A$. We now show that $X = A_S$, which concludes the proof of Theorem \ref{theorem:weds*}. Since $X$ is a vertex cover, it contains $N(V(U)\setminus S)$. Thus, $A_S\subseteq X$. Suppose, by way of contradiction, that there exists a vertex $v$ in $X\setminus A_S$. Since $X$ is a {\em minimal} vertex cover, there is a vertex $u$ in $N(v)\setminus X$. Therefore, $u\notin A_S$. However, since $U$ is an edge dominating set, it contains an edge $e$ such that one of its endpoints belongs to $\{v,u\}$ (otherwise the edge $\{v,u\}$ is not covered). By the definition of $A_S$, for every vertex $r\in V(U)$, it contains either $r$ or all of its neighbors (or both), and thus we have a contradition. We get that, indeed, $X = A_S$.}
\section{Weighted Max Internal Out-Branching}\label{section:wiob}

In this appendix, we aim to demonstrate in a simple manner that our framework is useful in solving weighted maximization problems, or problems for which existing algorithms are based on techniques other than bounded search trees. To this end, we present an FPT algorithm for {\sc WIOB} which uses time $O^*(6.855^s)$, or randomized time $O^*(4^sW)$. Our flexible use of the parameter $k$ (which defines {\sc $k$-WIOB}) allows us to easily obtain these algorithms, by using FPT algorithms for the related {\sc Weighted $k$-ITree} problem, defined below. Note that the algorithm of \cite{corrrepresentative} relies on the combinatorial representative sets technique \cite{representative}, and the algorithm of \cite{thesis11,ipec13} uses the algebraic multilinear detection technique \cite{multilineardetection,appmultilinear}.

The problem {\sc Weighted $k$-ITree} is defined as follows.

\myparagraph{Weighted $k$-ITree} Given a directed graph $G=(V,E)$, a vertex $r\in V$, a weight function $w: V\rightarrow \mathbb{R}^{\geq 1}$, a weight $W\in\mathbb{R}^{\geq 1}$, and a parameter $k\in\mathbb{N}$, find an out-tree of $G$ (i.e., a subtree of $G$ having exactly one vertex of in-degree 0, called the root) that is rooted at $r$, and contains exactly $k$ internal vertices and at most $k$ leaves, such that the total weight of its internal vertices is at least $W$ (if one exists).

\medskip

The following result, implicitly given in \cite{kIOB49k} to show a connection between {\sc IOB} and {\sc $k$-ITree}, holds also for the weighted variants of these problems.

\begin{lem}\label{lemma:reduceIOB}
Let $G=(V,E)$ be a directed graph. Also, let $r$ be a vertex in $V$ such that $G$ has an out-branching $H$ rooted at $r$.
\begin{itemize}
\item The graph $G$ has an out-tree rooted at $r$, having the same set $S$ of internal vertices as $H$, and at most $|S|$ leaves.
\item Given an out-tree $T$ of $G$ rooted at $r$, one can construct in polynomial time an out-branching of $G$ whose set of internal vertices contains the set of internal vertices of $T$.
\end{itemize}
\end{lem}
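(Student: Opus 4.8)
The plan is to handle the two bullets separately, each by an explicit construction. Throughout, recall that in an out-tree rooted at $r$ the \emph{internal} vertices are exactly those of out-degree at least one, the leaves are those of out-degree zero, and every vertex other than $r$ has in-degree one.

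For the first bullet, I would start from $H$ and \emph{prune} it down to its internal vertices. First I would observe that the induced subgraph $H[S]$ is again an out-tree rooted at $r$: it is acyclic as a subgraph of a tree, and it is connected because on the unique $H$-path between any two vertices of $S$ every intermediate vertex has a successor on that path, hence out-degree at least one, hence lies in $S$; in particular $r \in S$. Call this pruned out-tree $T'$. The subtlety is that a vertex of $S$ which was internal in $H$ only by virtue of leaf-children may become a leaf of $T'$, so $T'$ need not have internal-vertex set equal to $S$. To fix this I would \emph{regraft} one leaf: for each vertex $v$ that is a leaf of $T'$, pick a child $\ell_v$ of $v$ in $H$ that is a leaf of $H$ (such a child exists because $v$ is internal in $H$ but lost all its successors in the pruning), and add back the edge $(v,\ell_v)$. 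Since in $H$ each vertex has a unique parent, the children $\ell_v$ chosen for distinct $v$ are distinct leaves of $H$ and lie outside $S$, so the resulting object $T''$ is a well-defined out-tree. Now every $v \in S$ is internal in $T''$ (those internal in $T'$ keep their successors; those that were leaves of $T'$ gained $\ell_v$), and the only leaves of $T''$ are the $\ell_v$, one per leaf of $T'$; since the leaves of $T'$ form a subset of $S$ there are at most $|S|$ of them. Thus $T''$ is an out-tree rooted at $r$ with internal-vertex set exactly $S$ and at most $|S|$ leaves.

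For the second bullet, I would \emph{grow} $T$ into a spanning out-tree while never destroying internality. The key fact I would use is the lemma's standing hypothesis that $G$ admits an out-branching rooted at $r$, which is equivalent to saying that every vertex of $G$ is reachable from $r$. While the current out-tree does not cover all of $V$, any vertex $z$ outside it is reachable from $r$, so a path from $r$ to $z$ must contain an edge $(x,y)$ with $x$ inside the current tree and $y$ outside; I add $y$ as a new leaf via this edge. Each such step preserves the existing edge set and merely appends a new leaf, so it keeps the structure an out-tree and never decreases any out-degree. Iterating at most $|V|$ times (a polynomial-time search for a crossing edge at each step) yields a spanning out-tree, i.e.\ an out-branching $H'$ of $G$ rooted at $r$. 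Because all edges of $T$ survive, every internal vertex of $T$ still has a successor in $H'$ and hence stays internal, so the internal-vertex set of $H'$ contains that of $T$.

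The routine parts are the two terminations and the polynomial-time bounds; the step that needs genuine care is the \emph{connectivity} claim for $H[S]$ in the first bullet (the statement that the internal vertices of an out-tree induce a connected subtree) together with the leaf-counting bookkeeping for the regrafting, since this is what pins down both ``internal-vertex set exactly $S$'' and ``at most $|S|$ leaves'' at once. I expect this to be the main obstacle, modulo the trivial edge case where $H$ is a single vertex.
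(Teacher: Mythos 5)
Your proof is correct: both constructions (pruning $H$ to the induced out-tree on its internal vertex set $S$ and regrafting one $H$-leaf onto each vertex of $S$ that became a leaf, then counting the regrafted leaves against $|S|$; and greedily extending $T$ to a spanning out-tree using the fact that the existence of $H$ makes every vertex reachable from $r$, which only adds leaves and never demotes an internal vertex) are sound, including the distinctness of the regrafted leaves via uniqueness of parents. The paper gives no proof of this lemma at all --- it only remarks that the statement is implicitly contained in the cited work of Cohen et al.~--- and your argument is exactly the standard one underlying that reference, so the only point worth noting is the degenerate single-vertex case you already flag, which never arises in the paper's use of the lemma since $W\geq 1$ forces $S\neq\emptyset$.
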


The algorithm given in \cite{corrrepresentative} for {\sc $k$-ITree} is based on the representative families technique \cite{representative}. This algorithm can be modified to solve {\sc Weighted $k$-ITree}, maintaining the same $O^*$ time and space complexities, by replacing its representative families computations by max representative families computations (see, e.g., \cite{representative} and \cite{mathpack14}). Another algorithm for {\sc $k$-ITree}, given in \cite{thesis11,ipec13}, is based on the multilinear detection technique \cite{multilineardetection,appmultilinear}. This algorithm can also be modified to solve {\sc Weighted $k$-ITree}, increasing its $O^*$ time and space complexities by a factor of $W$, by introducing a variable that tracks the weights of the solutions associated with the computed monomials (see, e.g., \cite{integerweights} and \cite{techreport1}). Thus, we obtain algorithms, called \alg{WITree-Alg} and \alg{RandWITree-Alg}, for which we have the following result.

\begin{lem}\label{lemma:treealgs}
\alg{WITree-Alg} solves {\sc Weighted $k$-ITree} in $O^*(6.855^k)$ time and space, and \alg{RandWITree-Alg} solves {\sc Weighted $k$-ITree} in $O^*(4^kW)$ randomized time and $O^*(W)$ space.
\end{lem}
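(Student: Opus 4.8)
The plan is to obtain both algorithms by adapting known algorithms for the \emph{unweighted} {\sc $k$-ITree} problem so that they track the total weight of internal vertices, while preserving their asymptotic complexities (up to a factor of $W$ in the randomized case). I would treat the two algorithms separately, since they rest on entirely different techniques.

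For the deterministic algorithm \alg{WITree-Alg}, I would start from the $O^*(6.855^k)$ time and space algorithm of \cite{corrrepresentative}. That algorithm performs, by dynamic programming over partial out-trees, a computation that maintains families of vertex sets (each recording the vertices used by a partial out-tree with a given number of internal vertices and leaves), and repeatedly prunes each such family $\mathcal{F}$ to a representative subfamily $\widehat{\mathcal{F}}\subseteq_{\mathrm{rep}}^{q}\mathcal{F}$ whose size is bounded by a binomial coefficient that yields the base $6.855$. The correctness of this pruning rests on the guarantee that, for every set $B$ disjoint from some member of $\mathcal{F}$, the subfamily $\widehat{\mathcal{F}}$ still contains a member disjoint from $B$, so extendability to a full out-tree of the required size is preserved. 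To handle weights, I would replace every representative-family computation by a \emph{max} representative-family computation (as in \cite{representative,mathpack14}): the retained subfamily now guarantees that for every such $B$ there is a member $\widehat{A}\in\widehat{\mathcal{F}}$ disjoint from $B$ whose internal-vertex weight is \emph{at least} that of the pruned witness. Since max representative families of the same size bound are computable within the same time, the base $6.855$ and the space bound are unchanged, while the stored weight of each partial solution lets us recover, at the root, a tree with exactly $k$ internal vertices of maximum total internal weight, which we report iff this weight is at least $W$.

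For the randomized algorithm \alg{RandWITree-Alg}, I would start from the $O^*(4^k)$ time algorithm of \cite{thesis11,ipec13}, which reduces {\sc $k$-ITree} to the detection of a multilinear monomial in a polynomial constructed over a suitable commutative ring via the multilinear detection technique \cite{multilineardetection,appmultilinear}; the factor $4^k$ arises because a feasible tree has $k$ internal vertices and at most $k$ leaves, hence monomials of degree $O(k)$. To incorporate weights, I would introduce a fresh indeterminate $z$ and modify the construction so that whenever an internal vertex $v$ is incorporated into a partial tree, the corresponding term is multiplied by $z^{w(v)}$; thus each monomial encoding a tree $T$ carries the factor $z^{w(\mathrm{int}(T))}$, where $w(\mathrm{int}(T))$ is the total weight of the internal vertices of $T$. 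Treating $z$ as a formal degree variable and capping its exponent at $W$ (only the threshold ``weight $\geq W$'' matters), the polynomial splits into $O(W)$ homogeneous $z$-slices, and it suffices to run multilinear detection on the slices of $z$-degree at least $W$. This multiplies both the time and the space by a factor of $W$, giving $O^*(4^kW)$ randomized time, while storing the $O(W)$ slices accounts for the $O^*(W)$ space; this is precisely the weight-tracking device of \cite{integerweights,techreport1}.

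The main obstacle, for the deterministic algorithm, is the correctness of the max-representative pruning: I must argue by an exchange argument that if an optimal solution uses a partial tree whose vertex set was discarded, then a retained max-representative set of at least the same weight can be swapped in to yield a full out-tree with the same internal-vertex count and total internal weight at least $W$, taking care that disjointness (and hence the tree structure) is preserved under the swap. For the randomized algorithm, the delicate point is to confirm that capping the $z$-exponent at $W$ keeps the number of slices bounded by $O(W)$ without affecting detection of a feasible monomial, and that this discretization is compatible with the standing assumption that weights and $W$ are at least $1$. In both cases the heavy lifting is supplied by the referenced weighted generalizations of the underlying techniques, so the remaining work is essentially bookkeeping.
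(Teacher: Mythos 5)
Your proposal matches the paper's argument essentially verbatim: the paper likewise obtains \alg{WITree-Alg} by replacing the representative-family computations in the algorithm of \cite{corrrepresentative} with max representative-family computations (citing \cite{representative,mathpack14}), and obtains \alg{RandWITree-Alg} by augmenting the multilinear-detection algorithm of \cite{thesis11,ipec13} with a variable tracking solution weights (citing \cite{integerweights,techreport1}), incurring the same factor-$W$ overhead in time and space. The paper states these modifications at the level of a sketch with references, so your additional detail on the exchange argument and the $z$-slices is consistent elaboration rather than a different route.
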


Let \alg{ALG} be an algorithm for {\sc Weighted $k$-ITree} that uses $O^*(f(k,W))$ time and $O^*(g(k,W))$ space. We next present the pseudocode of \alg{WIOB-Alg}, an algorithm that solves {\sc $k$-WIOB} by using Lemma \ref{lemma:reduceIOB} and \alg{ALG}. Note that the condition in Step \ref{step:checkoutbranch} can be easily checked in linear time (e.g., using DFS).

\begin{algorithm}[!ht]
\caption{\alg{WIOB-Alg}($G=(V,E),w: V\rightarrow \mathbb{R}^{\geq 1},W,k$)}
\begin{algorithmic}[1]

\FORALL{$r\in V$}
\FOR{$k'=1,2,\ldots,k$}
	\IF{$G$ does not have an out-branching rooted at $r$}\label{step:checkoutbranch}
		\STATE Skip the rest of the current iteration.
	\ENDIF

	\STATE $T\Leftarrow $ \alg{ALG}$(G,r,w,W,k')$.
	
	\IF{$T\neq$ NIL}
		\STATE Extend $T$ to an out-branching $OB$ according to Lemma \ref{lemma:reduceIOB}.
		\STATE Return $OB$.
	\ENDIF
	
\ENDFOR	
\ENDFOR

\STATE Return NIL.
\end{algorithmic}
\end{algorithm}

The correctness of \alg{WIOB-Alg} clearly follows from the correctness of Lemma \ref{lemma:reduceIOB} and \alg{ALG}. Note that, in this context, our flexible use of the parameter $k$ (in the definition of {\sc $k$-WIOB}) is crucial, since \alg{WIOB-Alg} may return out-branchings that have more than $k$ internal vertices. Finally, note that \alg{WIOB-Alg} runs in $O^*(f(k,W))$ time and uses $O^*(g(k,W))$ space.

By the above arguments and Lemma \ref{lemma:treealgs}, we obtain the following result.

\begin{lem}
{\sc $k$-WIOB} can be solved in $O^*(6.855^k)$ time and space. Moreover, {\sc $k$-WIOB} can be solved in $O^*(4^kW)$ randomized time and $O^*(W)$ space.
\end{lem}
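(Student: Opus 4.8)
The plan is to obtain both bounds by instantiating the generic algorithm \alg{WIOB-Alg} with the two concrete subroutines for {\sc Weighted $k$-ITree} supplied by Lemma~\ref{lemma:treealgs}, and then to read off the running times directly from the complexity of \alg{WIOB-Alg} established in the surrounding discussion. Concretely, I would take \alg{ALG} to be \alg{WITree-Alg} for the first bound and \alg{RandWITree-Alg} for the second, and verify in each case that the resulting time and space match the claim.

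First I would argue correctness, which reduces to combining the correctness of \alg{WIOB-Alg} with the two directions of Lemma~\ref{lemma:reduceIOB}. Suppose there is an out-branching whose set $S$ of internal vertices has total weight at least $W$ and satisfies $|S|\le k$. Fixing the iteration whose root $r$ equals the source of that out-branching and whose counter equals $k'=|S|$, the first item of Lemma~\ref{lemma:reduceIOB} guarantees an out-tree rooted at $r$ with the same internal set $S$ and at most $|S|\le k'$ leaves; hence the call \alg{ALG}$(G,r,w,W,k')$ returns some out-tree $T$ with exactly $k'$ internal vertices of weight at least $W$. The second item then extends $T$ to an out-branching whose internal vertices contain those of $T$, so its internal weight is at least $W$, establishing property~$(i)$. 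Property~$(ii)$ is immediate, since every object \alg{WIOB-Alg} returns is a genuine out-branching produced in this way. The one point worth stressing is that the extended out-branching may have strictly more than $k$ internal vertices; this is permitted only because of the flexible definition of {\sc $k$-WIOB}, and is precisely where the framework's tolerance for ``oversized'' outputs is used.

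Next I would handle the complexity bookkeeping. The algorithm \alg{WIOB-Alg} loops over all $r\in V$ and all $k'\in\{1,\dots,k\}$, a total of $|V|\cdot k$ iterations, each of which performs a linear-time out-branching test, a single call to \alg{ALG}, and a polynomial-time extension via Lemma~\ref{lemma:reduceIOB}. Since $|V|\cdot k$ is polynomial in the input, the overall cost is $O^*(f(k,W))$ time and $O^*(g(k,W))$ space whenever \alg{ALG} runs in $O^*(f(k,W))$ time and $O^*(g(k,W))$ space. Plugging in \alg{WITree-Alg}, where by Lemma~\ref{lemma:treealgs} one has $f(k,W)=g(k,W)=6.855^k$, yields $O^*(6.855^k)$ time and space; plugging in the randomized \alg{RandWITree-Alg}, where $f(k,W)=4^kW$ and $g(k,W)=W$, yields $O^*(4^kW)$ randomized time and $O^*(W)$ space.

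There is no genuinely hard step here: all of the combinatorial content lives in Lemma~\ref{lemma:reduceIOB} and in the constructions of \alg{WITree-Alg} and \alg{RandWITree-Alg}, all of which I may assume. The only place demanding care is the correctness argument above, where one must confirm that iterating $k'$ over all values up to $k$ (and matching the \emph{exact} internal-vertex count required by {\sc Weighted $k$-ITree}) does capture every relevant solution, and that the leaf bound ``at most $k'$ leaves'' in the first item of Lemma~\ref{lemma:reduceIOB} is compatible with the input requirements of \alg{ALG}.
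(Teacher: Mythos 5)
Your proposal is correct and takes essentially the same route as the paper: the paper's proof likewise instantiates \alg{WIOB-Alg} with \alg{WITree-Alg} and \alg{RandWITree-Alg} from Lemma~\ref{lemma:treealgs}, relies on the two directions of Lemma~\ref{lemma:reduceIOB} for correctness (including the observation that the flexible definition of {\sc $k$-WIOB} permits the returned out-branching to have more than $k$ internal vertices), and reads off the $O^*(f(k,W))$ time and $O^*(g(k,W))$ space bounds from the polynomially many iterations. Your spelled-out correctness argument is in fact more explicit than the paper's, which merely asserts that correctness ``clearly follows.''
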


Thus, we have proved the correctness of the following theorem.

\begin{theorem}\label{theorem:wiob}
{\sc WIOB} can be solved in $O^*(6.855^s)$ time and space. Alternatively, it can be solved in $O^*(4^sW)$ randomized time and $O^*(W)$ space.
\end{theorem}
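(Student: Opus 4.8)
The plan is to derive Theorem~\ref{theorem:wiob} from the preceding lemma on {\sc $k$-WIOB} by the generic iteration scheme of our framework (Section~\ref{section:technique}), so that the exponent drops from the ad hoc parameter $k$ to the true quantity $s$. First I would record what is already available: by the immediately preceding lemma, {\sc $k$-WIOB} is solvable in $O^*(6.855^k)$ time and space, and in $O^*(4^kW)$ randomized time and $O^*(W)$ space; moreover, by Lemma~\ref{lemma:reduceIOB} a solution to {\sc Weighted $k$-ITree} can be turned into an out-branching in polynomial time, so correctness of \alg{WIOB-Alg} is inherited from that of \alg{WITree-Alg} (resp.\ \alg{RandWITree-Alg}) via Lemma~\ref{lemma:treealgs}. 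The remaining task is purely the parameter-lifting step.

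Next I would set up the iteration exactly as prescribed in Section~\ref{section:technique} for {\sc WIOB}. Starting from $k=1$, I call {\sc $k$-WIOB}; if it returns an out-branching, it has internal weight at least $W$ and we output it. If it returns NIL, then before incrementing $k$ I run the unweighted {\sc IOB} routine of \cite{corrrepresentative,thesis11} to test whether $G$ admits an out-branching with at least $k+1$ internal vertices, and I proceed to {\sc $(k+1)$-WIOB} only when that test succeeds and $k+1\le W$. This guard ensures the loop stops as soon as $k$ reaches $m$ (the maximum possible number of internal vertices of a spanning tree) or the cap $\lfloor W\rfloor$, whichever is smaller.

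The correctness argument splits on whether a solution exists. If some out-branching has internal weight at least $W$, let $s$ be the minimum cardinality of the internal-vertex set of such an out-branching; by property $(i)$ of the framework the call to {\sc $s$-WIOB} must return a valid out-branching, while every earlier call either returns NIL or already a valid solution, so the loop halts at some $k\le s$. If no such out-branching exists, every call returns NIL, and the guard forces termination at $k=\min\{\lfloor W\rfloor,m\}\le\min\{W,m\}=s$; here the failing {\sc IOB} test certifies that no spanning tree with more internal vertices remains to be tried, and {\sc $m$-WIOB} having failed certifies that genuinely no weight-$\ge W$ solution exists. In either case the last value of the loop variable is at most $s$.

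Finally, the running time is the geometric sum over the iterations. Since each base exceeds $1$, the cost is dominated by the last term: the deterministic branch costs $\sum_{k\le s}O^*(6.855^k)=O^*(6.855^s)$ in both time and space, where the interleaved {\sc IOB} tests are absorbed because they share the same base; the randomized branch costs $\sum_{k\le s}O^*(4^kW)=O^*(4^sW)$ time and $O^*(W)$ space. The step I expect to require the most care is the correctness of the {\sc IOB} guard in the no-solution case: one must argue precisely that stopping when the test for $k+1$ internal vertices fails (or when $k+1>W$) cannot discard a feasible out-branching, which relies on $s=\min\{W,m\}$ in that regime together with the flexibility of property $(ii)$, and on the fact that {\sc $k$-WIOB} is permitted to return solutions whose internal-vertex count exceeds $k$.
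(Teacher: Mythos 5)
Your proposal is correct and follows essentially the same route as the paper: the theorem is obtained directly from the preceding lemma on {\sc $k$-WIOB} by the iteration scheme of Section~\ref{section:technique}, including the interleaved unweighted {\sc IOB} guard and the cap $k+1\le W$, with the running time dominated by the final iteration. The paper states this derivation even more tersely, so your added care about the no-solution case and the absorption of the {\sc IOB} tests is consistent with, and slightly more explicit than, the original argument.
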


\section{The Bounded Search Trees Technique: An Example}\label{app:boundedsearch}

In the following, we give an example of a simple application of the bounded search trees technique. We start by presenting 
an algorithm, \alg{ALG1}, for {\sc VC}. Then, we explain how to solve {\sc WVC} in a similar manner.
 Finally, we demonstrate how to improve upon these algorithms by  integrating them in our framework (see Section 
\ref{section:technique}). The applications given below are far from optimal, and are only presented for the sake of
completeness and the clarity of the paper.

\subsection{An Algorithm for {\sc VC}}
Recall that, in solving {\sc VC}, we need to find a vertex cover of size at most $t$. Clearly, if we have an instance 
where $t < 0$, we need to choose a negative number of vertices (since $t < 0$), which is impossible. Therefore, 
we can return NIL. This leads to the following reduction rule.

\setcounter{reducerule}{0}
\begin{reducerule}
{\normalfont [$t<0$]
Return NIL.}
\end{reducerule}

{\noindent Since the algorithm always chooses the first applicable rule, in the following rules, we may assume that $t\geq 0$. If $E=\emptyset$, we need to choose a nonnegative number of vertices to cover zero edges.
 Clearly, the empty set is a valid solution. This leads to a second reduction rule:}

\begin{reducerule}
{\normalfont [$E=\emptyset$]
Return $\emptyset$.}
\end{reducerule}

{\noindent We may now assume that $t\geq 0$ and $E\neq\emptyset$. Suppose that the graph $G$ contains an edge $e=\{v,u\}$ such that one of its endpoints, $v$, is a leaf in $G$. To obtain a solution, we must choose at least one of the vertices $v$ and $u$ (to cover $\{v,u\}$). 
Moreover, adding $u$ to the solution is always ``at least as good'' as choosing $v$ to the solution. Indeed, given a vertex cover that contains $v$, we get a vertex cover of the same or smaller size by removing $v$ and
 inserting $u$ to the cover. Therefore, we can insert $u$ to the solution, 
and attempt to choose at most $t-1$ vertices that cover the remaining edges (i.e., the edges not covered by $u$) by recursively calling \alg{ALG1} with $G[V\setminus\{v,u\}]$ and $t-1$. This is the action performed by the following reduction rule.}

\begin{reducerule}
{\normalfont [There is a leaf $v$ in $G$] Let $u$ be the neighbor of $v$ in $G$.
Return \alg{ALG1}$(G[V\setminus\{v,u\}],t-1)$, along with $u$.\footnote{We assume that adding elements to NIL results in NIL.}}
\end{reducerule}

{\noindent Now, the graph $G$ contains at least one edge (by Rule 2), and does not contain leaves (by Rule 3). Therefore, $G$ contains at least one vertex, $v$, that has at least two neighbors. To cover the edges in $E(v)$, we can choose all of the neighbors of $v$. However, to this end, we can also choose $v$. In the latter choice, we add only one vertex to the solution, but cover only the edges in $E(v)$. Thus, we have two options, and accordingly, we perform a branching rule that consists of two branches:}

\begin{branchrule}
{\normalfont [Remaining case] Let $v$ be a vertex of maximum~degree~in~$G$.
\begin{enumerate}
\item If the result of \alg{ALG1}($G[V\setminus N(v)],t-|N(v)|$) is not NIL: Return it along with~$N(v)$.
\item Else: Return \alg{ALG1}($G[V\setminus \{v\}],t-1$), along with~$v$.
\end{enumerate}}
\end{branchrule}

{\noindent In this rule, the parameter $t$ is decreased by $|N(v)|\geq 2$ in the first branch, and by 1 in the second branch. Therefore, the branching vector associated with this rule is at least as good as (2,1), whose root is smaller than 1.619. Overall, 
as described in Section \ref{section:preliminaries}, this implies that \alg{ALG1} runs in time $O^*(1.619^t)$. An illustration of a search tree corresponding to this algorithm is given in Fig.~\ref{fig:tree}.}

\begin{figure}[!h]\centering
\frame{\includegraphics[scale=0.75]{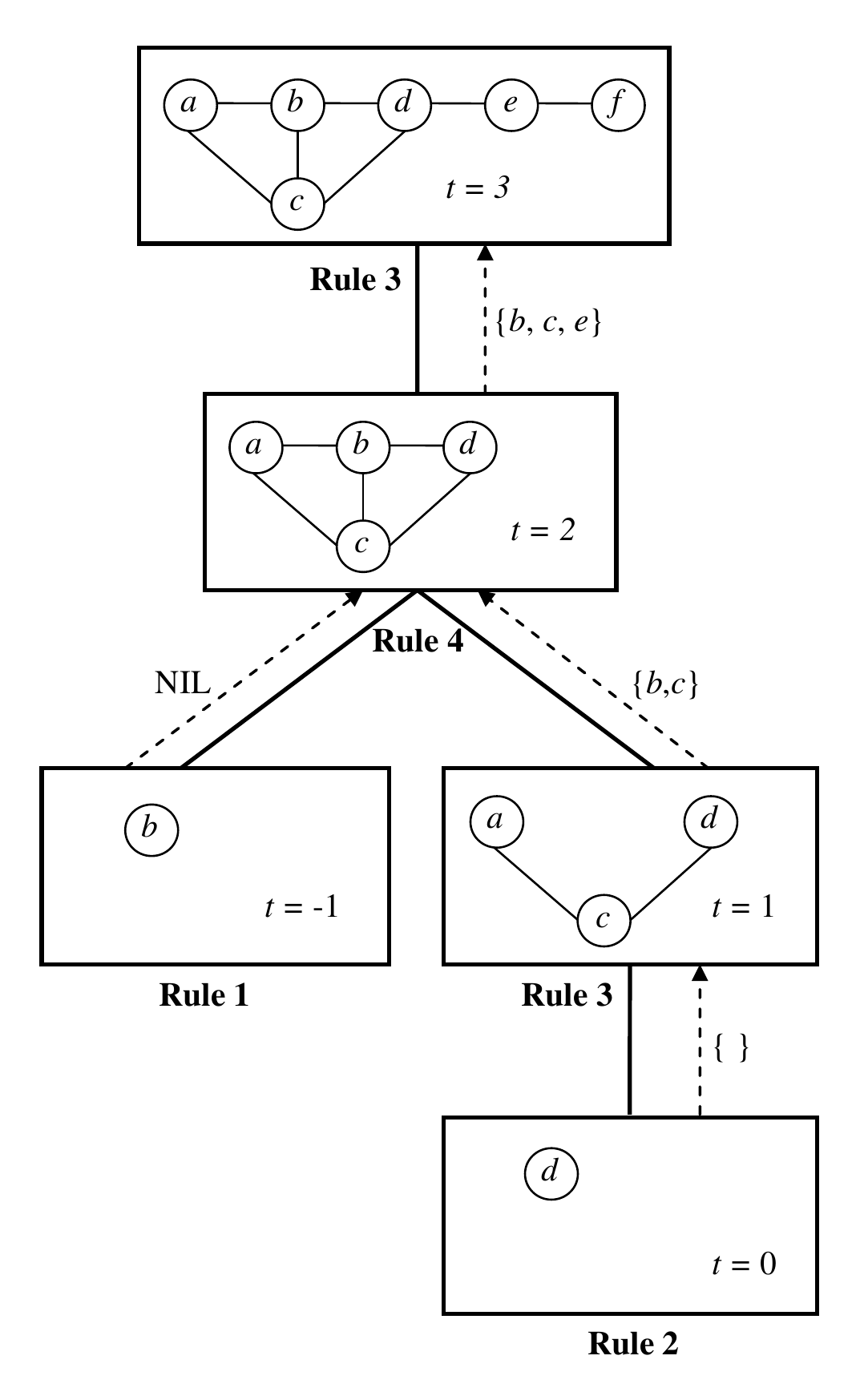}}
\caption{An illustration of a search tree corresponding to \alg{ALG1}.}
\label{fig:tree}
\end{figure}

\subsection{An Algorithm for {\sc WVC}}

We now show how to modify \alg{ALG1} in order to obtain an algorithm, \alg{ALG2}, for {\sc WVC}. Recall that, in the weighted variant of {\sc VC}, we have a weight parameter $W$ instead of the size parameter $t$; accordingly, we need to find a vertex cover of weight at most $W$. Therefore, we need to update Rule 1 to return NIL if $W < 0$, while Rule 2 remains correct. Therefore, we have the following reduction rules.

\setcounter{reducerule}{0}
\begin{reducerule}
{\normalfont [$W<0$]
Return NIL.}
\end{reducerule}

\begin{reducerule}
{\normalfont [$E=\emptyset$]
Return $\emptyset$.}
\end{reducerule}

{\noindent Rule 3, however, is more problematic. In terms of the size of the vertex cover, we have seen that it is always at least as good, when considering a leaf, to choose its neighbor. Similarly, in terms of weight, it is always at least as good, when considering a leaf whose weight is larger or equal to the weight of its neighbor, to choose the neighbor. This leads to the following reduction rule.}

\begin{reducerule}
{\normalfont [There is a leaf $v$ in $G$ such that $w(v)\geq w(u)$, where $u$ is the neighbor of $v$]
Return \alg{ALG2}$(G[V\setminus\{v,u\}],w,W\!-w(u))$, along with $u$.}
\end{reducerule}

{\noindent However, if the weight of the leaf is smaller than the weight of its neighbor, it is not clear whether we should choose the leaf or the neighbor. On the one hand, the leaf has a smaller weight; on the other hand, the neighbor, if it is not a leaf, covers more edges. Therefore, in the next rule, corresponding to Rule 4 in \alg{ALG1}, we cannot assume that $G$ does not contain leaves. However, $G$ contains a vertex of degree at least 2, since if it consists only of leaves and isolated vertices, then it contains two leaves that are neighbors, which invokes Rule 3 (the weight of one of them is smaller or equal to the weight of the other).}

\begin{branchrule}
{\normalfont [Remaining case] Let $v$ be a vertex of maximum~degree~in~$G$.
\begin{enumerate}
\item If the result of \alg{ALG2}($G[V\setminus N(v)],w,W\!-w(N(v))$) is not NIL: Return it along with~$N(v)$.
\item Else: Return \alg{ALG2}($G[V\setminus \{v\}],w,W\!-w(v)$), along with~$v$.
\end{enumerate}}
\end{branchrule}

{\noindent In this rule, the parameter $W$ is decreased by $|w(N(v))|\geq |N(v)|\geq 2$ in the first branch, and by $w(v)\geq 1$ in the second branch. Again, we have a branching vector that is at least as good as (2,1), whose root is smaller than 1.619. This implies that \alg{ALG2} runs in time $O^*(1.619^W)$.}

\subsection{An Algorithm for {\sc $k$-WVC}}

Finally, we modify \alg{ALG1} and \alg{ALG2} in order to obtain an algorithm, \alg{ALG3}, for {\sc $k$-WVC} (defined in Section \ref{section:technique}). Recall that, in this variant, we have a {\em special} size parameter $k$ {\em and} a weight parameter $W$. If there is a solution of size at most $k$ {\em and} weight at most $W$, we need to find a solution of weight at most $W$; otherwise, we need to find a solution of weight at most $W$ or return NIL. Therefore, we can update Rule 1 to return NIL if $\min\{W,k\} < 0$, while Rule 2 remains correct. Thus, we have the following reduction rules.

\setcounter{reducerule}{0}
\begin{reducerule}
{\normalfont [$\min\{W,k\}<0$]
Return NIL.}
\end{reducerule}

\begin{reducerule}
{\normalfont [$E=\emptyset$]
Return $\emptyset$.}
\end{reducerule}

{\noindent Now, by the arguments given for Rule 3 in \alg{ALG1} and Rule 3 in \alg{ALG2}, in terms of {\em both} weight and size, it is always at least as good, when considering a leaf whose weight is larger or equal to the weight of its neighbor, to choose the neighbor. This leads to the following reduction rule.}

\begin{reducerule}
{\normalfont [There is a leaf $v$ in $G$ such that $w(v)\geq w(u)$, where $u$ is the neighbor of $v$]
Return \alg{ALG3}$(G[V\setminus\{v,u\}],w,W\!-w(u),k-1)$, along with $u$.}
\end{reducerule}

{\noindent Next, by modifying Rule 4 of \alg{ALG1} and Rule 4  \alg{ALG2}, we can obtain a running time bounded by $O^*(1.619^k)$ (this modification, for completeness, is given below). However, suppose that we have devised rules that, if applied when $G$ does not contain leaves, have branching vectors whose roots are smaller than 1.619.\footnote{Indeed, in Section \ref{section:wvc1}, we show how to obtain such rules, assuming that we handle leaves by applying a reduction rule that is ``better'' than the one in this appendix.} We now show how, relying on our flexible use of the parameter $k$, we can apply a reduction rule that removes the remaining leaves from $G$ (i.e., after this rule, the graph $G$ does not contain leaves, and we can apply rules that rely on this assumption). To this end, consider a leaf, $v$, whose weight is smaller than the weight of its neighbor, $u$. As argued after Rule 3 in the previous subsection, it is not clear whether we should $v$ or $u$. We can sidestep this problem by manipulating the weight of $u$, as performed in the next rule, whose correctness is explained below.}

\begin{reducerule}
{\normalfont [There is a leaf $v$ in $G$ such that $w(v) < w(u)$, where $u$ is the neighbor of $v$]
Let the weight function $w'$ be defined as $w$, except for $w'(u)=w(u)-w(v)$. Return \alg{ALG3}($G[V\setminus \{v\}],w',W\!-w(v),k$), along with $v$ iff $u$ is not in the returned result.}
\end{reducerule}

{\noindent Consider the instance ($G[V\setminus \{v\}],w',W\!-w(v),k$), which appears in this rule. In this instance, on the one hand, choosing $u$ reduces $W\!-w(v)$ to $W\!-w(v)-w'(u)=W\!-w(u)$ and $k$ to $k-1$, which has the same effect as choosing $u$ in the original instance (i.e., in ($G,w,W,k$)). On the other hand,  {\em in terms of weight}, not choosing $u$ has the same effect as not choosing $u$ in the original instance: $W$ is reduced to $W\!-w(v)$ in ($G[V\setminus \{v\}],w',W\!-w(v),k$), and $v$ is added to the solution (since it is necessary to cover the edge $\{v,u\}$). Yet, {\em in terms of size}, not choosing $u$ has {\em almost} the same effect as not choosing $u$ in the original instance: the difference lies in the fact that we do not decrease $k$ by 1 (although $v$ is added to the solution). However, our flexible use of the parameter $k$ allows us to decrease its value by less than necessary: we may compute a vertex cover whose size is larger than $k$ (since we not decrease $k$ by 1), but we may not compute a vertex cover of weight larger than $W$. We could not simply call \alg{ALG3} with $k-1$, since then choosing $u$ overall decreases $k$ by 2, which is more than required (thus we may miss solutions by reaching Rule 1 too soon). Note that $w'(u)$ is positive, but might be smaller than $1$, which does not effect the correctness of the algorithm (in particular, the branching vector below relies on the change in the parameter $k$, rather than $W$).}

For the sake of completeness, we give below a straightforward adaptation of Rule 4 of \alg{ALG1} and Rule 4 of \alg{ALG2}, whose branching vector has a root smaller than 1.619 (with respect to $k$).

\begin{branchrule}
{\normalfont [Remaining case] Let $v$ be a vertex of maximum~degree~in~$G$.
\begin{enumerate}
\item If the result of \alg{ALG3}($G[V\setminus N(v)],w,W\!-w(N(v)),k-|N(v)|$) is not NIL: Return it along with~$N(v)$.
\item Else: Return \alg{ALG3}($G[V\setminus \{v\}],w,W\!-w(v),k-1$), along with~$v$.
\end{enumerate}}
\end{branchrule}

\section{Previous Work (Omitted Details)}\label{app:priorwork}

FPT algorithms for {\sc VC} in general graph, {\sc VC} in graphs of bounded degree 3, and {\sc IOB} are given in \cite{vc1993,vc1995,vc1998,vc1999a,vc1999b,vc2001,wvc2003,vc2005,vc2007,vc2010}, \cite{vc2001,3vc2000,3vc2005,3vc2009,3vc2010,3vc2013} and \cite{kISP24klogk,kIOB2klogk,kIOB49k,thesis11,kIOB16k,kISP8k,kISPbounddeg,ipec13,corrrepresentative}, respectively.

\renewcommand{\refname}{References in Appendices}

\end{document}